\pgfplotsset{axis lines=left, grid=major} 
\pgfplotsset{legend style={font=\small}}
\pgfplotsset{every axis/.append style={font=\small}}
\tikzset{new spy style/.style={spy scope={%
 magnification=5,
 size=1.25cm, 
 connect spies,
 every spy on node/.style={
   rectangle,
   draw,
   },
 every spy in node/.style={
   draw,
   rectangle,
   fill=gray!20,
   }
  }
 }
}
\newcolumntype{L}{>{\raggedright\arraybackslash}X} 
\newcolumntype{R}[1]{>{\raggedright}p{#1}} 
\newtheorem{theorem}{Theorem}[section]
\newtheorem{lemma}[theorem]{Lemma}
\newtheorem{proposition}[theorem]{Proposition}
\newtheorem{corollary}[theorem]{Corollary}
\theoremstyle{definition}
\newtheorem{definition}[theorem]{Definition}
\newtheorem{algo}[theorem]{Algorithm}
\theoremstyle{remark}
\newtheorem{example}[theorem]{Example}
\newtheorem{remark}[theorem]{Remark}
\numberwithin{equation}{section}
\newcommand{\rmnum}[1]{\romannumeral #1}
\newcommand{\Rmnum}[1]{\expandafter\@slowromancap\romannumeral #1@}
\setlist[description]{style=unboxed} 
\setlist[enumerate]{label=(\roman*)}
\newlength{\hfloatsep}
\definecolor{blue}{rgb}{0.1922,0.4196,0.5961}
\definecolor{green}{rgb}{0,0.7,0}
\definecolor{lightgray}{rgb}{0.8,0.8,0.8}
\newlength{\includewidth} 
\newcommand{\markdashed}{\scalebox{1.5}{-}\,\scalebox{1.5}{-}}   
\newcommand{\markline}{\scalebox{3.1}[1.5]{-}}           
\newcommand{\cmark}{\ding{51}}
\newcommand{\xmark}{\ding{55}}
\DeclareTextFontCommand{\defemph}{\bf\em}
\renewcommand*{\phi}{\varphi}
\renewcommand*{\epsilon}{\varepsilon}
\renewcommand*{\rho}{\varrho}
\providecommand*{\vv}[1]{\ensuremath{\boldsymbol{#1}}}	
\providecommand*{\mv}[1]{\ensuremath{\boldsymbol{#1}}}	
\providecommand*{\dd}{\ensuremath{\operatorname{d}}}	
\providecommand*{\transp}{\ensuremath{{^{\intercal}}}}	
\providecommand*{\sign}{\ensuremath{\operatorname{sign}}}
\providecommand*{\ddfrac}[2]{\ensuremath{\frac{\dd #1}{\dd #2}}} 
\providecommand*{\pfrac}[2]{\ensuremath{\frac{\partial #1}{\partial #2}}} 
\providecommand*{\pslash}[2]{\ensuremath{{\partial #1}/{\partial #2}}}  	
\providecommand*{\diverg}{\ensuremath{\operatorname{div}}} 
\providecommand*{\abs}[1]{\ensuremath{\left \lvert #1 \right \rvert}}	
\providecommand*{\jump}[1]{\ensuremath{\left\llbracket #1 \right\rrbracket}}	
\providecommand*{\mean}[1]{\ensuremath{\left\{#1\right\}}}	%
\newcommand{\setstyle}[1]{{\mathbb #1}}
\newcommand{\setR}{\setstyle{R}}
\newcommand{\setN}{\setstyle{N}}
\newcommand{\setS}{\setstyle{S}}
\newcommand*{  \U  }{\ensuremath{ \vv{U}		}}
\newcommand*{  \W  }{\ensuremath{ \vv{W}		}}
\newcommand*{  \s  }{\ensuremath{ \mathfrak{s}		}}
\newcommand*{  \pr }{\ensuremath{ \tilde{p}		}}
\newcommand*{\admisr}{\ensuremath{ \tilde{\mathcal{A}}  }}
\newcommand*{\admis }{\ensuremath{ \mathcal{A}          }}
\newcommand*{\liq}{   \ensuremath{ \text{\tiny liq}        }} 
\newcommand*{\vap}{   \ensuremath{ \text{\tiny vap}        }} 
\renewcommand*{\r}{   \ensuremath{ {\text{\tiny r}}       }} 
\renewcommand*{\l}{   \ensuremath{ {\text{\tiny l}}       }} 
\renewcommand*{\L}{   \ensuremath{ \text{\tiny L}          }} 
\newcommand*{\R}{     \ensuremath{ \text{\tiny R}          }} 
\newcommand*{\sat}{   \ensuremath{ \text{\tiny sat}        }}
\renewcommand*{\sc}{  \ensuremath{ \text{\tiny sc}         }}
\newcommand*{\se}{    \ensuremath{ \text{\tiny se}         }}
\newcommand*{\tmin}{  \ensuremath{ \text{\tiny min}        }} 
\newcommand*{\tmax}{  \ensuremath{ \text{\tiny max}        }} 
\renewcommand*{\c}{   \ensuremath{ \text{\tiny c}        }}
\newcommand*{\e}{   \ensuremath{ \text{\tiny e}        }}
\newcommand*{\surfcoeff}{  \ensuremath{\zeta^\ast}         }
\newcommand*{\surf}{       \ensuremath{\zeta}              }
\newcommand*{\surfset}{\ensuremath{\mathcal{Z}}} 
\newcommand*{\Rvel}{  \ensuremath{R}         }
\newcommand*{\Svel}{  \ensuremath{S}         }
\newcommand*{\Pvel}{  \ensuremath{P}         }
\newcommand*{\Evel}{    \ensuremath{E}       }
\newcommand*{\Kr}{    \ensuremath{\mathbb{K}}           }
\renewcommand*{\R}{   \ensuremath{ \text{\tiny R}        }}
\newcommand*{\cfl}{   \ensuremath{ \operatorname{CFL}}}
\newcommand{\intdomain}[1]{\int_{\Omega_\liq\cup\Omega_\vap} #1 \;\dd v}
\newcommand{\lrwave}[1]{${#1}^\surf_\R$}
\newcommand{\klwave}[1]{${#1}_\L$}
\newcommand{\krwave}[1]{${#1}_\R$}
\begin{document}

\title[On Riemann Solvers and Kinetic Relations]{On Riemann Solvers and Kinetic Relations for Isothermal Two-Phase Flows with Surface Tension}

\author{Christian Rohde and Christoph Zeiler}
\address{Institut f\"ur Angewandte Analysis und Numerische Simulation, \\Universit\"at Stuttgart, Pfaffenwaldring 57, 70569 Stuttgart, Germany}

\email{Christian.Rohde@mathematik.uni-stuttgart.de}

\email{Christoph.Zeiler@mathematik.uni-stuttgart.de}
\thanks{This research work is supported by the German Research Foundation (DFG) through the grant RO 2222/4-1.}


\keywords{Compressible Two-Phase Flow, Riemann Solvers, Nonclassical Shocks, Kinetic Relation, Bubble and Droplet Dynamics, Surface Tension}


\begin{abstract}
We consider a sharp-interface approach for the inviscid isothermal dynamics of compressible two-phase flow, that accounts for phase transition and surface tension effects. To fix the mass exchange and entropy dissipation rate across the interface kinetic relations are frequently used. The complete uni-directional dynamics can then be understood by solving  generalized two-phase Riemann problems. We present new well-posedness theorems for the Riemann problem and corresponding computable Riemann solvers, that cover quite general equations of state, metastable input data and curvature effects.   \\
The new Riemann solver is used to validate different kinetic relations on physically relevant problems including a comparison with experimental data. Riemann solvers are building blocks for many numerical schemes that are used to track interfaces in two-phase flow. It is shown that the new Riemann solver enables reliable and efficient computations for physical situations that could not be treated before. 
\end{abstract}

\maketitle


\section{Introduction}

The dynamics of an isothermal homogeneous fluid that can appear in either a liquid or a vapor phase 
is governed by the compressible Euler equations for density and velocity provided that viscosity and heat conduction 
effects are neglected.  In this framework it is natural to  consider a sharp interface approach for the 
phase boundary  which results in a mathematical model in the form of a free boundary value problem. 
Let $\Omega\subset\setR^d$ with 
$d\in\setN$ be an open,  bounded set. 
For any $t\in[0,\theta]$, $\theta>0$, we assume that $\Omega$ is portioned into 
the union of two open sets $\Omega_\vap(t)$, $\Omega_\liq(t)$, which contain the two bulk phases, and 
a hypersurface $\Gamma(t)$ -- the sharp interface --, that separates the two  spatial bulk sets. 
In the  spatial-temporal bulk sets  
$\Set{ (\vv{x},t)  \in  \Omega \times (0,\theta)| \vv{x} \in \Omega_\vap(t) \cup \Omega_\liq(t)}$
we have then  the hydromechanical system  
\begin{align}\label{eq:euler}
\begin{array}{rcccl}
  \rho_t &+& \diverg (\rho\, \vv v) &=&0,\\[1.5ex]
{(\rho\,\vv v)}_t
         & +
         & \diverg{(\rho\,\vv v \otimes \vv v + \pr(\rho)\,\mv{I} )}
         & =
         & \vv{0}.
    \end{array}
\end{align}
Here, $\rho=\rho(\vv{x},t)>0$ denotes the unknown density field and 
$\vv v=\vv v(\vv{x},t) = ( v_1(\vv{x},t),\cdots, v_d(\vv{x},t))^\transp \in \setR^d$ 
the unknown velocity field. The pressure $\pr=\pr(\rho)$ is a given scalar function 
and $\mv{I}\in \setR^{d\times d}$ the $d$-dimensional unit matrix.

Besides appropriate initial and boundary conditions it remains to provide coupling conditions at the free boundary $\Gamma(t)$.  
Let some $\vv{\xi}\in \Gamma(t)$ be given. 
We denote the speed of $\Gamma(t)$ in the normal direction  $\vv{n}=\vv{n}(\vv{\xi},t)\in \setS^{d-1}$ by $\sigma=\sigma(\vv{\xi},t)\in\setR$. 
Throughout the paper the  direction of the normal vector is always chosen, such that $\vv{n}$ points into the vapor domain $\Omega_\vap$. 
Across the interface the following $d+1$ trace conditions  are posed which represent the conservation of mass and the balance of momentum in presence of capillary surface forces
(see e.g.~\cite{Batchelor99}).
\begin{align}
  \jump{\rho\, (\vv v\cdot \vv{n} - \sigma) } &= 0, \label{eq:rh:rho}\\
  \jump{\rho\, (\vv v\cdot \vv{n} - \sigma)\,\vv{v}\cdot\vv{n} + \pr(\rho) } &= (d-1)\surfcoeff\,\kappa, \label{eq:rh:m}\\   
 \jump{ \vv{v} \cdot \vv{t}^l}         &=0   \qquad (l=1,\ldots,{d-1}).   \label{eq:rh:mt}     
\end{align}
Thereby, we use $\jump{a}:=a_\vap-a_\liq$ and $a_{\vap/\liq} :=\lim_{\epsilon\to 0, \epsilon>0} a(\vv{\xi} \pm \epsilon\,\vv{n})$ 
for some quantity $a$ defined in $\Omega_\vap(t) \cup \Omega_\liq(t)$. 
In \eqref{eq:rh:m} by $\kappa = \kappa(\vv{\xi},t)\in\setR$ we denote the mean curvature
of $\Gamma(t)$ associated with orientation given through 
the choice of the normal $\vv{n}$. 
The  surface tension coefficient $\surfcoeff\ge0$ is assumed to be constant, and $\vv{t}^1,\ldots,\vv{t}^{d-1} \in \setS^{d-1}$ denote a  complete set of vectors tangential to $\vv{n}$.\\
We apply the concept of entropy solutions and seek for functions $(\rho, \vv v)$ that satisfy the entropy condition
$ 
{E(\rho,\vv v)}_t +\diverg \left((E(\rho,\vv v) + \pr(\rho))\, \vv{v} \right)      
   \leq 0 
$ 
in the distributional sense in the bulk regions and 
\begin{align}\label{eq:Euler:entropyjump}
 -\sigma\, (\jump{E(\rho,\vv v)} + (d-1)\surfcoeff\,\kappa )+ \jump{(E(\rho,\vv v)+\pr(\rho))\,\vv{v}\cdot\vv{n}}  \leq 0
\end{align}
at the interface. Here, we used 
$E(\rho,\vv v) = \rho\, \psi(1/\rho) + 1/2\,\rho\,\abs{\vv v}^2$ 
and the Helmholtz free energy $\psi$ defined below in Definition~\ref{def:thermo}.
Note that \eqref{eq:Euler:entropyjump} accounts for surface tension.

Additionally to  the coupling conditions  \eqref{eq:rh:rho}, \eqref{eq:rh:m}, \eqref{eq:rh:mt}, \eqref{eq:Euler:entropyjump}  the mass transfer across the phase boundary has to be determined. In this paper 
we rely on so-called \defemph{kinetic relations} \cite{ABEKNO1990,TRU1993}. In the most  simple case this results in an additional algebraic jump condition across $\Gamma(t)$,  which may be summarized in 
\begin{align}\label{eq:Euler:kinrel}
K(\rho_\liq, \vv v_\liq,\rho_\vap,\vv v_\vap) =0.
\end{align}
A local well-posedness result for  the  free boundary value problem   \eqref{eq:euler}-\eqref{eq:Euler:kinrel}    with a  special kinetic relation (denoted in this paper  as $K_2$, see Section~\ref{sec:kinrelex})  has been recently proposed in \cite{Kabil2016}. 
Much more analytical knowledge can be derived  if we restrict ourselves to describe the  local one-dimensional evolution in the normal direction through some  $ \vv{\xi}\in \Gamma(t)$. 
Mathematically this leads to consider a 
\defemph{generalized Riemann problem}
for a mixed-type ensemble  of conservation  laws. Note that the local curvature $\kappa(\vv{\xi},t)$ enters as a source term in the jump relation for momentum. 
We will present the precise setting  and  the corresponding thermodynamical framework in Section~\ref{sec:model}. \\

Riemann problems for two-phase flows have been intensively studied in the last two decades (see \cite{LEF2002} for a  general theory,  \cite{HANTKE2013,GODSEG2006,HAT1986,COLPRI2003,LEFTHA2002,MUEVOS2006,COLCOR1999}  for specific examples and 
\cite{ROHZER2014,CHACOQENGROH2012,SchleperHLLP,HELSEG2006} for approximate Riemann solvers). 
However, even in the isothermal case  the theory is not yet complete. It is the first major purpose of this paper to present a solution theory for generalized Riemann problems and \defemph{computable Riemann
solvers},
such that physically more  realistic scenarios can be analyzed. In particular we will follow the   concept of  \defemph{monotone decreasing kinetic functions} from  \cite{COLPRI2003} 
and  generalize it   accordingly (see Theorem~\ref{theo:kinrel} for a well-posedness theorem and Algorithm~\ref{alg:kinrelsolver} for a the Riemann solver).  Let us point out already here that  not for any relevant 
kinetic relation the concept of monotone decreasing kinetic functions applies, such that Theorem~\ref{theo:kinrel} fails. Nevertheless a solution of the Riemann problem might exist,  and possibly can still 
be computed by Algorithm~\ref{alg:kinrelsolver}.  In contrast to  previous results from the  literature the new approach governs surface tension effects, allows for  so-called metastable input states and 
can be applied to a much larger class of fluids via a general form for the equation of state. Even tabularized equations can be used. Finally we note that 
the smoothness assumptions on $K$ in \eqref{eq:Euler:kinrel} are relaxed.  This allows to consider kinetic relations which  exhibit a typical threshold behavior for entropy release. \\

In the second and third part of the paper we present then several analytic and numerical results that can be achieved by the new Riemann solver. \\
First in Section~\ref{sec:kinrel} and Section~\ref{sec:kinrelex} we review 
physically relevant kinetic relations and analyze to what extent they can be treated by the theory of monotone decreasing functions. In particular
we can classify all of them according to their entropy dissipation rate. As a by-product   it turns out that  the classical   Liu entropy criterion can be understood as a limiting case 
for the kinetic relations \cite{LIU1974}.   
A central part of our work is the comparison of exact solutions of Riemann problems for  the selected  kinetic relations. Already this theoretical approach 
shows the limitations  of several suggestions from literature.\\   
To conclude Section~\ref{sec:kinrelex}, we validate the kinetic relations  
against  data from shock tube experiments in \cite{Simdodecane1999}. 
It turns out that the use of a  kinetic relation that has been derived by  density functional theory in \cite{klink2015analysis} gives excellent agreement 
with the measured data while other choices fail.\\ 
Besides the obvious interest to understand   Riemann problems  from the analytic point of view,
the Riemann problem is essential for all numerical methods that rely on some kind  of  interface tracking (see \cite{MERROH2007,DREROH2008,FJS2013,FechDGsubcell,FZ2015pre,JAEROHZER2012,Wiebe2016}).
The tracking approach uses any finite volume or discontinuous Galerkin method as  powerful tool to solve \eqref{eq:euler} numerically in the bulk sets. Across 
the interface it requires special numerical fluxes, that can be  computed from solving the generalized Riemann problem. We show in the final third part of this contribution  that  it is possible 
to perform reliable and efficient computations for a wide variety of scenarios with the new Riemann solver. In previous works  the range of applicability was limited to very special situations. 
Furthermore, the new exact solver enables us to validate a previously developed approximate Riemann solver \cite{ROHZER2014}, which is based on relaxation techniques.\\
The results of this paper rely mainly on the PhD thesis of Christoph Zeiler \cite{Z2015}.

\section{The two-phase Riemann problem}\label{sec:model}

\subsection{Preliminaries and two-phase thermodynamics}\label{sub:thermo}

We denote the specific volume by $\tau=1/\rho$ and we fix the thermodynamic framework in terms of $\tau$. We assume that the thermodynamic framework holds for the rest of the paper.
\begin{definition}\label{def:thermo}
  Let the numbers $\tau_\liq^\tmin$, $\tau_\liq^\tmax$, $\tau_\vap^\tmin$, $\surf^\tmin$, $\surf^\tmax \in\setR$ with 
  $0 < \tau_\liq^\tmin < \tau_\liq^\tmax < \tau_\vap^\tmin$, $\surf^\tmin<0<\surf^\tmax$ and functions
  $p\in\mathcal{C}^2\left(\admis_\liq\cup\admis_\vap,\setR\right)$,
  $\psi,\mu\in\mathcal{C}^3\left(\admis_\liq\cup\admis_\vap,\setR\right)$
  be given.
  The intervals $\admis_\liq=(\tau_\liq^\tmin, \tau_\liq^\tmax)$ and $\admis_\vap=(\tau_\vap^\tmin, \infty)$ define the \defemph{liquid phase} and the \defemph{vapor phase} and $\admis_\liq\cup\admis_\vap$ is called \defemph{admissible set of specific volume values}. 
  
  A triple  of functions $p$, $\psi$, $\mu$ with    
  \begin{align}
     &p(\tau) = -\psi'(\tau) \text{ and }    
  \mu(\tau)=\psi(\tau)+p(\tau)\,\tau   \label{eq:iso:freeenergy}  
  \end{align}  
    are called 
  \defemph{pressure},
  \defemph{specific Helmholtz free energy} and 
  \defemph{specific Gibbs free energy}, respectively.\\ 
  It is supposed that  they satisfy 
  for any $\surf\in \surfset:=(\surf^\tmin,\surf^\tmax)$ the conditions
  \begin{align}
    &p'<0    \text{ in } \admis_\liq\cup\admis_\vap, 		 \label{eq:H1} \\
    &p''>0   \text{ in } \admis_\liq\cup\admis_\vap,		 \label{eq:H2} \\
    &\exists\, \tau_\liq^\sat(\surf)\in \admis_\liq, \tau_\vap^\sat(\surf)\in \admis_\vap : 
    \left\{
    \begin{array}{ccc}
        p(\tau_\vap^\sat) -  p(\tau_\liq^\sat)  &=&  \surf, \\
      \mu(\tau_\vap^\sat) -\mu(\tau_\liq^\sat)  &=& 0,
    \end{array}
    \right.
    \label{eq:H3} \\ 
    & p(\tau) \to \infty  \text{ for } \tau\to\tau_\liq^\tmin,       \label{eq:H4} \\
    & \forall \tau_\liq\in\admis_\liq, \tau_\vap\in\admis_\vap: p'(\tau_\liq) < p'(\tau_\vap),   \label{eq:H5}\\
  %
  %
    & \lim_{R\to\infty}\int_{\tau_\vap^\tmin}^R  c(\tau) \dd \tau = \infty  \text{ with } c(\tau):= \sqrt{-p'(\tau)}.    \label{eq:H6} 
  \end{align} 
\end{definition}

\begin{figure}\centering
    \input{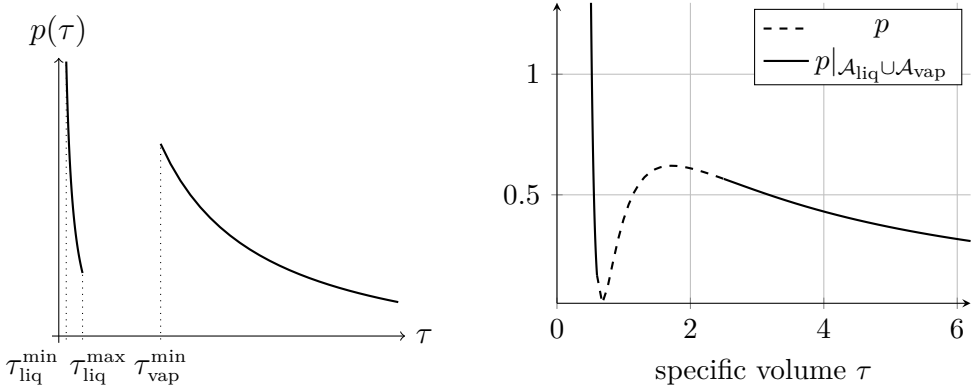}    
    \begin{tikzpicture}[scale=0.45]    
	\draw[->] (-0.2,0) -- (10.2,0) node[right] {$\tau$};
	\draw[->] (0,-0.2) -- (0,8.2) node[above] {$p(\tau)$};

	\draw[thick] plot[domain=0.22:0.7] ({\x}, {pliq(\x)-2} ); 
	\draw[thick] plot[domain=3:10] (\x, {pvap(\x)} );

	\coordinate (A) at (0.22,{pliq(0.22)-2}); 
        \coordinate (C) at (0.7 ,{pliq(0.7 )-2});      
        \coordinate (D) at (3   ,{pvap( 3  )});      
        \draw[dotted] (A|- 0,-0.1) node [below] {$\tau^\tmin_\liq$ $\tau^\tmax_\liq$} -- (A);
        \draw[dotted] (C|- 0,-0.1)  -- (C);
        \draw[dotted] (D|- 0,-0.1) node [below] {$\tau^\tmin_\vap$} -- (D);
    \end{tikzpicture}
    \hspace{\hfloatsep}
    \begin{tikzpicture}
      \begin{axis}[x post scale=0.8,y post scale=0.7,xmin=0, xlabel={specific volume $\tau$}]
 	\addplot[thick,dashed] plot[domain=0.6:2.5]  ({\x}, { pvdw(\x) } ) ;
 	
 	\addplot[thick] plot[domain=0.51:0.6]({\x}, { pvdw(\x) } ) ;
 	\addplot[thick] plot[domain=2.5:6.2]    ({\x}, { pvdw(\x) } ) ;

      \legend{ $p$,$p\vert_{\admis_\liq\cup\admis_\vap}$};
      \end{axis}
    \end{tikzpicture}    
  \caption{Left: prototypical example of a pressure function.
  Right: van der Waals pressure of Example~\ref{exp:vdw}.} \label{fig:iso:p} 
\end{figure}

Note that $p$ is monotone decreasing and convex in both phases, see Figure~\ref{fig:iso:p} (left) for some illustration. 
The interval $[\tau_\liq^\tmax, \tau_\vap^\tmin]$ is excluded from our studies as a set of unphysical states. In fact \eqref{eq:euler} becomes ill-posed for specific volumes in $[\tau_\liq^\tmax, \tau_\vap^\tmin]$ .    
The number $\surf \in \surfset$    is  arbitrary in \eqref{eq:H3}  but prescribed   through    $\surf= (d-1)\surfcoeff\,\kappa   $    in any application and thus linked to the given 
curvature $\kappa$. The limitation of $\surf$ to the interval $\surfset=(\surf^\tmin,\surf^\tmax)$ is due to the fact, that we can not expect that the saturation values in 
\eqref{eq:H3} exists  for any value $\surf \in \setR$.   With other words, our theory is restricted to interfaces with moderate curvature $\kappa$. 
The pair $\left(\tau_\liq^\sat(\surf), \tau_\vap^\sat(\surf)\right)\in \admis_\liq \times \admis_\vap$ in hypothesis \eqref{eq:H3} is called \defemph{pair of saturation states} and depends on $\surf\in\surfset$.
These states are attained in the \defemph{thermodynamic equilibrium}, i.e.,
\begin{align}\label{eq:equilibrium}
  p_\vap -  p_\liq  &=  (d-1)\surfcoeff\,\kappa, &
  \mu_\vap &= \mu_\liq &
  (\text{and } T_\liq=T_\vap).
\end{align}
%
%
%
%
The sets $(\tau_\liq^\sat, \tau_\liq^\tmax)$ and $(\tau_\vap^\tmin,\tau_\vap^\sat)$ are called \defemph{metastable liquid} and \defemph{metastable vapor phases}, while the sets $(\tau_\liq^\tmin, \tau_\liq^\sat]$, $[\tau_\vap^\sat, \infty)$ are called \defemph{stable} (liquid/vapor) phases.
Specific volume values, which belong to these sets, are called (liquid/vapor) metastable or stable states.
 
Hypotheses \eqref{eq:H2}, \eqref{eq:H4} and \eqref{eq:H5} limit the amount of  possible  wave configurations of the solution to Riemann problems.
In \eqref{eq:H4} it is assumed that there is a minimal molecular distance, where the liquid cannot be compressed further, and \eqref{eq:H5} is natural, since the sound speed in the liquid phase
of a fluid is usually much higher than in the vapor phase. Hypothesis \eqref{eq:H6} excludes the case of vacuum which is out of our interests.

Equations of state have to be determined, e.g.\ by experimental measurements.
However, for a simple model fluid, that may occur in a liquid and a vapor phase, we may consider the following explicit form, such that all conditions of Definition~\ref{def:thermo} are satisfied.
\begin{example}[Van der Waals equation of state] \label{exp:vdw}
 The \defemph{van der Waals equations of state} are given by the pressure function
 \begin{align} \label{eq:vdw:p}
   p(\tau) = \frac{R\,T}{\tau-\tau_\liq^\tmin}- \frac{a}{\tau^2}, 
 \end{align}
 with positive constants $a,\tau_\liq^\tmin,R$ for $\tau>\tau_\liq^\tmin$ and corresponding specific Helmholtz and Gibbs free energy functions according to \eqref{eq:iso:freeenergy}.
 The function is monotone decreasing for $T\geq T_\c$, where $T_\c=8\,a/(27\,R\,\tau_\liq^\tmin)$ is the \defemph{critical temperature}.
 Below the critical temperature there are two decreasing parts which determine the phases, see Figure~\ref{fig:iso:p} (right).
 The increasing part in between is called \defemph{elliptic} or \defemph{spinodal phase}.

 The van der Waals equations of state are defined for all $\tau \in(\tau_\liq^\tmin, \infty)$. They fulfill Definition~\ref{def:thermo} for temperature values below $T_\c$. 
 One has basically to constrain the admissible set $\admis_\liq\cup\admis_\vap$ to the convex parts of $p$, see Figure~\ref{fig:iso:p} (right) for an illustration. 
 Thus, the spinodal phase is a subset of the interval $[\tau_\liq^\tmax, \tau_\vap^\tmin]$.

 The parameters for the graphs in Figure~\ref{fig:iso:p} and e.g.\ Example~\ref{exp:kin:1} are
 \begin{align}\label{eq:vdw:param}
  a&=3, & \tau_\liq^\tmin&=\frac{1}{3}, & R&=\frac{8}{3} & &\text{and}& T=0.85.
 \end{align} 
 For these numbers, the critical temperature is $T_\c=1$.
 In order to fulfill the conditions above, we consider \eqref{eq:vdw:p} only for $\tau\in\admis_\liq\cup\admis_\vap$ with $\admis_\liq=(1/3,0.6)$ and $\admis_\vap=(2.5,\infty)$.
\end{example}

\subsection{Formulation of the two-phase Riemann problem}\label{sub:psystem}

The jump condition \eqref{eq:rh:mt} shows that  the tangential part of the velocity field is independent of the field in normal direction. 
Therefore it is reasonable  to consider a formally one-dimensional problem in normal direction to the interface $\Gamma(t)$. We pose the  Riemann  initial states 
\begin{align}\label{eq:ic}
  \begin{pmatrix} \rho\\ \vv v \cdot \vv n \end{pmatrix} (x,0)= 
  \begin{cases}
    \left(1/\tau_\L, v_\L\right)^\transp   &  \text{for }  x \leq0,\\
    \left(1/\tau_\R, v_\R\right)^\transp   &  \text{for }  x >0,
  \end{cases}
\end{align}
and $\tau_\L\in\admis_\liq$, $\tau_\R\in\admis_\vap$, $v_\L,v_\R\in\setR$, $x=(\vv x - \vv \xi)\cdot \vv{n}$. 

We keep in mind, that the original problem remains multidimensional in the sense that the local momentum balance \eqref{eq:rh:m} depends on surface tension. 
However, we solve the Riemann problem for given constant curvature $\kappa$,  such that the results can only be meaningful locally in time, but see Section~\ref{sec:numerics} on the use of Riemann solvers within numerical tracking schemes.

It is more convenient to switch to Lagrangian coordinates from now on. 
Using Lagrangian coordinates $(\xi,t)$ the task is to find specific volume and velocity fields $\tau=\tau(\xi,t)>0$ and $v=v(\xi,t)\in\setR$, such that 
\begin{align}\label{eq:psystem}
 \begin{pmatrix}
  \tau \\ v 
 \end{pmatrix}_t
 +
 \begin{pmatrix}
  -v \\ p(\tau)  
 \end{pmatrix}_\xi
 =
 \begin{pmatrix}
  0\\0
 \end{pmatrix}
\end{align}
holds in the bulk domain and 
\begin{align}\label{eq:psystemjump}
\s \jump{\tau} + \jump{v} &= 0, &
-\s\jump{v} + \jump{ p(\tau)}  &= \surf
\end{align}  
at the interface. 
Here, $p=p(\tau)$ is the pressure as in  Definition~\ref{def:thermo},
$\s$ the speed of the phase boundary in Lagrangian coordinates and 
$\surf:=(d-1)\,\surfcoeff\,\kappa$ the constant surface tension term.
The Lagrangian speed $\s$ is linked to the mass flux in Eulerian coordinates $j:=\rho_\liq(\vv v_\liq\cdot \vv n - \sigma)=\rho_\vap(\vv v_\vap\cdot \vv n - \sigma)$ via the formula
  \begin{align} \label{eq:j-s}
   \s = -j.
  \end{align}
We are in particular interested in  weak solutions $\vv U = (\tau, v)^\transp$ of \eqref{eq:psystem}  that satisfy besides \eqref{eq:psystemjump}  the entropy condition
$(\psi(\tau)+\frac{1}{2}v^2)_t + (p(\tau)\,v)_\xi \leq 0$ in the distributional sense in the bulk set and 
\begin{align}\label{eq:entropyjump}
  -\s \left( \jump{\psi(\tau) }+\jump{\tau} \mean{p(\tau)} + \surf\mean{\tau} \right) \leq 0,
\end{align}
at the interface. Note that \eqref{eq:entropyjump} is the interfacial entropy condition \eqref{eq:Euler:entropyjump} in Lagrangian coordinates.

System \eqref{eq:psystem} can be written for $\vv U = (\tau, v)^\transp$ in conservation form $\vv U_t + \vv f(\vv U)_\xi=\vv 0$ with $\vv f = (-v,  p(\tau))^\transp$.
The eigenvalues of $\vv f$ are
\begin{align}\label{eq:psystem:ew}
 \lambda_1(\tau) &= -c(\tau), &  \lambda_2(\tau) &= c(\tau),
\end{align}
where $c=c(\tau)$ is the sound speed in Lagrangian coordinates (see \eqref{eq:H6}).

\section{Two-phase Riemann solvers for monotone decreasing kinetic functions}\label{sec:solver}

Colombo \& Priuli introduced in \cite{COLPRI2003} exact solutions of the Riemann problem  for the two-phase p-system with homogeneous Rankine-Hugoniot conditions ($\surf \equiv 0$).
The  solutions are only given for  initial states in stable phases. 
However, the limitation to initial states in stable phases is inappropriate, e.g.\ for the interfacial flux computation (see Section~\ref{sec:numerics}).
Note also that static solutions correspond to  saturation states and appear at least locally in most scenarios. Thus, two-phase Riemann solvers have to handle initials states
in the vicinity of saturation states, which are stable and metastable states.

In this section, we extend the theory  in \cite{COLPRI2003} for the case with surface tension and for initial data in metastable phases. Theorem~\ref{theo:kinrel} presents the well-posedness results. 
We stress that this approach relies on kinetic relations that take the form of monotone decreasing kinetic functions (Definition~\ref{def:kinfun} below). \\
Subsection~\ref{sub:solveralgo} introduces the algorithm of the corresponding Riemann solver for given kinetic functions. 
Note that our implementation allows equations of state, that are provided by external thermodynamic libraries like \cite{CoolProp}. 
In this section  the  (monotone decreasing) kinetic functions are not specified. Physically relevant  
examples for such functions and a detailed study on their properties follow in 
Section~\ref{sec:kinrel}. 

\subsection{Solving the two-phase Riemann problem exactly}\label{sub:kinrelsolution}

Let now initial states $(\tau_\L, v_\L)^\transp \in\admis_\liq\times\setR$, $(\tau_\R, v_\R)^\transp\in\admis_\vap\times\setR$ and a constant surface tension term $\surf\in\surfset$ be given.
The required additional condition to attain unique solutions are kinetic functions.
Later on subsonic phase boundaries are constrained to those which are related to a kinetic function.

A discontinuous wave 
  \begin{align} \label{eq:discontinuous}
    \U(\xi,t) = \begin{cases}
              \U_\liq   & \text{for } \xi-\s\,t \leq0,\\
              \U_\vap  & \text{for } \xi-\s\,t >0
    \end{cases}
  \end{align}
of speed $\s\in\setR$, that connects a left state $\U_\liq=(\tau_\liq,v_\liq)^\transp\in\admis_\liq\times\setR$ and a
right state $\U_\vap=(\tau_\vap,v_\vap)^\transp\in\admis_\vap\times\setR$ is called \defemph{phase boundary} if it  satisfies the entropy condition \eqref{eq:entropyjump}.  
It follows from \eqref{eq:psystemjump} that phase boundaries propagate with speed
\begin{align}\label{eq:transitionspeed}
\s_\e(\tau_\liq  , \tau_\vap )=-\sqrt{\frac{ \surf -p(\tau_\vap )+ p(\tau_\liq  ) }{\tau_\vap -\tau_\liq  }} && \text{or} &&
\s_\c(\tau_\liq  , \tau_\vap )=+\sqrt{\frac{ \surf -p(\tau_\vap )+ p(\tau_\liq  ) }{\tau_\vap -\tau_\liq  }}.
\end{align}
The subscript $\e$ stands for evaporation and $\c$ for condensation. 
For $\tau_\liq\in\admis_\liq$ and $\tau_\vap\in\admis_\vap$, a phase boundary with negative speed is called an \defemph{evaporation wave} and a phase boundary with positive speed is called a \defemph{condensation wave}.

Furthermore, we have for evaporation waves $v_\vap = v_\liq  +\Pvel(\tau_\liq  ,\tau_\vap )$ and for condensation waves $v_\vap = v_\liq  -\Pvel(\tau_\liq  ,\tau_\vap )$, where
\begin{align}\label{eq:def:Pvel}
  \Pvel(\tau_\liq  ,\tau_\vap )&= \sign(\tau_\vap-\tau_\liq)\, \sqrt{(\tau_\vap -\tau_\liq  )\,\left(\surf- p(\tau_\vap )+ p(\tau_\liq  )\right)}.
\end{align} 
An evaporation wave (condensation wave) is called \defemph{subsonic} if there holds
\begin{align}
  \abs{\s_\e(\tau_\liq  , \tau_\vap )} < c(\tau_\vap ) && (\,\abs{\s_\c(\tau_\liq  , \tau_\vap )} < c(\tau_\vap )\,)   \label{eq:subsonic}
\end{align}
and \defemph{sonic} if \eqref{eq:subsonic} holds with equal sign.
Phase boundaries, that satisfy \eqref{eq:subsonic}, are undercompressive shock waves, cf.\ \cite{LEF2002}.
Note that these waves violate the \defemph{Lax entropy condition}
\begin{align}\label{eq:laxcondition}
    \lambda_1(\tau_\liq  ) > \s_\c(\tau_\liq  , \tau_\vap ) > \lambda_1(\tau_\vap ), && \lambda_2(\tau_\liq  ) > \s_\e(\tau_\liq  , \tau_\vap ) > \lambda_2(\tau_\vap ).
\end{align}  
  
It is well known, that self-similar solutions of two-phase Riemann problem are composed of rarefaction waves, bulk shock waves and phase boundaries. 
For brevity, let us introduce elementary waves. An \defemph{elementary wave} is either a rarefaction wave or a bulk shock wave of Lax type and satisfies
  \begin{align}\label{eq:def:Evel}
    v_\r &= \begin{cases}
               v_\l  +\Evel(\tau_\l  ,\tau_\r ) & \text{if } i=1,\\
               v_\l  -\Evel(\tau_\l  ,\tau_\r ) & \text{if } i=2
            \end{cases}
    &\text{for} &&
    \Evel(\tau_\l  ,\tau_\r )&= \begin{cases}
                           \Rvel(\tau_\l ,\tau_\r) & \text{if } i=1 \text{ and } \tau_\l   < \tau_\r \text{ or } i=2 \text{ and } \tau_\l   > \tau_\r,\\
                           \Svel(\tau_\l ,\tau_\r) & \text{else,}
                         \end{cases}
  \end{align}   
with
\begin{align}\label{eq:def:RvelSvel}
  \Rvel(\tau_\l ,\tau_\r)&:=\int\limits_{\tau_\l  }^{\tau_\r } \sqrt{-p'(\tau)} \dd \tau, &
  \Svel(\tau_\l ,\tau_\r)&:= \sign(\tau_\r-\tau_\l)\,\sqrt{-(\tau_\r -\tau_\l  )\,\left( p(\tau_\r )- p(\tau_\l  )\right)}.
\end{align}

\begin{definition}[Pair of monotone decreasing kinetic functions]\label{def:kinfun}
Let the fixed surface tension term $\surf\in\surfset$, corresponding equations of state from Definition~\ref{def:thermo}, 
numbers 
$\tau_\liq^\sc\in (\tau_\liq^\tmin, \tau_\liq^\sat)$,
$\tau_\vap^\se\in (\tau_\vap^\tmin, \infty)$ 
and two differentiable functions 
\begin{align*} 
  k_\c &: [\tau_\liq^\sc , \tau_\liq^\sat] \to \admis_\vap & &\text{and} & 
  k_\e &: [\tau_\vap^\sat, \tau_\vap^\se ] \to \admis_\liq  
\end{align*}
be given.

We call $(k_\c$, $k_\e)$ a \defemph{pair of monotone decreasing kinetic functions} 
if $k_\c'\leq0$, $k_\e'\leq0$ and the following conditions are satisfied
\begin{align}\label{eg:kinfun:entropy}
  \jump{\psi(\tau) }+\jump{\tau} \mean{p(\tau)} + \surf\mean{\tau} \,
  \begin{cases}
   \,\geq 0 &\text{for all } \tau_\liq\in[\tau_\liq^\sc , \tau_\liq^\sat], \tau_\vap=k_\c(\tau_\liq), \\
   \,\leq 0 &\text{for all } \tau_\vap\in[\tau_\vap^\sat, \tau_\vap^\se ], \tau_\liq=k_\e(\tau_\vap)
  \end{cases}
\end{align}
with  $\jump{\tau} = \tau_\vap-\tau_\liq$ and $\mean{\tau} = \frac{1}{2}(\tau_\liq+\tau_\vap)$ and
\begin{align*}
 k_\c(\tau_\liq^\sat) &= \tau_\vap^\sat, & k_\c(\tau_\liq^\sc) &= \tau_\vap^\sc, & \abs{\s_\c(\tau_\liq^\sc, \tau_\vap^\sc)} &= c(\tau_\vap^\sc), & & \\
 k_\e(\tau_\vap^\sat) &= \tau_\liq^\sat, & k_\e(\tau_\vap^\se) &= \tau_\liq^\se, & \abs{\s_\e(\tau_\liq^\se, \tau_\vap^\se)} &= c(\tau_\vap^\se), & k_\e'(\tau_\vap^\se) &=0.
\end{align*}
\end{definition}
Note that sonic phase boundaries are determined by the end states $\tau_\liq^\sc$, $\tau_\vap^\sc$ resp.  $\tau_\vap^\se= \tau_\liq^\se$. The superscripts $^\sc$, $^\se$ stand for \defemph{sonic condensation} and \defemph{sonic evaporation}, respectively. 

We will consider pairs of monotone decreasing kinetic functions in order to single out a unique two-phase Riemann solution.
Examples of such functions will be given in Section~\ref{sec:kinrel}.

\begin{definition}[Admissible phase boundary] \label{def:admissible}
 A phase boundary, that connects a left state $\U_\liq  =(\tau_\liq  ,  v_\liq  )^\transp\in\admis_\liq \times \setR$ and a right state $\U_\vap =(\tau_\vap ,  v_\vap )^\transp \in \admis_\vap \times \setR$ is called \defemph{admissible phase boundary} if and only if
 \begin{itemize}
  \item it is a sonic or a supersonic wave of Lax type \eqref{eq:laxcondition}, or
  \item it is a subsonic condensation wave that satisfies $k_\c(\tau_\liq) = \tau_\vap$, or
  \item it is a subsonic evaporation wave that satisfies $k_\e(\tau_\vap) = \tau_\liq$,
 \end{itemize}
 where $k_\c$ and $k_\e$ are a pair of monotone decreasing kinetic functions as in Definition~\ref{def:kinfun}.
\end{definition}
Note that with \eqref{eg:kinfun:entropy}, it follows that all admissible phase boundaries satisfy the entropy inequality \eqref{eq:entropyjump}. Furthermore, \defemph{thermodynamic equilibrium solutions} (discontinuous waves \eqref{eq:discontinuous} with $\U_\l=(\tau_\liq^\sat(\surf),0)^\transp$, $\U_\r=(\tau_\vap^\sat(\surf),0)^\transp$) are admissible subsonic phase boundaries.

\medskip
We seek for a self-similar entropy solutions of the two-phase Riemann problem, that contains exactly one admissible phase boundary. Furthermore, we prefer solutions with subsonic phase boundaries, whenever this is possible. 
We call such a solution \defemph{(admissible) two-phase Riemann solution}.

It is possible to define generalized Lax curves for these requirements.
The Lax curve $v^\ast = v_\L   + \mathcal{L}_1(\tau_\L  ,\tau^\ast)$ of the first family is given by Table~\ref{tab:KL1}. 
The structure changes depending on the arguments $\tau_\L$ and $\tau^\ast$. 
We enumerate the different wave patterns with the symbols of the first column in the table.
The subscript $_\L$ indicates that the wave connects the left initial state $(\tau_\L,  v_\L)^\transp$ to an intermediate state $(\tau^\ast,  v^\ast)^\transp$.

\begin{table}
\newcolumntype{s}{>{\columncolor{lightgray}} c}
\centering
\begin{tabularx}{\textwidth}{*{8}l}
  \toprule
  type& $\tau_\L  $		& $\tau^\ast$				& composition	& $\mathcal{L}_1(\tau_\L  ,\tau^\ast)$									& $\tau_\liq$  		& $\tau_\vap$	\\
  \cmidrule(r){1-1}\cmidrule(lr){2-2}\cmidrule(lr){3-3}			\cmidrule(lr){4-4}\cmidrule(lr){5-5} 											\cmidrule(lr){6-6}	\cmidrule(l){7-7} 
  \klwave{1}& $\admis_\liq$	& $\admis_\liq$				& 1E		& $\Evel(\tau_\L  ,\tau^\ast)$ 										& --			& --		\\ 
  \klwave{2}& $\admis_\liq$	& $[\tau_\vap^\sat, \tau_\vap^\se ]$	& 1E-KE		& $\Evel(\tau_\L  ,k_\e(\tau^\ast))+\Pvel(k_\e(\tau^\ast), \tau^\ast)$					& $k_\e(\tau^\ast)$	& $\tau^\ast$	\\ 
  \klwave{3}&$\admis_\liq$	& $(\tau_\vap^\se,\infty)$		& 1E-SE-1R	& $\Evel(\tau_\L  ,\tau_\liq^\se)+\Pvel(\tau_\liq^\se, \tau_\vap^\se)+ \Rvel(\tau_\vap^\se,\tau^\ast)$	& $\tau_\liq^\se$	& $\tau_\vap^\se$ \\ 
  \bottomrule  
\end{tabularx}
\caption{Definition of the map $\mathcal{L}_1 : \admis_\liq \times \admis_\liq\cup [\tau_\vap^\sat, \infty) \to \setR$, that determines the 
Lax curve $v^\ast = v_\L   + \mathcal{L}_1(\tau_\L  ,\tau^\ast)$ of the first family. 
The resulting (multiple) waves for left and right trace specific volume values $\tau_\L$ and $\tau^\ast$ are composed of the waves given in the fourth column (from left to right): 1E stands for 1-elementary wave, 1R for 1-rarefaction wave, KE stands for subsonic evaporation wave that is related to a kinetic function, SE for sonic evaporation.
The functions $\Evel$, $\Pvel$ and $\Rvel$ are given in \eqref{eq:def:Evel}, \eqref{eq:def:Pvel} and \eqref{eq:def:RvelSvel}, respectively. The interface states are given by the last two columns.
}
\label{tab:KL1}
\end{table}

Figure~\ref{fig:KL1} shows a wave of type~\klwave{2} and a wave of type~\klwave{3}, where we used 
\begin{align}\label{eq:p:fig}
 p(\tau)&= \begin{cases}
         2/\tau +1 &: \tau \in (0 , 2/3),\\
        20/\tau -1 &: \tau \in (3 , \infty)
       \end{cases}&
 &\text{with}&
 \tau_\liq^\sat &= 1/2, &
 \tau_\vap^\sat &= 10/3.  
\end{align}
The equation of state \eqref{eq:p:fig} was chosen in order to visualize wave patterns more clearly.

\begin{figure}\centering

	\input{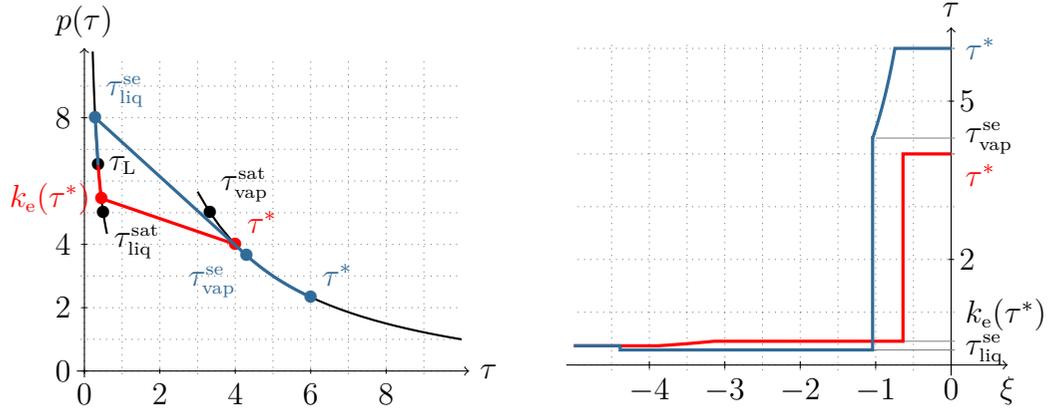}
	\pgfmathsetmacro{\taul}{4/11}

	\newcommand{\kred}{0.45}
	\pgfmathsetmacro{\kblue}{2/7}
	\newcommand{\sevap}{4.3}
	\newcommand{\taured}{4}
	\newcommand{\taublue}{6}
	\begin{tikzpicture}[xscale=0.5,yscale=0.42]

	\draw[thin,color=gray,dotted] (-0.1,-0.1) grid (9.9,9.9);

	\draw[->] (-0.2,0) -- (10.2,0) node[right] {$\tau$};
	\draw[->] (0,-0.2) -- (0,10.2) node[above] {$p(\tau)$};

	\foreach \x/\xtext in {0/0, 2/2, 4/4, 6/6, 8/8}
	    \draw[shift={(\x,0)}] (0pt,2pt) -- (0pt,-2pt) node[below] {$\xtext$};
	\foreach \y/\ytext in {0/0, 2/2, 4/4, 8/8}
	    \draw[shift={(0,\y)}] (2pt,0pt) -- (-2pt,0pt) node[left] {$\ytext$};

	\coordinate (Sl) at (\sl, 5);
	\coordinate (Sv) at (\sv, 5);
	\draw (Sl) node {\textbullet} node [below right] { $\tau_\liq^\sat$ };
	\draw (Sv) node {\textbullet} node [above right] { $\tau_\vap^\sat$ };

	\draw[thick] plot[domain=0.22:0.6] ({\x}, {pliq(\x)} ); 
	\draw[thick] plot[domain=3:10] (\x, {pvap(\x)} );

	\draw (\taul,  {pliq(\taul)} )  node {\textbullet}  node [right] { $\tau_\L$ };
	\draw [red]  (\kred,  {pliq(\kred)} )  node {\textbullet}  node [left] { $k_\e(\tau^\ast)$ };	
	\draw [blue] (\sevap, {pvap(\sevap)})  node {\textbullet}  node [below left] { $\tau_\vap^\se$ };
	\draw [blue] (\kblue, {pliq(\kblue)})  node {\textbullet}  node [above right] {  $\tau_\liq^\se$ };	

	\draw[very thick, color = red ] 
	  plot[domain=\taul:\kred] (\x, {pliq(\x)} ) -- 
	  (\taured, {pvap(\taured)} ) 
	  node {\textbullet} 
	  node [above right] {$\tau^\ast$};
	\draw[very thick, color = blue] 
	  (\taul, {pliq(\taul)} ) -- (\kblue, {pliq(\kblue)} ) --
	  plot[domain=\sevap:\taublue] (\x, {pvap(\x)} ) 
	  node {\textbullet} 
	  node [above right] {$\tau^\ast$};

	\end{tikzpicture}
	\newcommand{\stime}{2}
	\hspace{\hfloatsep}
	\begin{tikzpicture}[xscale=0.5,yscale=0.7]

	\draw[thin,color=gray,dotted] (-9.9,-0.1) grid (0.1,6.1);

	\draw[->] (-10.2,0) -- (1.5,0) node[below ] {$\xi$};
	\draw[->] (0,-0.2)  -- (0,6.4) node[above] {$\tau$};

	\foreach \x/\xtext in {-8/-4, -6/-3, -4/-2, -2/-1, 0/0 }
	    \draw[shift={(\x,0)}] (0pt,2pt) -- (0pt,-2pt) node[below] {$\xtext$};
	\foreach \y/\ytext in {2/2, 5/5}
	    \draw[shift={(0,\y)}] (2pt,0pt) -- (-2pt,0pt) node[right] {$\ytext$};

	\draw [very thin, gray] ( -6.2,   \kred) -- (0.1,   \kred)  node [black, above right] { $k_\e(\tau^\ast)$ };
	\draw [very thin, gray] ( -6.2,  \kblue) -- (0.1,  \kblue)  node [black,       right] { $\tau_\liq^\se$ };
	\draw [very thin, gray] ( -1.0, \taured) -- (0.1, \taured)  node [below right,red] { $\tau^\ast$ };
	\draw [very thin, gray] ( -2.0,  \sevap) -- (0.1,  \sevap)  node [black, right] { $\tau_\vap^\se$};
	\draw [very thin, gray] ( -0.3,\taublue) -- (0.1,\taublue)  node [right, blue] { $\tau^\ast$};

	\draw [very thick, color = red ]
	  ({-5*\stime},\taul) -- 
	  plot[domain=\taul:\kred] ( {-sliq(\x)*\stime}, \x ) -- 
	  ( {-ptspeed(\kred,\taured)*\stime} , \kred ) -- ( {-ptspeed(\kred,\taured)*\stime} , \taured ) -- (0, \taured);

	\draw [very thick, color = blue ]
	  ({-5*\stime},\taul) -- 
	  ( {-liqspeed(\taul,\kblue)*\stime} , \taul ) -- ( {-liqspeed(\taul,\kblue)*\stime} , \kblue ) --  
	  ( {-ptspeed(\kblue,\sevap)*\stime} , \kblue ) -- ( {-ptspeed(\kblue,\sevap)*\stime} , \sevap ) --  
	  plot[domain=\sevap:\taublue] ( {-svap(\x)*\stime}, \x ) --
	  (0,\taublue);      
	\end{tikzpicture}
	\caption{The sketch on the left hand side shows the graph of the pressure function \eqref{eq:p:fig}. The $\tau$-values, where the Lax curve of the first family alters its wave structure, are marked with a dot. The red curve corresponds to wave type~\klwave{2} and the blue curve to wave type~\klwave{3}. 
	The figure on the right hand side shows these waves at time $t=1$ in the $(\tau,\xi)$-plane. }      \label{fig:KL1}
\end{figure}

We summarize the main properties to a proposition. 
\begin{proposition}[Properties of the generalized Lax curve $\mathcal{L}_1$]  \label{prop:KL1}
  Let a left state $(\tau_\L, v_\L)^\transp \in \admis_\liq\times \setR$ and 
  the map $\mathcal{L}_1 : \admis_\liq \times \admis_\liq\cup [\tau_\vap^\sat, \infty) \to \setR$ of Table~\ref{tab:KL1} be given. 
  Then the following properties hold.
  \begin{enumerate}
    \item The map $\mathcal{L}_1$ is continuous.
    \item The map 
	\begin{align*}
	&\admis_\liq\cup [\tau_\vap^\sat, \infty) \to \setR , &
	&\tau^\ast \mapsto  v^\ast = v_\L   + \mathcal{L}_1(\tau_\L  ,\tau^\ast)
	\end{align*}
	is differentiable and strictly monotone increasing in $\admis_\liq$ and in $[\tau_\vap^\sat, \infty)$. 
  
     \item It holds that $\mathcal{L}_1(\tau_\L  ,\tau_\liq^\sat) = \mathcal{L}_1(\tau_\L  ,\tau_\vap^\sat) $.  
     
     \item All propagation speeds ($\s_e$, $\lambda_1$) are negative. For waves of type~\klwave{2} and type~\klwave{3} the phase boundary propagates faster than the elementary wave in the liquid phase and slower than the rarefaction wave connecting to $\tau^\ast$ in wave type~\klwave{3}.
     
     \item Evaporation waves are either subsonic or sonic.
     
     \item The speed of an evaporation wave is limited by the sound speed $-c(\tau_\vap^\se)$. 
  
  \end{enumerate}  
\end{proposition}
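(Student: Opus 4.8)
My plan is to establish the six assertions in the order (5)~$\to$~(4)~$\to$~(1),(3)~$\to$~(6)~$\to$~(2): the regularity of the composed curve rests on the speed ordering (4), which in turn needs the subsonic character (5) of the kinetic evaporation waves.

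\emph{Subsonicity and ordering (5), (4).} The only evaporation waves occurring in Table~\ref{tab:KL1} are the kinetic (KE) waves of type~\klwave{2}, joining $\tau_\liq=k_\e(\tau^\ast)$ to $\tau_\vap=\tau^\ast$ for $\tau^\ast\in[\tau_\vap^\sat,\tau_\vap^\se]$, and the single sonic (SE) wave of type~\klwave{3}; the latter is sonic by the condition $\abs{\s_\e(\tau_\liq^\se,\tau_\vap^\se)}=c(\tau_\vap^\se)$ in Definition~\ref{def:kinfun}. For the former I would set $h(\tau_\vap):=\surf-p(\tau_\vap)+p(k_\e(\tau_\vap))+p'(\tau_\vap)(\tau_\vap-k_\e(\tau_\vap))$, so that $\abs{\s_\e}<c(\tau_\vap)$ is equivalent to $h<0$; by \eqref{eq:H3} and the sonic condition one has $h(\tau_\vap^\se)=0$, and a short computation reduces $h'$ to $k_\e'(\tau_\vap)(p'(k_\e(\tau_\vap))-p'(\tau_\vap))+p''(\tau_\vap)(\tau_\vap-k_\e(\tau_\vap))$, which is $>0$ by $k_\e'\le0$, \eqref{eq:H5}, \eqref{eq:H2}; hence $h<0$ on $(\tau_\vap^\sat,\tau_\vap^\se)$ and every KE wave with $\tau^\ast<\tau_\vap^\se$ is subsonic, giving (5). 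For (4), negativity of $\s_\e$ and of $\lambda_1=-c$ is immediate from \eqref{eq:transitionspeed} and \eqref{eq:H6}. For the wave ordering I would compare speeds directly: for a type~\klwave{2} wave, subsonicity gives $\s_\e>-c(\tau_\vap)$, while a liquid-side $1$-rarefaction has fastest speed $\lambda_1(\tau_\liq)=-c(\tau_\liq)$ and a liquid-side $1$-shock has speed $s<\lambda_1(\tau_\L)=-c(\tau_\L)$, and both $-c(\tau_\liq)<-c(\tau_\vap)$ and $-c(\tau_\L)<-c(\tau_\vap)$ hold by \eqref{eq:H5}; thus the phase boundary lies strictly to the right of the liquid wave. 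The same comparisons at the sonic endpoint, together with $c$ decreasing in each phase (from $p''>0$), give the type~\klwave{3} statements and show $\s_\e=-c(\tau_\vap^\se)$ coincides with the slowest speed of the trailing $1$-rarefaction.

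\emph{Continuity, the kink, and the speed bound (1), (3), (6).} Continuity holds inside each row of Table~\ref{tab:KL1} because $\Evel$ (continuous across $\tau_\l=\tau_\r$, as $\Rvel$ and $\Svel$ agree there), $\Pvel$, $\Rvel$ and $k_\e$ are continuous; at the transition $\tau^\ast=\tau_\vap^\se$ the type~\klwave{2} and type~\klwave{3} formulas agree since $k_\e(\tau_\vap^\se)=\tau_\liq^\se$ and $\Rvel(\tau_\vap^\se,\tau_\vap^\se)=0$, giving (1). Assertion (3) is exactly $\Pvel(\tau_\liq^\sat,\tau_\vap^\sat)=0$, which follows from $\surf-p(\tau_\vap^\sat)+p(\tau_\liq^\sat)=0$ (i.e.~\eqref{eq:H3}) and $k_\e(\tau_\vap^\sat)=\tau_\liq^\sat$. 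For (6) I would write the phase-boundary speed along the kinetic curve as $\s_\e^2=N/D$ with $N=\surf-p(\tau_\vap)+p(k_\e(\tau_\vap))$, $D=\tau_\vap-k_\e(\tau_\vap)$; then the sign of $\tfrac{d}{d\tau_\vap}(N/D)$ equals that of $N'-\s_\e^2 D' = (c(\tau_\vap)^2-\s_\e^2)+k_\e'(\s_\e^2-c(\tau_\liq)^2)$, which is $>0$ using subsonicity, \eqref{eq:H5} and $k_\e'\le0$. Hence $\abs{\s_\e}$ increases along $k_\e$, so it is bounded above by its right-endpoint value $c(\tau_\vap^\se)$.

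\emph{Differentiability and monotonicity (2), the main obstacle.} On $\admis_\liq$ the map is $\tau^\ast\mapsto v_\L+\Evel(\tau_\L,\tau^\ast)$, the classical first Lax curve of the $p$-system: $C^2$-gluing of the shock and rarefaction branches at $\tau^\ast=\tau_\L$ and strict monotonicity follow from $p''>0$, $p'<0$ (cf.~\cite{LEF2002}). On $[\tau_\vap^\sat,\infty)$ I would parametrize by $\tau^\ast=\tau_\vap$ and differentiate the type~\klwave{2} expression $g(\tau^\ast)=\Evel(\tau_\L,k_\e(\tau^\ast))+\Pvel(k_\e(\tau^\ast),\tau^\ast)$, getting $g'=[\partial_2\Evel(\tau_\L,\tau_\liq)+\partial_1\Pvel(\tau_\liq,\tau_\vap)]k_\e'(\tau^\ast)+\partial_2\Pvel(\tau_\liq,\tau_\vap)$. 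The two computational lemmas to establish from \eqref{eq:def:Pvel}, \eqref{eq:def:Evel}, \eqref{eq:def:RvelSvel} are $\partial_2\Pvel(\tau_\liq,\tau_\vap)\ge c(\tau_\vap)>0$ and $\partial_2\Evel(\tau_\L,\tau_\liq)+\partial_1\Pvel(\tau_\liq,\tau_\vap)\le0$; the second one reduces to $\tfrac12(\delta+\delta^{-1})\le\tfrac12(\beta+\beta^{-1})$ with $\beta=\abs{\s_\e}/c(\tau_\liq)\in(0,1)$ and $\delta=\abs{s}/c(\tau_\liq)\in(0,1]$ ($\delta=1$ for a liquid-side rarefaction, $\delta<1$ for a shock by the Lax condition), which holds since $\abs{\s_\e}\le\abs{s}$ --- precisely the speed ordering of (4). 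As $k_\e'\le0$ the bracket times $k_\e'$ is $\ge0$, so $g'>0$; at $\tau^\ast=\tau_\vap^\se$ the derivative equals $\partial_2\Pvel(\tau_\liq^\se,\tau_\vap^\se)=c(\tau_\vap^\se)$ because $k_\e'(\tau_\vap^\se)=0$ (this is exactly where that hypothesis of Definition~\ref{def:kinfun} enters), matching the type~\klwave{3} derivative $\partial_{\tau^\ast}\Rvel(\tau_\vap^\se,\tau^\ast)=c(\tau^\ast)$ evaluated at $\tau_\vap^\se$, which yields differentiability across the transition; on $(\tau_\vap^\se,\infty)$ monotonicity is again the classical rarefaction estimate. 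The step I expect to cost the most work is the lemma $\partial_2\Evel+\partial_1\Pvel\le0$ in the liquid-side shock case, where the speed comparison from (4) cannot be avoided; I would also note that $g'(\tau^\ast)\to+\infty$ as $\tau^\ast$ decreases to $\tau_\vap^\sat$, due to the square-root vanishing of $\Pvel$ at the saturation chord, so ``differentiable'' is to be read one-sidedly at that left endpoint.
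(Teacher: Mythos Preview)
Your proof is correct and follows the same architecture as the paper's: check continuity and differentiability at the single interior transition $\tau^\ast=\tau_\vap^\se$ using $k_\e'(\tau_\vap^\se)=0$ and the sonic condition, reduce monotonicity of the type~\klwave{2} branch to the sign of $k_\e'$, and read off (iii) from $\Pvel(\tau_\liq^\sat,\tau_\vap^\sat)=0$.

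Where you go beyond the paper is in items~(v), (vi), and the monotonicity part of~(ii). The paper dispatches (iv)--(vi) in two sentences, merely asserting that the evaporation speed lies in $[-c(\tau_\vap^\se),0]$ and that the liquid wave outruns the phase boundary ``due to the pressure assumptions''; likewise, the monotonicity of $\mathcal{L}_1$ along the kinetic branch is left as ``a short calculation\ldots\ since $k_\e'<0$''. You actually carry these out: your function $h$ and the sign analysis of $h'$ give a clean proof of strict subsonicity on $(\tau_\vap^\sat,\tau_\vap^\se)$; your computation of $\tfrac{d}{d\tau_\vap}\s_\e^2$ along the kinetic curve proves the speed bound (vi); and your reduction of $\partial_2\Evel+\partial_1\Pvel\le0$ to the inequality $\delta+\delta^{-1}\le\beta+\beta^{-1}$ with $\beta<\delta\le1$ makes explicit the ``short calculation'' the paper omits, and correctly identifies that it hinges on the speed ordering from~(iv). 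Your remark that the one-sided derivative at $\tau_\vap^\sat$ is $+\infty$ is also a useful caveat the paper does not spell out.
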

\begin{proof}
 (\rmnum{1}) By definition, the map $\mathcal{L}_1$ is piecewise continuous. It is readily checked with Table~\ref{tab:KL1}, that also the transition from one domain of definition to another is continuous.
 
 (\rmnum{2}) Note that $\mathcal{L}_1$ is piecewise smooth. The critical point is $\tau^\ast=\tau_\vap^\se$. A short calculation gives
 \begin{align*}
  &\lim_{\tau^\ast\to\tau_\vap^\se} \ddfrac{\Svel}{\tau^\ast} (\tau_\L,k_\e(\tau^\ast)) = 0, &
  &\lim_{\tau^\ast\to\tau_\vap^\se} \ddfrac{\Rvel}{\tau^\ast} (\tau_\L,k_\e(\tau^\ast)) = 0  \text{ with } k_\e'(\tau_\vap^\se)=0, \\
  &\lim_{\tau^\ast\to\tau_\vap^\se} \ddfrac{\Svel}{\tau^\ast} (\tau_\vap^\se,\tau^\ast) = c(\tau_\vap^\se), &
  &\lim_{\tau^\ast\to\tau_\vap^\se} \ddfrac{\Rvel}{\tau^\ast} (\tau_\vap^\se,\tau^\ast) = c(\tau_\vap^\se) \text{ and } \\
  &\lim_{\tau^\ast\to\tau_\vap^\se} \ddfrac{\Pvel}{\tau} (k_\e(\tau^\ast),\tau^\ast) = c(\tau_\vap^\se)  & 
  &\text{ with } k_\e'(\tau_\vap^\se)=0 
  \text{ and } \abs{\s_\e(\tau_\liq^\sc, \tau_\vap^\sc)} = c(\tau_\vap^\sc)
 \end{align*}
 for the functions $\Svel$, $\Rvel$ and $\Pvel$, from \eqref{eq:def:RvelSvel}, \eqref{eq:def:Pvel}. Thus, the derivatives of a wave of type~\klwave{2} and type~\klwave{3} coincide in $\tau_\vap^\se$. 
 The functions $\Svel$ and $\Rvel$ are strictly monotone increasing with respect to the second argument.
 A short calculation shows that $\mathcal{L}_1$ is strictly monotone increasing also for a wave of type~\klwave{2}, since $k_\e'<0$.

 (\rmnum{3}) The condition holds, since $\Pvel(\tau_\liq^\sat,\tau_\vap^\sat)=0$.

 (\rmnum{4})-(\rmnum{6}) By definition, all waves of the first family have non-positive propagation speeds. The speed of the evaporation wave is in 
 $[-c(\tau_\vap^\se),\, 0]$. Due to the pressure assumptions (Definition~\ref{def:thermo}), waves in the liquid phase propagate faster (in absolute values) than the vapor sound speed. The phase boundary in wave type~\klwave{3} is sonic and the vapor rarefaction wave is attached.
\end{proof}


The generalized Lax curve of the second family may contain a condensation wave.
Condensation waves change from subsonic to supersonic or vice versa in the point $\tau_\liq^\sc$.
The next lemmas introduce further points in $\admis_\liq\cup\admis_\vap$ where the solution changes its structure. The lemmas are a direct consequence of the pressure assumptions in Definition~\ref{def:thermo}.

The first lemma states that phase boundaries move slower than sound in the liquid phase.
That means in terms of the pressure function, that the slope of $p$ in an arbitrary $\tau_\liq\in\admis_\liq$ is steeper as the slope of the chord from $(\tau_\liq,p(\tau_\liq)+\surf)$ to  $(\tau_\vap,p(\tau_\vap))$, for any $\tau_\vap\in\admis_\vap$.
\begin{lemma}[Sound in the liquid travels faster than phase boundaries]\label{lm:H5:2}
  Let the pressure function $p : \admis_\liq\cup\admis_\vap \to \setR$ of Definition~\ref{def:thermo} be given.
  
  For all $\tau_\liq\in\admis_\liq$ and $\tau_\vap\in\admis_\vap$ it holds, that
  \begin{align*}
          p'(\tau_\liq) < \frac{p(\tau_\vap) -p(\tau_\liq) - \surf}{\tau_\vap - \tau_\liq},    
  \end{align*}
  or equivalently $c(\tau_\liq)> \abs{\s_{\e/\c}(\tau_\liq,\tau_\vap)}$.
\end{lemma}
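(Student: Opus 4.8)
The claim, after clearing the factor $\tau_\vap-\tau_\liq>0$ (indeed $\tau_\vap>\tau_\vap^\tmin>\tau_\liq^\tmax>\tau_\liq$), is equivalent to
\[
 \phi(\tau_\liq,\tau_\vap):=p(\tau_\vap)-p(\tau_\liq)-\surf-p'(\tau_\liq)\,(\tau_\vap-\tau_\liq)>0
\]
for all $\tau_\liq\in\admis_\liq$, $\tau_\vap\in\admis_\vap$; and the equivalent formulation $c(\tau_\liq)>\abs{\s_{\e/\c}(\tau_\liq,\tau_\vap)}$ then follows at once from $c(\tau_\liq)^2=-p'(\tau_\liq)$ and $\s_{\e/\c}(\tau_\liq,\tau_\vap)^2=(\surf-p(\tau_\vap)+p(\tau_\liq))/(\tau_\vap-\tau_\liq)$ (the inequality being trivial when this last quantity is negative, the right-hand side of the claim being then positive and the left-hand side negative). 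Geometrically, $\phi>0$ asserts that the tangent line to $p$ at $\tau_\liq$, raised by $\surf$, stays strictly above the vapor branch of $p$.

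The plan is to feed in the saturation identity $\surf=p(\tau_\vap^\sat)-p(\tau_\liq^\sat)$ from \eqref{eq:H3} and to split $\tau_\vap-\tau_\liq=(\tau_\vap-\tau_\vap^\sat)+(\tau_\vap^\sat-\tau_\liq^\sat)+(\tau_\liq^\sat-\tau_\liq)$, which turns $\phi$ into
\[
 \phi(\tau_\liq,\tau_\vap)=\int_{\tau_\vap^\sat}^{\tau_\vap}\!\bigl(p'(s)-p'(\tau_\liq)\bigr)\dd s+\int_{\tau_\liq}^{\tau_\liq^\sat}\!\bigl(p'(s)-p'(\tau_\liq)\bigr)\dd s+\bigl(-p'(\tau_\liq)\bigr)(\tau_\vap^\sat-\tau_\liq^\sat).
\]
Here the third summand is $>0$ by \eqref{eq:H1}; the second is $\ge0$ because \eqref{eq:H2} makes $p'$ strictly increasing on $\admis_\liq$, so $p'(s)-p'(\tau_\liq)$ has the sign of $s-\tau_\liq$; and the first is $\ge0$ whenever $\tau_\vap\ge\tau_\vap^\sat$, since $p'(s)>p'(\tau_\liq)$ on $\admis_\vap$ by \eqref{eq:H5}. (The same two monotonicity computations, $\partial_{\tau_\vap}\phi=p'(\tau_\vap)-p'(\tau_\liq)>0$ and $\partial_{\tau_\liq}\phi=-p''(\tau_\liq)(\tau_\vap-\tau_\liq)<0$, also explain why the critical configuration is that of maximally metastable data.)

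The one case requiring work is metastable vapor, $\tau_\vap<\tau_\vap^\sat$, where the first summand is negative; here its modulus equals $\int_{\tau_\vap}^{\tau_\vap^\sat}\bigl(p'(s)-p'(\tau_\liq)\bigr)\dd s=\bigl[p(\tau_\vap^\sat)-p(\tau_\vap)\bigr]+\bigl(-p'(\tau_\liq)\bigr)(\tau_\vap^\sat-\tau_\vap)$. Since $p$ is decreasing the bracket is negative, so this modulus is $<\bigl(-p'(\tau_\liq)\bigr)(\tau_\vap^\sat-\tau_\vap)$, which is in turn $<\bigl(-p'(\tau_\liq)\bigr)(\tau_\vap^\sat-\tau_\liq^\sat)$ because $\tau_\liq^\sat<\tau_\liq^\tmax<\tau_\vap^\tmin<\tau_\vap$; hence the negative first summand is strictly dominated by the positive third summand and $\phi>0$ follows in all cases. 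I expect this domination estimate — controlling the possibly negative metastable-vapor term by the ``saturation gap'' $\tau_\vap^\sat-\tau_\liq^\sat$ — to be essentially the only substantive step; the reduction to $\phi>0$, the sound-speed reformulation, and the signs of the other two summands are direct consequences of \eqref{eq:H1}, \eqref{eq:H2}, \eqref{eq:H3}, \eqref{eq:H5}.
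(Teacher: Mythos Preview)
Your proof is correct. The decomposition via the saturation identity $\surf=p(\tau_\vap^\sat)-p(\tau_\liq^\sat)$ cleanly isolates the only delicate case (metastable vapor), and your domination estimate there is valid: the modulus of the negative summand is indeed strictly bounded by $(-p'(\tau_\liq))(\tau_\vap^\sat-\tau_\liq^\sat)$ because $p(\tau_\vap^\sat)-p(\tau_\vap)<0$ and $\tau_\vap>\tau_\liq^\sat$.

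The paper takes a different, more geometric route. It works at the extreme liquid value $\tau_\liq^\tmax$ first: defining $f(\tau)=p'(\tau_\liq^\tmax)(\tau-\tau_\liq^\tmax)+p(\tau_\liq^\tmax)+\surf-p(\tau)$, it shows $f(\tau_\vap^\tmin)<0$ from \eqref{eq:H1}, \eqref{eq:H3} and $f'<0$ on $\admis_\vap$ from \eqref{eq:H5}, so the tangent line at $\tau_\liq^\tmax$ raised by $\surf$ lies strictly above the vapor branch everywhere. The passage to a general $\tau_\liq<\tau_\liq^\tmax$ is then a one-line convexity argument (\eqref{eq:H2}): since the tangent at $\tau_\liq$ lies below the tangent at $\tau_\liq^\tmax$ at the point $\tau_\liq^\tmax$ and has smaller slope, it stays below for all $\tau\ge\tau_\liq^\tmax$, in particular at $\tau_\vap$. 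Your approach treats all $\tau_\liq$ simultaneously and trades that tangent-line reduction for an explicit integral splitting; it is arguably more self-contained (no limiting values at the open endpoints $\tau_\liq^\tmax,\tau_\vap^\tmin$ are invoked), while the paper's approach makes the ``worst case'' structure more visible.
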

\begin{proof}
  Consider first the case $\tau_\liq=\tau_\liq^\sat$. Define 
  $f(\tau):=    p'(\tau_\liq^\tmax) \, (\tau - \tau_\liq^\tmax) + p(\tau_\liq^\tmax) + \surf -p(\tau)$. 
  Due to \eqref{eq:H1} and \eqref{eq:H3} $f(\tau_\vap^\tmin)<0$ holds, and due to \eqref{eq:H5} we have $f'(\tau_\vap^\tmin)<0$.
  With \eqref{eq:H2} it follows 
  \begin{align*}
    \begin{matrix}
     p(\tau_\vap) &>&  p(\tau_\liq^\tmax) + \surf &+& p'(\tau_\liq^\tmax) (\tau_\vap - \tau_\liq^\tmax) & & \\
		  &=&  p(\tau_\liq^\tmax) + \surf &+& p'(\tau_\liq^\tmax) (\tau_\liq - \tau_\liq^\tmax) &+& p'(\tau_\liq^\tmax) (\tau_\vap - \tau_\liq) \\
		  &>&  p(\tau_\liq      ) + \surf & &                                                   &+& p'(\tau_\liq      ) (\tau_\vap - \tau_\liq). 
    \end{matrix}
  \end{align*}
\end{proof}

The subsequent lemmas introduce values $\hat{\tau}$, $\check{\tau}$ and a function $g_s$. The value  $\hat{\tau}$ is such that the pressure function has the same slope in $\tau_\R$ as the chord from $(\hat{\tau},p(\hat{\tau})+\surf)$ to  $(\tau_\R,p(\tau_R))$, see Figure~\ref{fig:KL2} (left) for illustration.
The value $\check{\tau}$ is such that the points $(\check{\tau},p(\check{\tau})+\surf)$, $(\tau_\vap^\sat,p(\tau_\vap^\sat))$, $(\tau_\R,p(\tau_\R))$ lie on one straight line. 
The function $g_s$ is determined such that the pressure function has the same slope in $g_s(\tau)$ as the chord from $(\tau,p(\tau)+\surf)$ to  $(g_s(\tau),p(g_s(\tau)))$, see Figure~\ref{fig:KL2} (left).

\begin{figure}\centering

	\input{tikzconstants}
	\newcommand{\taur}{4}
	\newcommand{\tauh}{0.26}
	
	\newcommand{\taured}{0.4}
	\newcommand{\kred}{5.5}
	
	\newcommand{\taublue}{0.48}
	\newcommand{\kblue}{3.5}
	
	\newcommand{\scvap}{6}	
	\newcommand{\scliq}{0.45}
	
	\begin{tikzpicture}[xscale=0.7,yscale=0.5]

	\draw[thin,color=gray,dotted] (-0.1,0.9) grid (7.9,9.9);

	\draw[->] (-0.2,1) -- (8.2,1) node[right] {$\tau$};
	\draw[->] (0,0.8) -- (0,10.2) node[left] {$p(\tau)$};

	\foreach \x/\xtext in {0/0, 2/2, 4/4, 6/6, 8/8}
	    \draw[shift={(\x,1)}] (0pt,2pt) -- (0pt,-2pt) node[below] {$\xtext$};
	\foreach \y/\ytext in {1/1, 3/3, 8/8}
	    \draw[shift={(0,\y)}] (2pt,0pt) -- (-2pt,0pt) node[left] {$\ytext$};

	\draw[thick] plot[domain=0.22:0.6] ({\x}, {pliq(\x)} ); 
	\draw[thick] plot[domain=3:8] (\x, {pvap(\x)} );
	
	\draw (\sl, {pliq(\sl)} ) node {\textbullet} node [below left,xshift=-6pt] { $\tau_\liq^\sat$ } --
	      (\sv, {pvap(\sv)} ) node {\textbullet} node [above right] { $\tau_\vap^\sat$ };
	\draw (\scliq, {pliq(\scliq)} ) node {\textbullet} node [above left,xshift=-6pt] { $\tau_\liq^\sc$ } --
	      (\scvap, {pvap(\scvap)} ) node {\textbullet} node [below] { $\tau_\vap^\sc$ };
	\coordinate (R) at (\taur,{pvap(\taur)});
	\draw (R) node {\textbullet} node [right] { $\tau_\R$ } --
	     (\tauh,{pliq(\tauh)}) node {\textbullet} node [left] { $\hat{\tau}\,$ };;

	\draw[very thick, color = red ] 
	  plot[domain=\taur:\kred] (\x, {pvap(\x)} ) node {\textbullet} node [above right] {$g_s(\tau^\ast)$} -- 
	  (\taured,{pliq(\taured)} )
	  node {\textbullet} 
	  node [above right] {$\tau^\ast \,$};
	  
	\draw[very thick, color = blue] 
	  (R) --  (\kblue,{pvap(\kblue)} ) node {\textbullet} node [right] {$k_\c(\tau^\ast)$} -- 
	  (\taublue,{pliq(\taublue)} )
	  node {\textbullet} 
	  node [left,xshift=-6pt] {$\tau^\ast$};

	\end{tikzpicture}
	\newcommand{\stime}{6}
	\begin{tikzpicture}[xscale=0.5, yscale=0.7]

	\draw[thin,color=gray,dotted] (-0.1,-0.1) grid (9.9,6.5);

	\draw[->] (-0.2,0) -- (10.2,0) node[right] {$\xi$};
	\draw[->] (0,-0.2) -- (0,6.8) node[left] {$\tau$};

	\foreach \x/\xtext in {0/0, 2/\frac{1}{3}, 4/\frac{2}{3}, 6/1, 8/\frac{4}{3}}
	    \draw[shift={(\x,0)}] (0pt,2pt) -- (0pt,-2pt) node[below] {$\xtext$};
	\foreach \y/\ytext in { 2/2, 6/6}
	    \draw[shift={(0,\y)}] (2pt,0pt) -- (-2pt,0pt) node[left] {$\ytext$};
	
	\draw [very thin, gray] (10.2,    \taur) -- (-0.1,    \taur)  node [black, left] { $\tau_\R$ };
	\draw [very thin, gray] ( 5.5,   \kblue) -- (-0.1,   \kblue)  node [blue, below left] { $k_\c(\tau^\ast)$};
	\draw [very thin, gray] ( 5.0,    \kred) -- (-0.1,    \kred)  node [red, left] { $g_s(\tau^\ast)$};
	\draw [very thin, gray] ( 5.0,  \taured) -- (-0.1,  \taured)  node [black, below left, red ] { $\tau^\ast$ };
	\draw [very thin, gray] ( 2.0, \taublue) -- (-0.1, \taublue)  node [black, above left, blue] { $\tau^\ast$};

	\draw [very thick, color = blue  ]
	  (0,\taublue) -- 
	  ({ptspeed(\taublue,\kblue)*\stime},\taublue) -- 
	  ({ptspeed(\taublue,\kblue)*\stime} , \kblue ) -- 
	  ({vapspeed(\kblue,\taur)*\stime} , \kblue ) --
	  ({vapspeed(\kblue,\taur)*\stime} , \taur ) --
	  (10, \taur) ;

	\draw [very thick, color = red  ]
	  (0,\taured) -- 
	  ({ptspeed(\taured,\kred)*\stime},\taured) -- 
	  ({ptspeed(\taured,\kred)*\stime},\kred) -- 
	  plot[domain=\kred:\taur] ( {svap(\x)*\stime}, \x )  --
	  (10, \taur) ;
	
	\end{tikzpicture}
	\caption{Pressure function (left) and specific volume distribution (right), like Figure~\ref{fig:KL1}. 
	The red curve corresponds to wave type~\krwave{4} and the green curve to wave type~\krwave{3}. }   \label{fig:KL2}

\end{figure}

For ease of notation, we skip the dependencies on numbers that are constant for two-phase Riemann problems, i.e.\ $\tau_\L\in\admis_\liq$, $\tau_\R\in\admis_\vap$ and $\surf\in\surfset$. 
Recall that the pressure function and saturation states depend on the constant $\surf$, see Definition~\ref{def:thermo}

\begin{table}
\centering
\begin{tabularx}{\textwidth}{*{7}l}
  \toprule
  type& $\tau^\ast$				& $\tau_\R$				& composition			& $\mathcal{L}_2(\tau^\ast,\tau_\R)$					& $\tau_\liq$  		& $\tau_\vap$	\\
  \cmidrule(r){1-1}\cmidrule(lr){2-2}		\cmidrule(lr){3-3}			\cmidrule(lr){4-4}		\cmidrule(lr){5-5}							\cmidrule(lr){6-6}	\cmidrule(l){7-7}
  \krwave{1}& $\admis_\vap$			& $\admis_\vap$				& 2E				& $\Evel(\tau^\ast, \tau_\R)$ 						& --			& --		\\ 
  \krwave{2}& $(\tau_\liq^\tmin, \hat{\tau}]$	& $(\tau_\vap^\tmin,\tau_\vap^\sc]$	& LC				& $\Pvel(\tau^\ast, \tau_\R)$						& $\tau^\ast$		& $\tau_\R$	\\ 
  \krwave{3}& $(\hat{\tau}, \tau_\liq^\sc)$	& $(\tau_\vap^\tmin,\tau_\vap^\sc]$	& SC-2R				& $\Pvel(\tau^\ast, g_s(\tau^\ast))+\Rvel(g_s(\tau^\ast), \tau_\R)$ 	& $\tau^\ast$		& $g_s(\tau^\ast)$\\
  \krwave{4}& $[\tau_\liq^\sc, \tau_\liq^\sat]$	& $(\tau_\vap^\tmin,\tau_\vap^\sc]$	& KC-2E				& $\Pvel(\tau^\ast, k_\c(\tau^\ast))+\Evel(k_\c(\tau^\ast), \tau_\R)$ 	& $\tau^\ast$		& $k_\c(\tau^\ast)$\\

  \krwave{5}& $(\tau_\liq^\tmin, \check{\tau}]$	& $(\tau_\vap^\sc, \infty)$		& LC				& $\Pvel(\tau^\ast, \tau_\R)$						& $\tau^\ast$		& $\tau_\R$	\\ 
  \krwave{6}& $(\check{\tau}, \tau_\liq^\sat]$	& $(\tau_\vap^\sc, \infty)$		& KC-2S				& $\Pvel(\tau^\ast, k_\c(\tau^\ast))+\Svel(k_\c(\tau^\ast), \tau_\R)$ 	& $\tau^\ast$		& $k_\c(\tau^\ast)$\\  
  \bottomrule
\end{tabularx}

  \caption{Definition of the map $\mathcal{L}_2 : (\tau_\liq^\tmin,\tau_\liq^\sat]\cup \admis_\vap\times \admis_\vap \to \setR$, that determines the
  Lax curve $v^\ast = v_\R + \mathcal{L}_2(\tau^\ast,\tau_\R)$ of the second family. 
  The resulting (multiple) waves for left and right trace specific volume values $\tau^\ast$ and $\tau_\R$ are composed of the waves given in the fourth column (from left to right): 2E stands for 2-elementary wave, SC for sonic condensation, LC for supersonic (Lax-type) condensation and KC for stands for a condensation wave that is related to a kinetic function. 
  The functions $\Evel$, $\Pvel$, $\Rvel$ and $\Svel$ are given in \eqref{eq:def:Evel}, \eqref{eq:def:Pvel} and \eqref{eq:def:RvelSvel}. The interface states are given by the last two columns.
  }  
  \label{tab:KL2}
\end{table}

\begin{lemma}[The values $\hat{\tau}$ and $\check{\tau}$] \label{lm:hat}
  For a fixed $\tau_\R \in(\tau_\vap^\tmin, \tau_\vap^\sc]$ there exists a unique $\hat{\tau} \in \admis_\liq$, such that
  \begin{align}   \label{eq:tauhat}
    p'( \tau_\R) = \frac{p(\tau_\R) - p(\hat{\tau}) -\surf }{ \tau_\R - \hat{\tau} },
  \end{align}
  or equivalently $\lambda_2( \tau_\R) = \s_\c(\hat{\tau}, \tau_\R)$ holds. Moreover, $\hat{\tau}\in (\tau_\liq^\tmin, \tau_\liq^\sc]$.
  
  On the other hand, for fixed $\tau_\R > \tau_\vap^\sc$, there exists a unique $\check{\tau} \in \admis_\liq$, such that
  \begin{align*}
    \frac{p(k_\c(\check{\tau})) - p(\check{\tau}) - \surf}{ k_\c(\check{\tau}) - \check{\tau} } = 
    \frac{p(\tau_\R) - p(\check{\tau}) - \surf}{ \tau_\R - \check{\tau} },
  \end{align*}
  or equivalently $\s_\c(\check{\tau}, k_\c(\check{\tau})) = \s_\c(\check{\tau}, \tau_\R)$ holds.
  Moreover, $\check{\tau} \in (\tau_\liq^\sc , \tau_\liq^\sat)$.
\end{lemma}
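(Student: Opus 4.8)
The plan is to treat the two assertions separately; in each case I reduce the (squared) equation to a scalar root‑finding problem and control the relevant function via hypotheses \eqref{eq:H2}, \eqref{eq:H4}, \eqref{eq:H5} and the sonic normalisations of Definition~\ref{def:kinfun}.

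For $\hat\tau$, fix $\tau_\R\in(\tau_\vap^\tmin,\tau_\vap^\sc]$ and set $h(\tau):=p'(\tau_\R)(\tau_\R-\tau)-p(\tau_\R)+p(\tau)+\surf$ on $\admis_\liq$, so that $h(\hat\tau)=0$ is precisely \eqref{eq:tauhat}; multiplying out and taking positive roots shows this is equivalent to $\lambda_2(\tau_\R)=\s_\c(\hat\tau,\tau_\R)$, and \eqref{eq:tauhat} forces $\surf-p(\tau_\R)+p(\hat\tau)=-p'(\tau_\R)(\tau_\R-\hat\tau)>0$ so that $\s_\c$ is well defined. Now $h'(\tau)=p'(\tau)-p'(\tau_\R)<0$ by \eqref{eq:H5}, so $h$ is strictly decreasing with at most one zero, while \eqref{eq:H4} gives $h(\tau)\to+\infty$ as $\tau\to\tau_\liq^\tmin$. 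To locate the zero left of $\tau_\liq^\sc$, introduce $N(\tau):=p'(\tau)(\tau-\tau_\liq^\sc)-p(\tau)+p(\tau_\liq^\sc)+\surf$ on $\admis_\vap$; then $N'(\tau)=p''(\tau)(\tau-\tau_\liq^\sc)>0$ by \eqref{eq:H2} (since $\tau>\tau_\liq^\sc$ on $\admis_\vap$), and the sonic relation $\abs{\s_\c(\tau_\liq^\sc,\tau_\vap^\sc)}=c(\tau_\vap^\sc)$ of Definition~\ref{def:kinfun} is exactly $N(\tau_\vap^\sc)=0$. Since by inspection $h(\tau_\liq^\sc)=N(\tau_\R)$, and $N$ is increasing with $N(\tau_\vap^\sc)=0$, we get $h(\tau_\liq^\sc)=N(\tau_\R)\le0$ for $\tau_\R\le\tau_\vap^\sc$. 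Hence $h$ has a unique zero $\hat\tau\in(\tau_\liq^\tmin,\tau_\liq^\sc]\subset\admis_\liq$, which is the first assertion.

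For $\check\tau$, fix $\tau_\R>\tau_\vap^\sc$ and put $A(\tau):=\s_\c(\tau,k_\c(\tau))^2=\frac{\surf-p(k_\c(\tau))+p(\tau)}{k_\c(\tau)-\tau}$ and $B(\tau):=\s_\c(\tau,\tau_\R)^2=\frac{\surf-p(\tau_\R)+p(\tau)}{\tau_\R-\tau}$ for $\tau\in[\tau_\liq^\sc,\tau_\liq^\sat]$; both are nonnegative there ($B>0$ because $p(\tau)\ge p(\tau_\liq^\sat)$ and $\surf+p(\tau_\liq^\sat)=p(\tau_\vap^\sat)>p(\tau_\R)$, using $\tau_\R>\tau_\vap^\sc>\tau_\vap^\sat$), so the claimed identity $\s_\c(\check\tau,k_\c(\check\tau))=\s_\c(\check\tau,\tau_\R)$ is equivalent to $G(\check\tau)=0$ for $G:=A-B$. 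At $\tau_\liq^\sat$ one has $A(\tau_\liq^\sat)=\s_\c(\tau_\liq^\sat,\tau_\vap^\sat)^2=0$ by \eqref{eq:H3} and $B(\tau_\liq^\sat)>0$, so $G(\tau_\liq^\sat)<0$. At $\tau_\liq^\sc$, with $\phi(\tau):=\frac{p(\tau)-p(\tau_\liq^\sc)-\surf}{\tau-\tau_\liq^\sc}$ one has $A(\tau_\liq^\sc)=c(\tau_\vap^\sc)^2=-\phi(\tau_\vap^\sc)$ and $B(\tau_\liq^\sc)=-\phi(\tau_\R)$, while $\phi'(\tau)=N(\tau)/(\tau-\tau_\liq^\sc)^2$ with the same $N$ as above, so $\phi$ has a strict minimum at $\tau_\vap^\sc$; hence $\phi(\tau_\R)>\phi(\tau_\vap^\sc)$ and $G(\tau_\liq^\sc)>0$. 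By continuity, $G$ has a zero $\check\tau\in(\tau_\liq^\sc,\tau_\liq^\sat)$.

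The remaining point, uniqueness of $\check\tau$, is the delicate one, because $A$ and $B$ are both decreasing so $G$ need not be monotone. I would show $G'(\tau_0)<0$ at every zero $\tau_0$ of $G$. With $\beta:=A(\tau_0)=B(\tau_0)>0$, a short computation gives $B'(\tau_0)=\frac{p'(\tau_0)+\beta}{\tau_\R-\tau_0}$ and $A'(\tau_0)=\frac{(p'(\tau_0)+\beta)-k_\c'(\tau_0)(p'(k_\c(\tau_0))+\beta)}{k_\c(\tau_0)-\tau_0}$. By Lemma~\ref{lm:H5:2}, $\beta<c(\tau_0)^2=-p'(\tau_0)$, so $p'(\tau_0)+\beta<0$; by the subsonicity of the kinetic condensation waves (Definition~\ref{def:admissible}, valid for $\tau_0>\tau_\liq^\sc$), $\beta<c(k_\c(\tau_0))^2=-p'(k_\c(\tau_0))$, so $p'(k_\c(\tau_0))+\beta<0$, whence with $k_\c'\le0$ the numerator of $A'(\tau_0)$ is $\le p'(\tau_0)+\beta<0$; finally $k_\c(\tau_0)<k_\c(\tau_\liq^\sc)=\tau_\vap^\sc<\tau_\R$ gives $0<k_\c(\tau_0)-\tau_0<\tau_\R-\tau_0$, so $A'(\tau_0)\le\frac{p'(\tau_0)+\beta}{k_\c(\tau_0)-\tau_0}<\frac{p'(\tau_0)+\beta}{\tau_\R-\tau_0}=B'(\tau_0)$. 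Thus the zeros of $G$ are isolated $+\to-$ sign changes, and together with $G(\tau_\liq^\sc)>0>G(\tau_\liq^\sat)$ this forces exactly one, lying in $(\tau_\liq^\sc,\tau_\liq^\sat)$. The main obstacle is precisely this derivative estimate — in particular the clean use of the vapor‑side subsonicity of the $k_\c$‑waves; everything about $\hat\tau$, and the existence of $\check\tau$, is a direct consequence of \eqref{eq:H2}, \eqref{eq:H3}, \eqref{eq:H4}, \eqref{eq:H5} and the sonic normalisations of Definition~\ref{def:kinfun}.
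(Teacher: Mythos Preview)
Your proof is correct. For $\hat\tau$ your argument is essentially the paper's: your $h(\tau)$ equals $(\tau_\R-\tau)\,\hat f(\tau;\tau_\R)$ where $\hat f$ is the paper's auxiliary function, so the zero sets agree; you get monotonicity from \eqref{eq:H5} directly, while the paper routes it through Lemma~\ref{lm:H5:2}, but the content is the same.

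For $\check\tau$ the routes genuinely diverge. The paper does \emph{not} compare $\s_\c(\tau,k_\c(\tau))^2$ with $\s_\c(\tau,\tau_\R)^2$; instead it encodes the collinearity of the three points $(\check\tau,p(\check\tau)+\surf)$, $(k_\c(\check\tau),p(k_\c(\check\tau)))$, $(\tau_\R,p(\tau_\R))$ via
\[
\check f(\tau)=\frac{p(k_\c(\tau))-p(\tau)-\surf}{k_\c(\tau)-\tau}-\frac{p(\tau_\R)-p(k_\c(\tau))}{\tau_\R-k_\c(\tau)},
\]
i.e.\ it compares the chord to $k_\c(\tau)$ with the chord \emph{from} $k_\c(\tau)$ to $\tau_\R$ rather than from $\tau$ to $\tau_\R$. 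After replacing $p'(\tau)$ by $p'(k_\c(\tau))$ via \eqref{eq:H5}, the paper obtains the \emph{global} strict inequality $\check f'>0$ on $[\tau_\liq^\sc,\tau_\liq^\sat]$. Your $G=A-B$ has the same zero set (both conditions express collinearity), but you prove the weaker local statement $G'(\tau_0)<0$ at every zero, which together with the endpoint signs is still enough for uniqueness. Your derivative estimate uses exactly the same ingredients the paper uses---Lemma~\ref{lm:H5:2} for $p'(\tau_0)+\beta<0$, subsonicity of $k_\c$-waves for $p'(k_\c(\tau_0))+\beta<0$, and $k_\c'\le 0$---so neither argument is intrinsically harder; the paper's choice of $\check f$ simply makes global monotonicity visible, whereas your squared-speed formulation trades that for a slightly more transparent link to the statement $\s_\c(\check\tau,k_\c(\check\tau))=\s_\c(\check\tau,\tau_\R)$. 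One cosmetic point: you write $k_\c(\tau_0)<k_\c(\tau_\liq^\sc)$, but Definition~\ref{def:kinfun} only gives $k_\c'\le 0$; the conclusion $k_\c(\tau_0)-\tau_0<\tau_\R-\tau_0$ still holds because $k_\c(\tau_0)\le\tau_\vap^\sc<\tau_\R$.
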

At the value $\hat{\tau}$, a supersonic condensation wave (see wave of type~\krwave{2} in Table~\ref{tab:KL2}) splits up into a sonic condensation wave and a 2-rarefaction wave.
At the value $\check{\tau}$, a supersonic condensation wave (see wave of type~\krwave{5}) breaks into a subsonic condensation wave and a 2-shock wave.
\begin{proof}[Proof of Lemma~\ref{lm:hat}]
 \newcommand{\ff}{\ensuremath{\hat{f}}}
 \newcommand{\hh}{\ensuremath{\check{f}}}
 Define the function
 \begin{align*}
  \ff(\tau; \tau_\R) &:= p'( \tau_\R) - \frac{p(\tau_\R) - p({\tau}) -\surf }{ \tau_\R - {\tau} }, &
  &\text{whereby}&
  \lim_{\tau\to\tau_\liq^\tmin} \ff(\tau; \tau_\R) &= \infty
 \end{align*}
 holds due to \eqref{eq:H4}. By definition of the points $\tau_\liq^\sc,\tau_\vap^\sc$ we find $\ff(\tau_\liq^\sc; \tau_\R) < \ff(\tau_\liq^\sc; \tau_\vap^\sc)=0$. The function $\ff$ is continuous, thus $\hat{\tau}\in (\tau_\liq^\tmin, \tau_\liq^\sc]$ exists where $ \ff(\hat{\tau}; \tau_\R)=0$.
 The derivation $\ff'(\tau; \tau_\R) =  \left( p'( \tau) - \frac{p(\tau_\R) - p({\tau}) -\surf }{ \tau_\R - {\tau} } \right) / (\tau_\R - {\tau})$ is positive due to $\tau \in \admis_\liq$ and Lemma~\ref{lm:H5:2}. Thus, there exists a unique $\hat\tau$.
 
 For the second part define
 \begin{align*}
  \hh(\tau) &:=   \frac{p(k_\c({\tau})) - p({\tau}) - \surf}{ k_\c({\tau}) - {\tau} } - \frac{p({\tau_\R}) - p(k_\c({\tau}))}{ {\tau_\R} - k_\c({\tau}) }
  && \text{for} && \tau\in[\tau_\liq^\sc,\tau_\liq^\sat].  
 \end{align*}
 Note that $k_\c(\tau_\liq^\sc)=\tau_\vap^\sc$ and $k_\c(\tau_\liq^\sat)=\tau_\vap^\sat$. With \eqref{eq:H2} there holds $\hh(\tau_\liq^\sc) <0$ and with \eqref{eq:H1} $\hh(\tau_\liq^\sat) >0$. The function $\hh$ is continuous such that there exists $\check{\tau} \in (\tau_\liq^\sc , \tau_\liq^\sat)$ with $\hh(\check{\tau})=0$. 
 For uniqueness, we show that $\hh$ is strictly monotone increasing. From \eqref{eq:H5}, it follows that $p'(\tau)<p'(k_\c(\tau))$ for $\tau\in[\tau_\liq^\sc , \tau_\liq^\sat]$. This is applied to $\hh'$ and yields
 \begin{align*}
   \hh'(\tau) > \frac{k_\c'({\tau})\!-\!1}{k_\c({\tau})\!-\!\tau}  \left(p'(k_\c({\tau})) -\frac{p(k_\c({\tau}))\!-\!p({\tau})\!-\!\surf}{ k_\c({\tau}) - {\tau} } \right)
              + \frac{k_\c'({\tau})  }{\tau_\R\!-\!k_\c({\tau})}  \left(p'(k_\c({\tau})) -\frac{p(  {\tau_\R} )\!-\!p(k_\c({\tau})))}{ {\tau_\R} - k_\c({\tau}) } \right)>0.
 \end{align*}
 The first bracket is zero for $\tau=\tau_\liq^\sc$ and negative otherwise. The second bracket is negative due to \eqref{eq:H2}. Thus, $\check{\tau}$ is uniquely determined.
\end{proof}

Waves of type~\lrwave{3} are composed of a sonic condensation wave and an attached 2-rarefaction wave, cf.\ Table~\ref{tab:KL2}.  
The following lemma is helpful to find the sonic vapor end state of the wave in terms of the liquid end state. 
\begin{lemma}[The function $g_s$]\label{lm:gs}
  For any given $\tau_\R \in(\tau_\vap^\tmin, \tau_\vap^\sc]$, let $\hat{\tau}\in (\tau_\liq^\tmin, \tau_\liq^\sc]$ as in Lemma~\ref{lm:hat} be given.
  
  There exists a continuous monotone increasing function $g_s: [\hat{\tau}, \tau_\liq^\sc] \to [\tau_\R, \tau_\vap^\sc]$, $\tau\mapsto g_s(\tau)$ such that
  \begin{align*}  
   p'( g_s(\tau)) = \frac{ p(g_s(\tau)) - p({\tau}) - \surf}{ g_s(\tau) - {\tau} },
  \end{align*}
  or equivalently $\lambda_2(g_s(\tau)) = \s_\c(\tau, g_s(\tau))$ holds.
\end{lemma}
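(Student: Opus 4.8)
The plan is to solve the defining relation pointwise by an implicit-function argument. For $\eta\in\admis_\vap$ and $\tau\in\admis_\liq$ I would introduce
$$
G(\eta,\tau):=p'(\eta)\,(\eta-\tau)-p(\eta)+p(\tau)+\surf ,
$$
which is of class $\mathcal C^1$ because $p\in\mathcal C^2$ by Definition~\ref{def:thermo}. Since $\lambda_2=c$ by \eqref{eq:psystem:ew}, $\s_\c$ is given by \eqref{eq:transitionspeed}, and $g_s(\tau)-\tau>0$ whenever $g_s(\tau)\in\admis_\vap$, $\tau\in\admis_\liq$, the two sides of the claimed identity $\lambda_2(g_s(\tau))=\s_\c(\tau,g_s(\tau))$ are nonnegative, so that identity is equivalent to the equality of their squares, which rearranges to exactly $G(g_s(\tau),\tau)=0$. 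The whole proof is then driven by the two monotonicity facts
$$
\partial_\eta G(\eta,\tau)=p''(\eta)\,(\eta-\tau)>0,\qquad
\partial_\tau G(\eta,\tau)=p'(\tau)-p'(\eta)<0 ,
$$
valid for all $\eta\in\admis_\vap$, $\tau\in\admis_\liq$: the first because $p''>0$ by \eqref{eq:H2} and $\eta>\tau_\vap^\tmin>\tau_\liq^\tmax>\tau$, the second by \eqref{eq:H5}. Thus $G(\cdot,\tau)$ is strictly increasing for each fixed $\tau$, while $G(\eta,\cdot)$ is strictly decreasing for each fixed $\eta$.

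Next I would identify the two corner values of $G$. The characterisation \eqref{eq:tauhat} of $\hat\tau$ in Lemma~\ref{lm:hat}, multiplied through by $\tau_\R-\hat\tau$, says precisely $G(\tau_\R,\hat\tau)=0$; and the sonic-condensation normalisation $\abs{\s_\c(\tau_\liq^\sc,\tau_\vap^\sc)}=c(\tau_\vap^\sc)$ from Definition~\ref{def:kinfun}, squared and multiplied by $\tau_\vap^\sc-\tau_\liq^\sc$, says $G(\tau_\vap^\sc,\tau_\liq^\sc)=0$. Now fix an arbitrary $\tau\in[\hat\tau,\tau_\liq^\sc]$. Using that $G(\eta,\cdot)$ is strictly decreasing together with $G(\tau_\R,\hat\tau)=0$ yields $G(\tau_\R,\tau)\le 0$ (with equality only for $\tau=\hat\tau$), and together with $G(\tau_\vap^\sc,\tau_\liq^\sc)=0$ yields $G(\tau_\vap^\sc,\tau)\ge 0$ (with equality only for $\tau=\tau_\liq^\sc$). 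Since $\eta\mapsto G(\eta,\tau)$ is continuous and strictly increasing on $\admis_\vap\supset[\tau_\R,\tau_\vap^\sc]$, the intermediate value theorem produces a unique $\eta=:g_s(\tau)\in[\tau_\R,\tau_\vap^\sc]\subset\admis_\vap$ with $G(g_s(\tau),\tau)=0$, and in particular $g_s(\hat\tau)=\tau_\R$, $g_s(\tau_\liq^\sc)=\tau_\vap^\sc$.

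Finally, continuity and monotonicity come from the implicit function theorem: as $G\in\mathcal C^1$ and $\partial_\eta G\neq 0$ on $\admis_\vap\times\admis_\liq$, the map $g_s$ is $\mathcal C^1$ with
$$
g_s'(\tau)=-\frac{\partial_\tau G}{\partial_\eta G}\bigl(g_s(\tau),\tau\bigr)
=\frac{p'(g_s(\tau))-p'(\tau)}{p''(g_s(\tau))\,\bigl(g_s(\tau)-\tau\bigr)}>0 ,
$$
so $g_s$ is strictly increasing and hence, being continuous with the boundary values above, maps $[\hat\tau,\tau_\liq^\sc]$ onto $[\tau_\R,\tau_\vap^\sc]$. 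I do not foresee a real obstacle; the only points demanding care are the clean translation of the normalisations in Lemma~\ref{lm:hat} and Definition~\ref{def:kinfun} into the vanishing of $G$ at the correct corners $(\tau_\R,\hat\tau)$ and $(\tau_\vap^\sc,\tau_\liq^\sc)$ — which is what pins the range down to the interval $[\tau_\R,\tau_\vap^\sc]$ — and checking that \eqref{eq:H2} and \eqref{eq:H5} really supply the sign of $\partial_\eta G$ and $\partial_\tau G$ on the whole product $\admis_\vap\times\admis_\liq$, not merely near the saturation states.
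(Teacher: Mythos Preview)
Your proof is correct and follows essentially the same approach as the paper: both introduce the same auxiliary function (your $G(\eta,\tau)$ is the paper's $F(\tau_\liq,\tau_\vap)$ with the arguments listed in the opposite order), compute the two partial derivatives using \eqref{eq:H2} and \eqref{eq:H5}, anchor the endpoints via $G(\tau_\R,\hat\tau)=0$ and $G(\tau_\vap^\sc,\tau_\liq^\sc)=0$, and invoke the implicit function theorem. Your version is in fact slightly more explicit than the paper's in pinning down the range $[\tau_\R,\tau_\vap^\sc]$ through the intermediate-value argument with the corner values.
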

Note that the domain of definition depends on $\hat{\tau}$ and thus on $\tau_\R$. The function $g_s$ does not depend on $\tau_\R$, however the restriction to $[\hat{\tau}, \tau_\liq^\sc]$ guarantees the existence of $g_s$. 
\begin{proof}[Proof of Lemma~\ref{lm:gs}]
 We apply the implicit function theorem to the function
 \begin{align*}
  F(\tau_\liq,\tau_\vap) &:=  p'( \tau_\vap) (\tau_\vap-\tau_\liq) - p(\tau_\vap) + p(\tau_\liq) + \surf.
 \end{align*}
 With \eqref{eq:tauhat}, it follows that $F(\hat{\tau},\tau_\R)=0$. 
 The local existence of the function $g_s$ follows from $\pslash{F}{\tau_\liq} =  - p'(\tau_\vap) + p'(\tau_\liq)<0$ with \eqref{eq:H5}. We can proceed with the latter argument until $\tau_\liq^\sc$ is reached, where $F(\tau_\liq^\sc,\tau_\vap^\sc)=0$ holds.
 
 With \eqref{eq:H2} it holds that $\pslash{F}{\tau_\vap} =   p''(\tau_\vap) (\tau_\vap - \tau_\liq)>0$.
 The monotonicity follows from $\ddfrac{F}{\tau}(\tau,g_s(\tau)) = \pslash{F}{\tau_\liq} + \pslash{F}{\tau_\vap}\,g_s'(\tau)=0$.
\end{proof}

\begin{figure}\centering

	\input{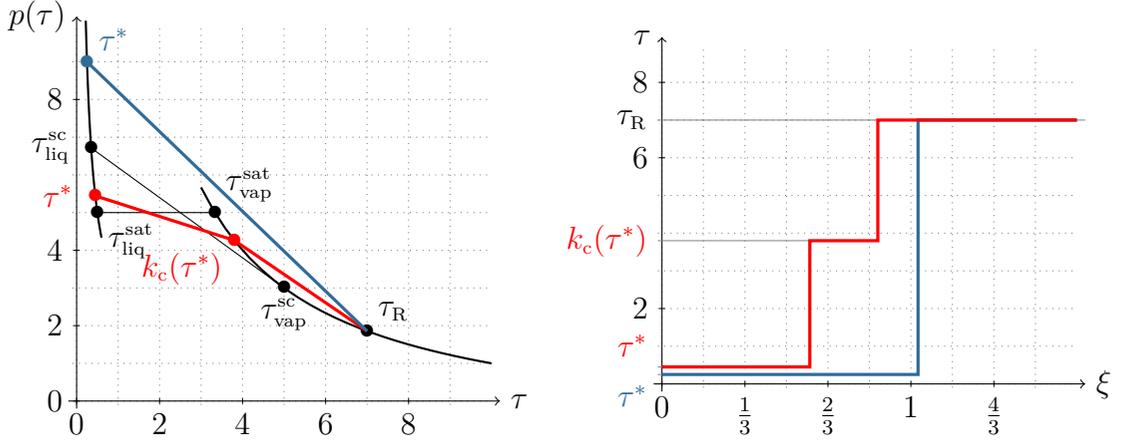}
	\newcommand{\taur}{7.0}
	
	\newcommand{\taured}{0.45}
	\newcommand{\kred}{3.8}
	
	\newcommand{\taublue}{.25}
	
	\newcommand{\scvap}{5}	
	\newcommand{\scliq}{0.35}
	
	\begin{tikzpicture}[xscale=0.55, yscale=0.5]

	\draw[thin,color=gray,dotted] (-0.1,-0.1) grid (9.9,9.9);

	\draw[->] (-0.2,0) -- (10.2,0) node[right] {$\tau$};
	\draw[->] (0,-0.2) -- (0,10.2) node[left] {$p(\tau)$};

	\foreach \x/\xtext in {0/0, 2/2, 4/4, 6/6, 8/8}
	    \draw[shift={(\x,0)}] (0pt,2pt) -- (0pt,-2pt) node[below] {$\xtext$};
	\foreach \y/\ytext in {0/0, 2/2, 4/4, 8/8}
	    \draw[shift={(0,\y)}] (2pt,0pt) -- (-2pt,0pt) node[left] {$\ytext$};

	\draw[thick] plot[domain=0.22:0.6] ({\x}, {pliq(\x)} ); 
	\draw[thick] plot[domain=3:10] (\x, {pvap(\x)} );
	
	\draw (\sl, {pliq(\sl)} ) node {\textbullet} node [below right] { $\tau_\liq^\sat$ } --
	      (\sv, {pvap(\sv)} ) node {\textbullet} node [above right] { $\tau_\vap^\sat$ };
	\draw (\scliq, {pliq(\scliq)} ) node {\textbullet} node [ left,xshift=-4pt] { $\tau_\liq^\sc$ } --
	      (\scvap, {pvap(\scvap)} ) node {\textbullet} node [below] { $\tau_\vap^\sc$ };
	\coordinate (R) at (\taur,{pvap(\taur)});
	\draw (R) node {\textbullet} node [above right] { $\tau_\R$ };

	\draw[very thick, color = red ] 
	  (R) -- (\kred,{pvap(\kred)} ) node {\textbullet} node [below left] {$k_\c(\tau^\ast)$} -- 
	  (\taured,{pliq(\taured)} )
	  node {\textbullet} 
	  node [ left,xshift=-4pt] {$\tau^\ast$};
	  
	\draw[very thick, color = blue] 
	  (R) -- 
	  (\taublue,{pliq(\taublue)} )
	  node {\textbullet} 
	  node [above right] {$\tau^\ast$};

	\end{tikzpicture}
	\newcommand{\stime}{6}
	\begin{tikzpicture}[xscale=0.55,yscale=0.5]

	\draw[thin,color=gray,dotted] (-0.1,-0.1) grid (9.9,8.9);

	\draw[->] (-0.2,0) -- (10.2,0) node[right] {$\xi$};
	\draw[->] (0,-0.2) -- (0,9.2) node[left] {$\tau$};

	\foreach \x/\xtext in {0/0, 2/\frac{1}{3}, 4/\frac{2}{3}, 6/1, 8/\frac{4}{3}}
	    \draw[shift={(\x,0)}] (0pt,2pt) -- (0pt,-2pt) node[below] {$\xtext$};
	\foreach \y/\ytext in { 2/2, 6/6, 8/8}
	    \draw[shift={(0,\y)}] (2pt,0pt) -- (-2pt,0pt) node[left] {$\ytext$};
	
	\draw [very thin, gray] (10.2,    \taur) -- (-0.1,    \taur)  node [black, left] { $\tau_\R$ };
	\draw [very thin, gray] ( 5.0,    \kred) -- (-0.1,    \kred)  node [red, left] { $k_\c(\tau^\ast)$};
	\draw [very thin, gray] ( 3.0,  \taured) -- (-0.1,  \taured)  node [black, above left, red ] { $\tau^\ast$ };
	\draw [very thin, gray] ( 6.0, \taublue) -- (-0.1, \taublue)  node [black, below left, blue] { $\tau^\ast$};

	\draw [very thick, color = blue  ]
	  (0,\taublue) -- 
	  ({ptspeed(\taublue,\taur)*\stime},\taublue) -- 
	  ({ptspeed(\taublue,\taur)*\stime} , \taur ) -- 
	  (10, \taur) ;

	\draw [very thick, color = red  ]
	  (0,\taured) -- 
	  ({ptspeed(\taured,\kred)*\stime},\taured) -- 
	  ({ptspeed(\taured,\kred)*\stime},\kred) -- 
	  ({vapspeed(\kred,\taur)*\stime} , \kred ) --
	  ({vapspeed(\kred,\taur)*\stime} , \taur ) --
	  (10, \taur) ;
	
	\end{tikzpicture}
	\caption{Pressure function (left) and specific volume distribution (right), like Figure~\ref{fig:KL1}.
	The red curve corresponds to wave type~\krwave{6} and the blue curve to wave type~\krwave{5}.}   \label{fig:KL2b}
\end{figure}

The Lax curves $v^\ast = v_\R + \mathcal{L}_2(\tau^\ast,\tau_\R)$ of the second family are given in Table~\ref{tab:KL2} and the main properties are summarized in the proposition below. Examples of wave type~\krwave{3} and wave type~\krwave{4} are shown in Figure~\ref{fig:KL2}, while Figure~\ref{fig:KL2b} shows an example of waves type~\krwave{5} and wave type~\krwave{6}. 
\begin{proposition}[Properties of the generalized Lax curve $\mathcal{L}_2$]\label{prop:KL2}
  Let a right state $(\tau_\R, v_\R)^\transp \in \admis_\vap\times \setR$ and 
  the map $\mathcal{L}_2 : (\tau_\liq^\tmin,\tau_\liq^\sat]\cup \admis_\vap\times \admis_\vap \to \setR$ of Table~\ref{tab:KL2} be given. 
  Then the following properties hold.
  \begin{enumerate}
    \item The map $\mathcal{L}_2$ is continuous.
    \item The map 
	\begin{align*}
	&(\tau_\liq^\tmin, \tau_\liq^\tmax)\cup \admis_\vap  \to \setR , &
	&\tau^\ast \mapsto  v^\ast = v_\R   + \mathcal{L}_2(\tau^\ast, \tau_\R)
	\end{align*}
	is differentiable and strictly monotone decreasing in $(\tau_\liq^\tmin, \tau_\liq^\tmax)$ and in $\admis_\vap$.
  
     \item It holds that $\mathcal{L}_2(\tau_\liq^\sat , \tau_\R) = \mathcal{L}_2(\tau_\vap^\sat , \tau_\R)$.
     
     \item All propagation speeds are positive. In wave \krwave{3}, \krwave{4} and \krwave{6}, the phase boundary propagates slower than the elementary wave in the vapor phase. 
     
     \item In wave \krwave{2} and wave \krwave{5} appear supersonic condensation waves. 
  
  \end{enumerate}  
\end{proposition}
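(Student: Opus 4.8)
The overall structure mirrors the proof of Proposition~\ref{prop:KL1}. I would split the work into the continuity statements (i), (iii), which are bookkeeping on the seams of Table~\ref{tab:KL2}; the differentiability and strict monotonicity (ii), which carries the real content; and the speed properties (iv), (v), which follow softly from Definition~\ref{def:thermo} and Lemmas~\ref{lm:H5:2}--\ref{lm:gs}. Throughout I fix $\tau_\L,\tau_\R,\surf$ and treat $v^\ast=v_\R+\mathcal{L}_2(\tau^\ast,\tau_\R)$ as a function of $\tau^\ast$. For (i), (iii): each branch of $\mathcal{L}_2$ is a finite sum of the continuous functions $\Evel,\Pvel,\Rvel,\Svel$ composed with $k_\c$ and $g_s$, hence continuous on the interior of its subdomain, so it suffices to check the transition points $\tau^\ast\in\{\hat\tau,\tau_\liq^\sc,\check\tau\}$ and the liquid/vapor junction $\tau_\liq^\sat\leftrightarrow\tau_\vap^\sat$. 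At $\hat\tau$ use $g_s(\hat\tau)=\tau_\R$ (Lemma~\ref{lm:gs}), which annihilates the $\Rvel$ term of type~\krwave{3}; at $\tau_\liq^\sc$ use $g_s(\tau_\liq^\sc)=k_\c(\tau_\liq^\sc)=\tau_\vap^\sc$ and, since $\tau_\R\le\tau_\vap^\sc$, $\Evel(\tau_\vap^\sc,\tau_\R)=\Rvel(\tau_\vap^\sc,\tau_\R)$, matching types~\krwave{3},~\krwave{4}; at $\check\tau$ the colinearity $\s_\c(\check\tau,k_\c(\check\tau))=\s_\c(\check\tau,\tau_\R)$ of Lemma~\ref{lm:hat} straightens the broken Rankine--Hugoniot polygon $\check\tau\to k_\c(\check\tau)\to\tau_\R$, whence a short square-root manipulation gives $\Pvel(\check\tau,k_\c(\check\tau))+\Svel(k_\c(\check\tau),\tau_\R)=\Pvel(\check\tau,\tau_\R)$, matching types~\krwave{5},~\krwave{6}. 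At the liquid/vapor junction, $k_\c(\tau_\liq^\sat)=\tau_\vap^\sat$ and $\Pvel(\tau_\liq^\sat,\tau_\vap^\sat)=0$ (because $p(\tau_\vap^\sat)-p(\tau_\liq^\sat)=\surf$ by \eqref{eq:H3}) reduce the type~\krwave{4}/\krwave{6} value at $\tau_\liq^\sat$ to $\Evel(\tau_\vap^\sat,\tau_\R)$, which is the type~\krwave{1} value at $\tau_\vap^\sat$; this yields both continuity there and claim (iii).

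For (ii), on the interior of each subdomain $v^\ast$ is a smooth composition --- using that $\Evel$ is $C^1$ across the shock/rarefaction transition within each characteristic family, as for the ordinary $p$-system, and that the elliptic interval $[\tau_\liq^\tmax,\tau_\vap^\tmin]$ is never entered --- so one only needs to match one-sided derivatives at $\hat\tau,\tau_\liq^\sc,\check\tau$. The computational backbone is
\begin{align*}
  \partial_{2}\Pvel(\tau_\liq,\tau_\vap)&=\frac{\s_\c(\tau_\liq,\tau_\vap)^2+c(\tau_\vap)^2}{2\,\s_\c(\tau_\liq,\tau_\vap)}, &
  \partial_{1}\Pvel(\tau_\liq,\tau_\vap)&=-\frac{\s_\c(\tau_\liq,\tau_\vap)^2+c(\tau_\liq)^2}{2\,\s_\c(\tau_\liq,\tau_\vap)}
\end{align*}
(valid for condensation waves, where $\tau_\vap>\tau_\liq$ and $\s_\c>0$), together with $\partial_{1}\Rvel(\tau_\vap,\tau_\R)=-c(\tau_\vap)$ and $\partial_{1}\Svel(\tau_\vap,\tau_\R)=-(\s_2^2+c(\tau_\vap)^2)/(2\s_2)$, where $\s_2$ is the speed of the trailing $2$-shock. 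At $\hat\tau$ and $\tau_\liq^\sc$ the condensation wave is sonic, so $\partial_{2}\Pvel=c(\tau_\vap)$ cancels $\partial_{1}\Rvel$ in the chain-rule term carrying $g_s'$ resp.\ $k_\c'$, so the two adjacent branches have equal derivative; at $\check\tau$ the colinearity makes $\s_\c=\s_2$, and the $\Pvel$/$\Svel$ identities cancel in the same way. For strict decrease, $\partial_{1}\Pvel<0$ (from $p'<0$) settles types~\krwave{2},~\krwave{5}; for types~\krwave{3},~\krwave{4},~\krwave{6} I write $\frac{\dd}{\dd\tau^\ast}\,v^\ast=\partial_{1}\Pvel+\big(\partial_{2}\Pvel+\partial_{1}\Evel\big)\,w'$ with $w\in\{g_s,k_\c\}$ and $w'\le0$, and show the bracket is $\ge0$: for a trailing $2$-rarefaction it equals $(\s_\c-c(\tau_\vap))^2/(2\s_\c)\ge0$, and for a trailing $2$-shock (types~\krwave{4},~\krwave{6}) it equals $f(\s_\c^2)-f(\s_2^2)$ with $f(x)=(x+c(\tau_\vap)^2)/(2\sqrt x)$ strictly decreasing on $(0,c(\tau_\vap)^2)$, which is $\ge0$ precisely because $\abs{\s_\c}<\abs{\s_2}$, i.e.\ the phase boundary is slower than the $2$-shock --- and this ordering is exactly the strict monotonicity of the auxiliary function $\check f$ from the proof of Lemma~\ref{lm:hat} (for types~\krwave{5},~\krwave{6}; an analogous function handles types~\krwave{3},~\krwave{4}). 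I expect this $2$-shock sub-case to be the main obstacle, being the one place where the sign of $\frac{\dd}{\dd\tau^\ast}\,v^\ast$ is not immediate.

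For (iv), (v): every phase boundary in $\mathcal{L}_2$ is a condensation wave, so $\s_\c>0$, and the attached $1$- and $2$-waves also have positive speed, which gives the first assertion of (iv); in types~\krwave{3},~\krwave{4},~\krwave{6} the phase boundary is sonic (SC) or subsonic (KC), so $\s_\c\le c(\tau_\vap)$ at its vapor end state, which by the Lax condition and \eqref{eq:H2} is the slowest speed in the attached $2$-rarefaction or $2$-shock, hence the $2$-wave overtakes the phase boundary (strictly, away from the sonic edge). For (v), Lemma~\ref{lm:hat} states that $\tau^\ast=\hat\tau$ resp.\ $\tau^\ast=\check\tau$ is exactly the state where the condensation wave to $\tau_\R$ is sonic --- at $\check\tau$ because its speed equals that of the Lax-admissible trailing $2$-shock, which by the $2$-shock Lax condition exceeds $\lambda_2(\tau_\R)=c(\tau_\R)$ --- and \eqref{eq:H2}, \eqref{eq:H5} and Lemma~\ref{lm:H5:2} show the Rankine--Hugoniot chord from $(\tau^\ast,p(\tau^\ast)+\surf)$ to $(\tau_\R,p(\tau_\R))$ becomes strictly steeper as $\tau^\ast$ decreases, so $\abs{\s_\c(\tau^\ast,\tau_\R)}>c(\tau_\R)$ on all of types~\krwave{2},~\krwave{5}, i.e.\ these condensation waves are supersonic.
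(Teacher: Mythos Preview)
Your approach is essentially the paper's: check continuity at the seams $\hat\tau,\tau_\liq^\sc,\check\tau$; compute and match one-sided derivatives there using the formulas for $\partial_1\Pvel,\partial_2\Pvel,\partial_1\Rvel,\partial_1\Svel$; reduce strict decrease in the trailing-shock case to the speed ordering $\s_\c\le\s_2<c(\tau_\vap)$. Your algebraic packaging via $f(x)=(x+c^2)/(2\sqrt x)$ is equivalent to the paper's rearrangement $(\s_2-\s_\c)\big(c^2/(\s_\c\s_2)-1\big)\ge0$.

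One slip to fix: you write ``$w\in\{g_s,k_\c\}$ and $w'\le0$'', but Lemma~\ref{lm:gs} gives $g_s$ \emph{increasing}, i.e.\ $g_s'\ge0$. Your sign argument ``$w'\le0$ times bracket $\ge0$'' would then fail for type~\krwave{3}. The rescue is already implicit in your computation: on type~\krwave{3} the phase boundary is \emph{sonic} by the very definition of $g_s$, so $\s_\c(\tau^\ast,g_s(\tau^\ast))=c(g_s(\tau^\ast))$ and your bracket $(\s_\c-c)^2/(2\s_\c)$ vanishes identically there; hence the $g_s'$ term contributes nothing and only $\partial_1\Pvel<0$ remains. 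This is exactly how the paper argues (``all terms with $g_s'$ cancel out since $\s_\c=c(g_s(\tau^\ast))$''). So replace the blanket claim $w'\le0$ by: for type~\krwave{3} the bracket is identically zero (sonic), and for types~\krwave{4},~\krwave{6} one has $w=k_\c$ with $k_\c'\le0$. A second minor slip: in (iv) there are no $1$-waves in $\mathcal{L}_2$; all attached waves are $2$-waves.
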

\begin{proof}
 (\rmnum{1}) The map $\mathcal{L}_2$ is piecewise continuous and it is readily checked with Table~\ref{tab:KL2}, that also the transition from one domain of definition to another one is continuous.
 
 (\rmnum{2}) Note that $\mathcal{L}_2$ is piecewise smooth. The critical point in the transition of wave type~\krwave{2} to wave type~\krwave{3} is $\tau^\ast = \hat{\tau}$, 
 in the transition of wave type~\krwave{3} to wave type~\krwave{4} it is $\tau^\ast = \tau_\liq^\sc$ and from type~\krwave{5} to type~\krwave{6} it is $\tau^\ast=\check{\tau}$.
 For later use we derive 
 \begin{align*}
   &\ddfrac{\Pvel}{\tau}(\tau,g(\tau)) = \frac{(g'(\tau)-1) \, \s_\c(\tau,g(\tau))}{2} + \frac{c^2(g(\tau))\,g'(\tau) - c^2(\tau)}{2 \, \s_\c(\tau,g(\tau))},\\
   &\ddfrac{\Svel}{\tau}(g(\tau),\tau_\R) = -\frac{g'(\tau) \, \s_2(g(\tau),\tau_\R)}{2} - \frac{c^2(g(\tau))\,g'(\tau)}{2 \, \s_2(g(\tau), \tau_\R)}
 \end{align*}
 for some smooth function $g$ with $\tau<g(\tau)<\tau_\R$, the sound speed $c$ in \eqref{eq:H6}, the propagation speed $\s_\c$ in \eqref{eq:transitionspeed}. The bulk shock speed $\s_2$ is determined by 
 \begin{align*}
    \s_2(\tau_\l,\tau_\r )=+\sqrt{\frac{  -p(\tau_\r )+ p(\tau_\l  ) }{\tau_\r -\tau_\l  }}.  
 \end{align*}
 Furthermore, there holds $\ddfrac{R}{\tau}(g(\tau),\tau_\R) = -c(g(\tau))\,g'(\tau)$.
 
 We first check the limit $\tau^\ast \to \hat{\tau}$ and $\tau_\R\in(\tau_\vap^\tmin,\tau_\vap^\sc]$. 
 Note that $g_s(\hat{\tau})=\tau_\R$ and $\s_\c(\hat{\tau},\tau_\R)=c(\tau_\R)$ with Lemma~\ref{lm:gs}. 
 We use the above derivative with $g=g_s$ to find
 \begin{align*}
  &\lim_{\tau^\ast\to\hat{\tau}} \ddfrac{\Pvel}{\tau^\ast} (\tau^\ast, \tau_\R) = \frac{-1}{2} \left( c(\tau_\R)+\frac{c^2(\hat{\tau})}{c(\tau_\R)} \right), \,
  \lim_{\tau^\ast\to\hat{\tau}} \ddfrac{R}{\tau^\ast} (g_s(\tau^\ast), \tau_\R)   = - g_s'(\hat{\tau})\, c(\tau_\R),\\  
  &\lim_{\tau^\ast\to\hat{\tau}} \ddfrac{\Pvel}{\tau^\ast} (\tau^\ast, g_s(\tau^\ast)) = \frac{-1}{2} \left( c(\tau_\R)+\frac{c^2(\hat{\tau})}{c(\tau_\R)} \right) + g_s'(\hat{\tau})\, c(\tau_\R).
 \end{align*}
 Thus, the derivatives of a wave of type~\krwave{2} and a wave of type~\krwave{3} coincide in $\tau^\ast = \hat{\tau}$. 
 
 Now we check the limit $\tau^\ast \to\tau_\liq^\sc$ at $\tau_\R\in(\tau_\vap^\tmin,\tau_\vap^\sc]$.
 Here, it holds $k_\c(\tau_\liq^\sc) = g_s(\tau_\liq^\sc)=\tau_\vap^\sc$ and $\s_\c(\tau_\liq^\sc, \tau_\vap^\sc)= c(\tau_\vap^\sc)$ with Definition~\ref{def:kinfun}. 
 In wave type~\krwave{4}, we find
 \begin{align*}
  &\lim_{\tau^\ast\to\tau_\liq^\sc} \ddfrac{\Pvel}{\tau^\ast} (\tau^\ast, k_\c(\tau^\ast)) = 
     \frac{-1}{2} \left( c(\tau_\vap^\sc )+\frac{c^2(\tau_\liq^\sc)}{c(\tau_\vap^\sc)} \right) + c(\tau_\vap^\sc)\, k_\c'(\tau_\liq^\sc) ,  \\
  & \lim_{\tau^\ast\to\tau_\liq^\sc}\ddfrac{R}{\tau^\ast}(k_\c(\tau^\ast),\tau_\R) = 
    \lim_{\tau^\ast\to\tau_\liq^\sc}\ddfrac{\Svel}{\tau^\ast}(k_\c(\tau^\ast),\tau_\R)=  -c(\tau_\vap^\sc)\,k_\c'(\tau_\liq^\sc),   
 \end{align*}
 such that $\lim_{\tau^\ast \to\tau_\liq^\sc} \ddfrac{\Pvel}{\tau^\ast} \mathcal{L}_2(\tau^\ast,\tau_\R) = -1/2 \left( c(\tau_\vap^\sc )+c^2(\tau_\liq^\sc)/c(\tau_\vap^\sc) \right)$.
 The same holds for wave type~\krwave{3} replacing $k_\c$ by $g_s$. Thus, the derivatives coincide in $\tau^\ast = \tau_\liq^\sc$. 
 
 Finally, we have to check the limit $\tau^\ast \to \check{\tau}$ and $\tau_\R\in(\tau_\vap^\sc, \infty)$. 
 With Lemma~\ref{lm:hat}, it holds $\s_\c(\check{\tau}, k_\c(\check{\tau}) ) = \s_2(k_\c(\check{\tau}), \tau_\R) = \s_\c(\check{\tau}, \tau_\R )$. With above derivatives, we find that the limits from both sides (type~\krwave{5} and type~\krwave{6}) are 
 \begin{align*}
  &\lim_{\tau^\ast\to\check{\tau}} \ddfrac{\mathcal{L}_2}{\tau^\ast} (\tau^\ast, \tau_\R) = \frac{-1}{2} \left( \s_\c(\check{\tau},\tau_\R)+\frac{-c^2(\check{\tau})}{\s_\c(\check{\tau},\tau_\R)} \right).
 \end{align*}
 
 Monotonicity: the functions $\Evel$ and $\Pvel$ are strictly decreasing with respect to the first argument, thus for wave type~\krwave{1}, type~\krwave{2} and type~\krwave{5}, there is nothing to do. 
 
 Consider $\ddfrac{\mathcal{L}_2}{\tau^\ast} (\tau^\ast, \tau_\R)$ in case of wave type~\krwave{3}. All terms with $g_s'$ cancel out since $\s_\c(\tau^\ast,g_s(\tau^\ast))=c(g_s(\tau^\ast))$ holds. The remaining terms are negative such that $\mathcal{L}_2(\cdot, \tau_\R)$ is a strictly decreasing function. The same holds for wave type~\krwave{4} with $k_\c(\tau^\ast)>\tau_\R$. The wave is composed of a condensation wave and an attached 2-rarefaction wave, cf.\ wave type~\krwave{3}, and all terms with $k_\c'$ cancel out.
 
 In wave type~\krwave{4} with $k_\c(\tau^\ast)<\tau_\R$ and type~\krwave{6}, the function $k_\c$ is monotonously decreasing and the term $\s_\c+c^2(\tau^\ast)/\s_\c$ is positive. Thus, it remains to demonstrate that
 \begin{align*}
  \s_\c(\tau^\ast,k_\c(\tau^\ast)) + \frac{c^2(k_\c(\tau^\ast))}{\s_\c(\tau^\ast,k_\c(\tau^\ast))} - \s_2(k_\c(\tau^\ast),\tau^\ast_\R) - \frac{c^2(k_\c(\tau^\ast))}{\s_2(k_\c(\tau^\ast),\tau_\R)} \geq0.
 \end{align*}
 We skip the dependencies and rearrange the inequality: $(\s_2-\s_\c) \, \left( \frac{c^2}{\s_\c\,\s_2}-1 \right) \geq0$. This is true since the speeds in waves of type~\krwave{4} and type~\krwave{6} satisfy $c>\s_2\geq\s_\c$. Thus, $\mathcal{L}_2(\cdot, \tau_\R)$ is a strictly decreasing function.
 
 (\rmnum{3}) The condition holds due to $\Pvel(\tau_\liq^\sat,\tau_\vap^\sat)=0$.
 
 (\rmnum{4}), (\rmnum{5}) By definition, all waves of the second family have non-negative propagation speeds. The propagation speed of sonic and subsonic condensation waves is less than the sound speed in the vapor. The (supersonic) condensation wave in waves of type~\krwave{2} propagates faster than sound.

\end{proof}

The solution of the two-phase Riemann problem exists, if the two generalized Lax curves from Proposition~\ref{prop:KL1} and Proposition~\ref{prop:KL2} intersect each other. 

\begin{theorem}[Existence and uniqueness of two-phase Riemann solutions] \label{theo:kinrel}
    Let a pair of monotone decreasing kinetic functions $k_\c$, $k_\e$ as in Definition~\ref{def:kinfun} be given. 
    
    For any pair of states $(\tau_\L,v_\L)^\transp\in\admis_\liq\times \setR$ and $(\tau_\R,v_\L)^\transp\in\admis_\vap\times\setR$ the equation 
    \begin{align}  \label{eq:kinrelroot}
      v_\L + \mathcal{L}_1(\tau_\L,\tau^\ast) = v_\R + \mathcal{L}_2(\tau^\ast,\tau_\R)
    \end{align}   
    with $\mathcal{L}_1$ and $\mathcal{L}_2$ due to Table~\ref{tab:KL1} and Table~\ref{tab:KL2}, respectively,
    has a unique intersection point in 
    $(\tau_\liq^\tmin, \tau_\liq^\sat] \cup (\tau_\vap^\sat, \infty)$.
    
    The corresponding Riemann solution $\U =\left(\tau(\xi, t),v(\xi, t)\right)^\transp\in\admis\times\setR$ is a unique self-similar entropy solution, composed of rarefaction waves, shock waves and exactly one admissible phase boundary as in Definition~\ref{def:admissible}.
    The function $\U$ is composed of a wave connecting the left initial state with $(\tau^\ast,v^\ast)^\transp$ according to Table~\ref{tab:KL1} and a wave connecting $(\tau^\ast,v^\ast)^\transp$ to the right initial state according to Table~\ref{tab:KL2}, with $v^\ast = v_\L + \mathcal{L}_1(\tau_\L,\tau^\ast) = v_\R + \mathcal{L}_2(\tau^\ast,\tau_\R)$. 

\end{theorem}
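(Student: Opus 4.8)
The plan is to reduce the theorem to a one–dimensional intermediate–value argument for the scalar function
\[
  \Phi(\tau^\ast) := \bigl(v_\L + \mathcal{L}_1(\tau_\L,\tau^\ast)\bigr) - \bigl(v_\R + \mathcal{L}_2(\tau^\ast,\tau_\R)\bigr),
\]
defined on $D := (\tau_\liq^\tmin,\tau_\liq^\sat]\cup[\tau_\vap^\sat,\infty)$, the common domain of the two generalized Lax curves from Table~\ref{tab:KL1} and Table~\ref{tab:KL2}. By Proposition~\ref{prop:KL1}(\rmnum{1}),(\rmnum{2}) the map $\tau^\ast\mapsto v_\L+\mathcal{L}_1(\tau_\L,\tau^\ast)$ is continuous on $D$ and strictly increasing on each of the two constituent intervals, while by Proposition~\ref{prop:KL2}(\rmnum{1}),(\rmnum{2}) the map $\tau^\ast\mapsto v_\R+\mathcal{L}_2(\tau^\ast,\tau_\R)$ is continuous on $D$ and strictly decreasing on each of them. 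Hence $\Phi$ is continuous and strictly increasing on $(\tau_\liq^\tmin,\tau_\liq^\sat]$ and on $[\tau_\vap^\sat,\infty)$, and Proposition~\ref{prop:KL1}(\rmnum{3}) together with Proposition~\ref{prop:KL2}(\rmnum{3}) gives $\Phi(\tau_\liq^\sat)=\Phi(\tau_\vap^\sat)$, so the two pieces glue at the identified saturation point (which is exactly the thermodynamic–equilibrium, static, phase boundary).

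Next I would record the limiting behaviour of $\Phi$ at the two open ends of $D$. As $\tau^\ast\downarrow\tau_\liq^\tmin$ the relevant rows of Table~\ref{tab:KL1} and Table~\ref{tab:KL2} give a $1$–bulk shock on the left and the supersonic condensation term $\Pvel(\tau^\ast,\tau_\R)$ on the right; since $p(\tau^\ast)\to\infty$ by \eqref{eq:H4} one has $\Svel(\tau_\L,\tau^\ast)\to-\infty$ and $\Pvel(\tau^\ast,\tau_\R)\to+\infty$, so $\Phi\to-\infty$. As $\tau^\ast\to\infty$, Table~\ref{tab:KL1} produces a fixed prefix plus $\Rvel(\tau_\vap^\se,\tau^\ast)$ and Table~\ref{tab:KL2} produces $\Evel(\tau^\ast,\tau_\R)=\Rvel(\tau^\ast,\tau_\R)=-\Rvel(\tau_\R,\tau^\ast)$; hypothesis \eqref{eq:H6} forces $\Rvel(\,\cdot\,,\tau^\ast)\to\infty$, hence $\Phi\to+\infty$. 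A short case distinction on the sign of the common value $\Phi(\tau_\liq^\sat)=\Phi(\tau_\vap^\sat)$ then yields exactly one zero $\tau^\ast\in D$: if this value is $\geq 0$ the zero lies in $(\tau_\liq^\tmin,\tau_\liq^\sat]$ (unique by strict monotonicity, and $\Phi>0$ on the other branch except at the identified endpoint); if it is $<0$ the zero lies in $(\tau_\vap^\sat,\infty)$ and $\Phi<0$ on the first branch.

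With the root $\tau^\ast$ and $v^\ast := v_\L+\mathcal{L}_1(\tau_\L,\tau^\ast)$ fixed, I would then read off the wave pattern from the corresponding rows of Tables~\ref{tab:KL1} and \ref{tab:KL2} and verify it is an admissible two-phase Riemann solution. Each constituent piece satisfies the Rankine--Hugoniot relations \eqref{eq:psystemjump} across its own discontinuity — this is precisely how $\Evel,\Rvel,\Svel,\Pvel,\s_\e,\s_\c$ were set up — and each rarefaction fan solves \eqref{eq:psystem} classically; Proposition~\ref{prop:KL1}(\rmnum{4}) and Proposition~\ref{prop:KL2}(\rmnum{4}) give that all first–family speeds are negative, all second–family speeds positive, and the phase boundary is correctly ordered relative to the elementary waves of its family, so the pieces concatenate into a well-defined self-similar weak solution $\U\in\admis\times\setR$. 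Every wave is admissible: only Lax-type bulk shocks occur, hence they are entropy-dissipative since the $p$-system is genuinely nonlinear by \eqref{eq:H1},\eqref{eq:H2}; and every phase boundary is of one of the types allowed in Definition~\ref{def:admissible}, so \eqref{eq:entropyjump} holds by the remark following that definition (using \eqref{eg:kinfun:entropy} for the subsonic $k_\c$/$k_\e$ cases). Finally, since $\tau^\ast$ lies in exactly one branch of $D$, exactly one of the two tables contributes a phase-transition wave — condensation if $\tau^\ast\in(\tau_\liq^\tmin,\tau_\liq^\sat]$, evaporation if $\tau^\ast\in[\tau_\vap^\sat,\infty)$ — so the solution contains exactly one admissible phase boundary, and it is subsonic whenever the tables allow it, as required.

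I expect the main obstacle to be the bookkeeping at the junction of the two half-open branches of $D$: making precise that Propositions~\ref{prop:KL1}(\rmnum{3}) and \ref{prop:KL2}(\rmnum{3}) glue them into a single strictly monotone branch, dispatching the degenerate case where the root coincides with the identified saturation point, and carefully checking which table rows govern the limits so that the $\pm\infty$ behaviour of $\Phi$ is seen to be controlled exactly by \eqref{eq:H4} and \eqref{eq:H6}; everything else is a routine assembly of the already-established properties of $\mathcal{L}_1$ and $\mathcal{L}_2$.
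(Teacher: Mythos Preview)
Your proposal is correct and follows essentially the same approach as the paper: both reduce the problem to a one-dimensional intermediate-value argument for the difference of the two generalized Lax curves, using Propositions~\ref{prop:KL1} and \ref{prop:KL2} for continuity and strict monotonicity on each branch, items~(\rmnum{3}) of those propositions to glue the branches at the saturation point, and hypotheses~\eqref{eq:H4} and \eqref{eq:H6} to force the $\pm\infty$ endpoint behaviour. The paper's proof is in fact terser than yours---it packages the gluing via a shift $\Delta:=\tau_\vap^\sat-\tau_\liq^\sat$ to obtain a single continuous strictly monotone function rather than doing your sign case-split on $\Phi(\tau_\liq^\sat)$, and it does not spell out the admissibility verification in the detail you do---but the substance is identical.
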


Note that the solution contains exactly one phase boundary and subsonic phase boundaries are preferred, whenever this is possible. Both conditions are needed for uniqueness. 
Otherwise, Riemann solutions with, e.g., three phase boundaries are possible or a single supersonic evaporation wave instead of wave type~\klwave{3} would also be admissible.

Moreover, the two-phase Riemann solution depends continuously on the initial data. This has been proven in \cite{COLPRI2003} for initial data in stable phases.

\begin{proof}[Proof of Theorem~\ref{theo:kinrel}]
  First we see that $\tau^\ast \notin (\tau_\liq^\sat,\tau_\vap^\sat)$, such that we can exclude this interval from our consideration. 
  
  The Lax curves satisfy
  \begin{align*}
    \lim_{\tau\to\tau_\liq^\tmin} \mathcal{L}_1(\tau_\L, \tau) = - \infty, &&   \lim_{\tau\to\tau_\liq^\tmin} \mathcal{L}_2(\tau, \tau_\R) = + \infty,\\
    \lim_{\tau\to\infty         } \mathcal{L}_1(\tau_\L, \tau) = + \infty, &&   \lim_{\tau\to\infty         } \mathcal{L}_2(\tau, \tau_\R) = - \infty.
  \end{align*}
  Set $\Delta := \tau_\vap^\sat - \tau_\liq^\sat$. Proposition~\ref{prop:KL1} and Proposition~\ref{prop:KL2} ensure, that the function
  \begin{align*}
    f(\tau) = \left\{
	      \begin{matrix}
                   v_\R - v_\L & + \mathcal{L}_2(\tau, \tau_\R)        & - \mathcal{L}_1(\tau_\L, \tau)         & \text{for } \tau \leq \tau_\liq^\sat     \\
                   v_\R - v_\L & + \mathcal{L}_2(\tau-\Delta, \tau_\R) & - \mathcal{L}_1(\tau_\L, \tau-\Delta)  & \text{for } \tau  > \tau_\vap^\sat
              \end{matrix}
              \right.
  \end{align*}
  is continuous and strictly monotone decreasing from $+\infty$ to $-\infty$. Thus, $\tau^\ast\in (\tau_\liq^\tmin, \infty)$ exists such that $f(\tau^\ast)=0$. If $\tau \neq \tau_\liq^\sat$ then $\tau^\ast$ resp. $\tau^\ast+\Delta$ is the unique solution of \eqref{eq:kinrelroot}. If $\tau = \tau_\liq^\sat$, then also $\tau = \tau_\vap^\sat$ solves \eqref{eq:kinrelroot}.

  The existence of a unique Riemann solution follows from the existence of a unique intersection point of the Lax curves in Proposition~\ref{prop:KL1} and Proposition~\ref{prop:KL2}.
\end{proof}

\subsection{Algorithm and an illustrating example} \label{sub:solveralgo}

We are now able to define the two-phase Riemann solver for a properly defined pair of monotone decreasing kinetic functions $k_\c$, $k_\e$. The \defemph{two-phase Riemann solver} is a mapping of type
\begin{align}\label{eq:microm}
  \left\{   
  \begin{array}{rcl}
  \admis_\liq\times\setR \times \admis_\vap\times\setR \times \surfset &\to&
  \admis_\liq\times\setR \times \admis_\vap\times\setR \times\setR\\
    ( \tau_\L  , v_\L   , \tau_\R  , v_\R  , \surf)   &\mapsto& (\tau_\liq, v_\liq,\tau_\vap, v_\vap,\s),
  \end{array}
  \right.
\end{align}
which map the initial conditions \eqref{eq:ic} and the constant surface tension term $\surf$ ($:=(d-1)\,\surfcoeff\,\kappa$) to the end states and the speed
of the phase boundary. In this way, it is used in Section~\ref{sec:numerics}.

\begin{algo}[Two-phase Riemann solver]  \label{alg:kinrelsolver}
   Let the arguments $(\tau_\L, v_\L, \tau_\R, v_\R, \surf)$
   of mapping \eqref{eq:microm} be given.
  \begin{description}
    \item[Step~1] Determine the points in $\admis_\liq\cup\admis_\vap$ where the solution can alter its structure. 
    That are $\tau_{\liq/\vap}^\sat$ due to \eqref{eq:H3},
    $\tau_{\liq/\vap}^\se$, $\tau_{\liq/\vap}^\sc$ due to Definition~\ref{def:kinfun} and
    $\hat{\tau}$, $\check{\tau}$ due to Lemma~\ref{lm:hat}.
    \item[Step~2] Find $\tau^\ast \in (\tau_\liq^\tmin, \tau_\liq^\sat] \cup (\tau_\vap^\sat, \infty)$, that solves \eqref{eq:kinrelroot}.
    \item[Step~3] Return $(\tau_\liq, v_\liq,\tau_\vap, v_\vap,\s)$:
    \begin{itemize}
     \item In case of $\tau^\ast \in (\tau_\liq^\tmin, \tau_\liq^\sat]$, 
     the values $\tau_\liq$, $\tau_\vap$ are given in the last two columns of Table~\ref{tab:KL1}.
     The velocities are $v_\liq = v_\L + \Evel(\tau_\L  ,\tau_\liq)$ and $v_\vap = v_\liq + \Pvel(\tau_\liq, \tau_\vap)$
     and the speed is $\s=\s_\e(\tau_\liq, \tau_\vap)$.

     \item In case of $\tau^\ast \in (\tau_\vap^\sat, \infty)$, 
     the values $\tau_\liq$, $\tau_\vap$ are given in the last two columns of Table~\ref{tab:KL2}.
     The velocities are $v_\vap = v_\R - \Evel(\tau_\vap  ,\tau_\R)$ and $v_\liq = v_\vap - \Pvel(\tau_\liq, \tau_\vap)$
     and the speed is $\s=\s_\c(\tau_\liq, \tau_\vap)$.
     
    \end{itemize}
  \end{description}

\end{algo}

Note that Step~2 requires explicit knowledge of the kinetic functions.
We close the section with an illustrating example of rather simple kinetic functions.

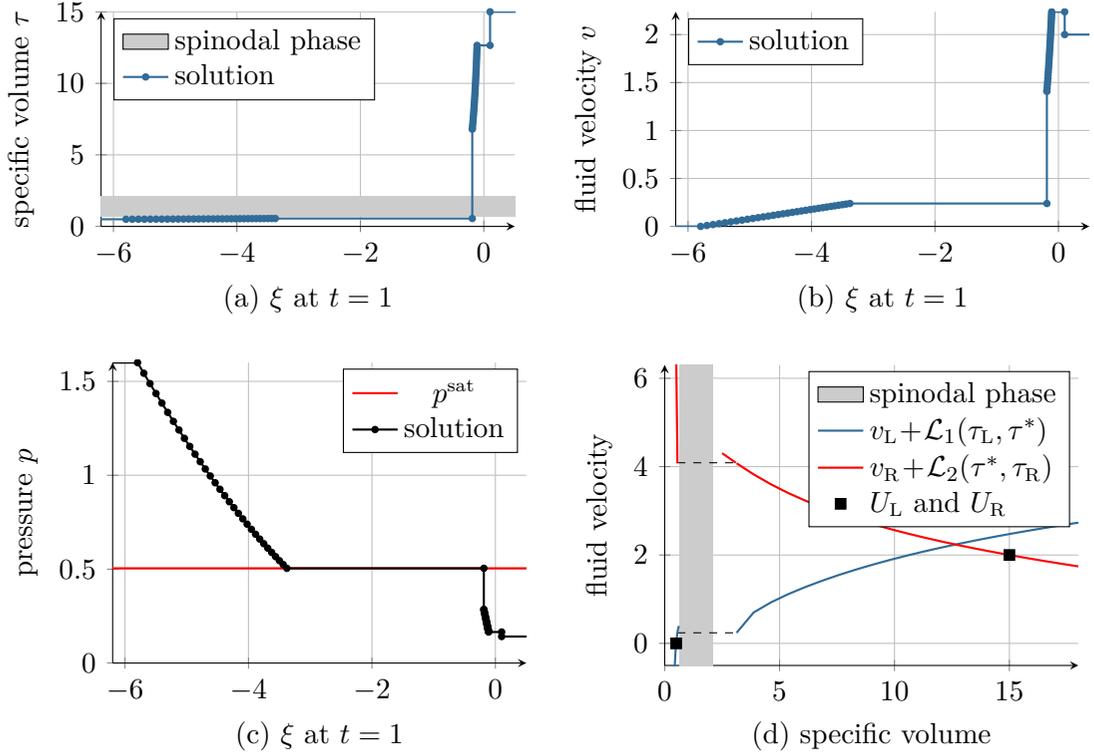
\begin{figure}\centering
 \newcommand{\tauliqsp}{0.65}
 \newcommand{\tauvapsp}{2.1}
 \newcommand{\psat}{0.5045}
  \begin{tikzpicture}
    \begin{axis}[xmin=-6.2, xmax=0.5,ymin=0, xlabel={(a) $\xi$ at $t = 1$},ylabel={specific volume $\tau$}, x post scale = 0.8, y post scale=0.5, legend pos=north west,legend cell align=left]
      \addplot[fill=lightgray,draw=none,area legend] coordinates { (-7,\tauliqsp) (1,\tauliqsp) (1,\tauvapsp) (-7,\tauvapsp) };
      \addlegendentry{spinodal phase}
      \addplot[thick,mark=*,mark size=1pt,blue] 	table[x expr={\thisrowno{0}}, y expr={\thisrowno{1}}, col sep=comma] {fanLiu.txt};
      \addlegendentry{solution}
    \end{axis}
  \end{tikzpicture}  
  \hspace{\hfloatsep}
  \begin{tikzpicture}
    \begin{axis}[xmin=-6.2, xmax=0.5,ymin=0, xlabel={(b) $\xi$ at $t = 1$},ylabel={fluid velocity $v$}, x post scale = 0.8, y post scale=0.5, legend pos=north west,legend cell align=left]
    \addplot[thick,mark=*,mark size=1pt,blue] 	table[x expr={\thisrowno{0}}, y expr={\thisrowno{2}}, col sep=comma] {fanLiu.txt};
    \addlegendentry{solution}
    \end{axis}
  \end{tikzpicture}    
  
  \vspace{\floatsep} 
  \begin{tikzpicture}
    \begin{axis}[xmin=-6.2, xmax=0.5,ymin=0, xlabel={(c) $\xi$ at $t = 1$},ylabel={pressure $p$}, x post scale = 0.8, y post scale=0.7]
      \addplot[color=red,thick] coordinates { (-7,\psat)  (1,\psat)  };
      \addlegendentry{$p^\sat$};    
      \addplot[thick,mark=*,mark size=1pt] 	table[x expr={\thisrowno{0}}, y expr={\thisrowno{3}}, col sep=comma] {fanLiu.txt};
      \addlegendentry{solution};
    \end{axis}
  \end{tikzpicture}  
  \hspace{\hfloatsep}
  \begin{tikzpicture}
    \begin{axis}[xmin=0, xmax=18,ymin=-0.5, ymax=6.3, xlabel={(d) specific volume},ylabel={fluid velocity}, x post scale = 0.8, y post scale=0.7,legend cell align=left]
      \addplot[fill=lightgray,draw=none,area legend] coordinates { (\tauliqsp,-1) (\tauliqsp,7) (\tauvapsp,7) (\tauvapsp,-1) };
      \addlegendentry{spinodal phase}   
    \addplot[dashed,forget plot] 	coordinates { (0.553, 0.239) (3.1, 0.239) };  
    \addplot[dashed,forget plot] 	coordinates { (0.553, 4.088) (3.1, 4.088) };  
    \addplot[color = blue,thick] 	table[x expr={\thisrowno{0}}, y expr={\thisrowno{1}}, col sep=comma] {curveKinLiu.txt};
    \addlegendentry{$v_\L \!+\! \mathcal{L}_1(\tau_\L  ,\tau^\ast)$}
    \addplot[color = red,thick] 	table[x expr={\thisrowno{0}}, y expr={\thisrowno{2}}, col sep=comma] {curveKinLiu.txt};
    \addlegendentry{$v_\R \!+\! \mathcal{L}_2(\tau^\ast,\tau_\R)$}
    \addplot[mark=square*,only marks]  coordinates { (0.5,0) (15,2) };
    \addlegendentry{$U_\L$ and $U_\R$}
    \end{axis}
  \end{tikzpicture}  
  \caption{Riemann solution of Example~\ref{exp:kin:1}. The figures show (a)~the specific volume, (b)~the velocity and (c)~the pressure over the Lagrangian space variable at time $t=1$. 
  The Lax curves of the first (blue) and the second (red) family are drawn in figure (d). The initial states are marked with a square.}
  \label{fig:kin:1}
\end{figure}

\begin{example}[Riemann solution and Lax curves]  \label{exp:kin:1}
 Consider the initial conditions $\vv U_\L=(0.5,0)^\transp$ and $\vv U_\R=(15,2)^\transp$, the van der Waals pressure of Example~\ref{exp:vdw}, $\surf=0$ and the following pair of monotone decreasing kinetic functions
 \begin{align*}
  k_\c(\tau_\liq) &= \tau_\vap^\sat, &  k_\e(\tau_\vap) &= \tau_\liq^\sat &   & \text{for all} & \tau_\liq\in[\tau_\liq^\sc , \tau_\liq^\sat], \tau_\vap\in[\tau_\vap^\sat, \tau_\vap^\se ].
 \end{align*}

 Figure~\ref{fig:kin:1} shows the solution, composed of a wave of type~\klwave{3} and type~\krwave{1} in Table~\ref{tab:KL1} and Table~\ref{tab:KL2}. 
 Waves of type~\klwave{3} consist of a rarefaction wave, followed by an evaporation wave and an attached rarefaction wave. A wave of type~\krwave{1} is solely a shock wave.  
 Figure (d) shows that the monotone increasing Lax curve of the first family intersects the monotone decreasing Lax curve of the second family in the point $(\tau^\ast, v^\ast)^\transp \approx (12.65, 2.24)^\transp$. 
 
\end{example}

Many kinetic functions are only implicitly available. This issue will be considered in the next section.

\section{Kinetic relations and kinetic functions for two-phase Riemann solvers} \label{sec:kinrel}

Pairs of monotone decreasing kinetic functions have been introduced in the last section in order to determine unique Riemann solutions. The more general form of an algebraic coupling condition to overcome the lack of well-posedness of the mixed hyperbolic-elliptic problem is a kinetic relation \cite{ABEKNO2006,TRU1993}. Kinetic relations provide an implicit condition to single out admissible phase boundaries. 
We will distinguish very clearly between kinetic relations, kinetic functions and, in particular, pairs of monotone decreasing kinetic functions, such that Theorem~\ref{theo:kinrel} applies.

Abeyaratne \& Knowles \cite{ABEKNO1990} and Hantke \& Dreyer \& Warnecke \cite{HANTKE2013} apply kinetic relations directly in order to construct Riemann solutions. However, their approaches require piecewise linear pressure functions and are not applicable to equations of state in the sense of Definition~\ref{def:thermo}.
The aim of this section is to derive criteria, which guarantee that a kinetic relation corresponds to a pair of (monotone decreasing) functions, see Theorem~\ref{theo:ex:kinfun} and Theorem~\ref{theo:ex:pair:kinfun}.


In the literature \defemph{kinetic relations} have been suggested (see  \cite{ABEKNO2006,TRU1993}), which control the entropy dissipation explicitly. In terms of a general form these are given by either
\begin{align} \label{eq:kinrel:f}
 K=K(f,\s) &:= f -g(\s) = 0 & 
 &\text{or}&
 K=K(f,\s) &:= h(f) -\s = 0 
\end{align} 
with continuous functions $g, h:\setR\to \setR$, the speed of the phase boundary $\s$, and a driving force $f$ in terms of the traces. 
Note that if $g$ is injective, then $h$ is just $g^{-1}$. Notably, there are examples with non-invertible $g$ or $h$, see for instance $K_1, K_5, K_8$ in Table~\ref{tab:kinrel} below.

Let the speed $\s$ given by the formulas \eqref{eq:transitionspeed}, and define the \defemph{driving force} $f:\admis_\liq \times \admis_\vap \mapsto \setR$ by 
\begin{align} \label{eq:f}
f(\tau_\liq,\tau_\vap) &= \jump{\psi(\tau) }+\jump{\tau} \mean{p(\tau)} + \surf\mean{\tau}.
\end{align}

The kinetic relation imposes a condition on the interfacial entropy production. 
The relation of \eqref{eq:kinrel:f} to entropy consistency can be seen as follows. 
Multiplying \eqref{eq:kinrel:f} by $\s$ or $f$ 
one obtains $-\s \, f = -g(\s)\,\s$ respectively  $-\s \, f = -h(f)\,f$.
This is related to the entropy jump condition \eqref{eq:entropyjump}, where the functions $g$, $h$ with  
\begin{align*} 
g(\s)\,\s \geq 0, && h(f)\,f \geq 0 
\end{align*}
determine the amount of entropy that is dissipated.

The connection between kinetic relations and kinetic functions is given by the following theorems.
Kinetic functions are applied only to subsonic phase boundaries, the same holds for kinetic relations.
The white area in Figure~\ref{fig:Ksets} illustrates admissible end states of subsonic phase boundaries, i.e.\ the set
\begin{align*}
\admis_\text{pb}:= \Set{(\tau_\liq,\tau_\vap)\in\admis_\liq\times\admis_\vap | 
                        p'(\tau_\liq), p'(\tau_\vap) \leq \frac{\jump{p}-\surf}{\jump{\tau}}, 
                       \jump{p}\geq \surf  }  . 
\end{align*}
We use to Lagrangian coordinates and equations of state as in Definition~\ref{def:thermo}. 

\begin{theorem}[Existence and uniqueness of kinetic functions]\label{theo:ex:kinfun} 
  Let $\s_\c:\admis_\text{pb}\to[0,\infty)$ and $\s_\e: \admis_\text{pb}\to(-\infty,0]$ be the propagation speed of condensation and evaporation waves, as defined in \eqref{eq:transitionspeed}, and let a Lipschitz continuous kinetic relation $K:\setR\times\setR \to \setR$ as in \eqref{eq:kinrel:f} be given.
  Assume that $K$ fulfills $K(0,0)=0$ and 
  \begin{align}\label{eq:K:ex}
    \pfrac{K}{f}\left(f(\tau_\liq,\tau_\vap),\s_{\c/\e}(\tau_\liq,\tau_\vap)\right) - \pfrac{K}{\s}\left(f(\tau_\liq,\tau_\vap),\s_{\c/\e}(\tau_\liq,\tau_\vap)\right)\frac{1}{\jump{\tau}^2}\sqrt{\abs{\frac{\jump{\tau} }{\surf- \jump{p}}}} > 0
  \end{align}
  for almost all $(\tau_\liq,\tau_\vap)\in\admis_\text{pb}$.
  
  Then there exist values $\tau_\liq^\sc \in [\tau_\liq^\tmin,\tau_\liq^\sat)$, $\tau_\vap^\se \in (\tau_\vap^\sat, \infty]$ and two continuous functions \linebreak
  $k_\c:  (\tau_\liq^\sc, \tau_\liq^\sat] \to \admis_\vap$ and 
  $k_\e:  [\tau_\vap^\sat, \tau_\vap^\se) \to \admis_\liq$ with 
  $K\left(f(\tau_\liq,k_\c(\tau_\liq),\s_\c(\tau_\liq,k_\c(\tau_\liq))\right)=0$ and 
  $K\left(f(k_\e(\tau_\vap),\tau_\vap),\s_\e(k_\e(\tau_\vap),\tau_\vap)\right)=0$.
  
\end{theorem}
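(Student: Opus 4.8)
The plan is to recast the claim as two one-parameter implicit-equation problems and solve them by an implicit-function argument emanating from thermodynamic equilibrium. Define on $\admis_\text{pb}$ the maps $\Phi_\c(\tau_\liq,\tau_\vap):=K(f(\tau_\liq,\tau_\vap),\s_\c(\tau_\liq,\tau_\vap))$ and $\Phi_\e(\tau_\liq,\tau_\vap):=K(f(\tau_\liq,\tau_\vap),\s_\e(\tau_\liq,\tau_\vap))$, with $f$ as in \eqref{eq:f} and $\s_{\c/\e}$ as in \eqref{eq:transitionspeed}; I want to solve $\Phi_\c=0$ for $\tau_\vap$ as a function of $\tau_\liq$ (this will be $k_\c$) and $\Phi_\e=0$ for $\tau_\liq$ as a function of $\tau_\vap$ (this will be $k_\e$). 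The anchor point is the saturation pair: by \eqref{eq:H3} one has $\mu_\vap=\mu_\liq$ and $p_\vap-p_\liq=\surf$ there, hence (since $\psi=\mu-p\tau$) $f(\tau_\liq^\sat,\tau_\vap^\sat)=0$ and $\s_{\c/\e}(\tau_\liq^\sat,\tau_\vap^\sat)=\pm\sqrt{(\surf-\jump{p})/\jump{\tau}}=0$; as $K(0,0)=0$, the saturation pair lies on both zero sets, and it will be the prescribed endpoint $k_\c(\tau_\liq^\sat)=\tau_\vap^\sat$, $k_\e(\tau_\vap^\sat)=\tau_\liq^\sat$.

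The next step is to record the two differential identities that make hypothesis \eqref{eq:K:ex} appear naturally. Using $\psi'=-p$, $\s_{\c/\e}^2\,\jump{\tau}=\surf-\jump{p}$ and $c=\sqrt{-p'}$, a short computation gives $\partial_{\tau_\vap}f=\tfrac12(\jump{\tau}\,p'(\tau_\vap)+\surf-\jump{p})=\tfrac12\jump{\tau}(\s_\c^2-c(\tau_\vap)^2)$ and, by differentiating $\s_\c^2$, $\partial_{\tau_\vap}\s_\c=-(\s_\c\,\jump{\tau}^2)^{-1}\partial_{\tau_\vap}f$; symmetrically $\partial_{\tau_\liq}f=\tfrac12\jump{\tau}(\s_\e^2-c(\tau_\liq)^2)$ and $\partial_{\tau_\liq}\s_\e=(\s_\e\,\jump{\tau}^2)^{-1}\partial_{\tau_\liq}f$. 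On $\admis_\text{pb}$ the speed is real and vanishes only at the saturation pair, so $\surf-\jump{p}>0$ elsewhere and $\abs{\s_{\c/\e}}^{-1}=\sqrt{\abs{\jump{\tau}/(\surf-\jump{p})}}$; moreover $\jump{\tau}>0$, the phase boundary is subsonic in the vapor ($\s_\c^2<c(\tau_\vap)^2$) and, by Lemma~\ref{lm:H5:2}, always subsonic in the liquid ($\s_\e^2<c(\tau_\liq)^2$), whence $\partial_{\tau_\vap}f<0$ and $\partial_{\tau_\liq}f<0$.

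With these in hand I would argue as follows. Since $f$ and $\s_{\c/\e}$ are $\mathcal{C}^2$ and $K$ is Lipschitz, $\Phi_{\c/\e}$ is locally Lipschitz, and wherever it is differentiable the chain rule together with the identities above yields $\partial_{\tau_\vap}\Phi_\c=\partial_{\tau_\vap}f\,(\partial_f K-\partial_{\s}K\,\jump{\tau}^{-2}\sqrt{\abs{\jump{\tau}/(\surf-\jump{p})}})$ (using $\s_\c>0$) and the same bracket times $\partial_{\tau_\liq}f$ for $\partial_{\tau_\liq}\Phi_\e$ (using $\s_\e<0$) — the bracket being precisely the left-hand side of \eqref{eq:K:ex}, hence positive a.e. Since the $\partial f$-prefactors are negative, $\partial_{\tau_\vap}\Phi_\c<0$ and $\partial_{\tau_\liq}\Phi_\e<0$ a.e. on $\admis_\text{pb}$. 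A locally Lipschitz function is absolutely continuous along segments, so integrating this shows $\tau_\vap\mapsto\Phi_\c(\tau_\liq,\tau_\vap)$ and $\tau_\liq\mapsto\Phi_\e(\tau_\liq,\tau_\vap)$ are strictly decreasing; together with the joint continuity of $\Phi_{\c/\e}$ and the zero at the saturation pair, the intermediate value theorem produces, near $\tau_\liq^\sat$ resp. $\tau_\vap^\sat$, unique values $k_\c(\tau_\liq)$, $k_\e(\tau_\vap)$ solving the equations, and the standard monotonicity-plus-continuity argument gives continuity of $k_\c,k_\e$. Finally I would extend these curves maximally inside the open set $\admis_\text{pb}$: as \eqref{eq:K:ex} keeps the relevant derivative away from zero, the continuation can stop only when $\tau_\liq$ reaches $\tau_\liq^\tmin$ or the boundary of $\admis_\text{pb}$ (a sonic state) is met; since the liquid side never becomes sonic (Lemma~\ref{lm:H5:2}), this defines $\tau_\liq^\sc\in[\tau_\liq^\tmin,\tau_\liq^\sat)$ and $\tau_\vap^\se\in(\tau_\vap^\sat,\infty]$ with the asserted domains.

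The calculations are short; the real obstacle is the low regularity of $K$. Because $K$ is only Lipschitz and \eqref{eq:K:ex} holds merely almost everywhere, the classical implicit function theorem is unavailable, and strict monotonicity of the one-variable slices must be obtained through absolute continuity; one then has to take some care on the exceptional null set of ``bad'' base points — e.g. constructing the selection first on a conull set of $\tau_\liq$ (resp. $\tau_\vap$) and extending it by continuity, and checking that the resulting $k_\c,k_\e$ stay single-valued and continuous. A smaller but genuine point is to make the maximal-continuation step rigorous, i.e. to exclude any termination mechanism other than reaching $\tau_\liq^\tmin$ or the sonic boundary — which is exactly the role of \eqref{eq:K:ex}.
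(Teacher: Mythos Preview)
Your argument follows the paper's strategy closely: define the composite maps (the paper calls them $\Kr_\c,\Kr_\e$), compute the same partial derivatives of $f$ and $\s_{\c/\e}$, factor $\partial_{\tau_\vap}\Kr_\c$ and $\partial_{\tau_\liq}\Kr_\e$ so that the bracket of \eqref{eq:K:ex} appears, conclude strict monotonicity a.e.\ of the relevant one-variable slice, and invoke an implicit-function argument from the saturation pair. Your replacement of the Jittorntrum-type implicit function theorem by an explicit monotonicity-plus-IVT argument is equivalent, and you handle the Lipschitz/a.e.\ issue in the same spirit.

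There is, however, one technical point you skip that the paper addresses explicitly. The saturation pair $(\tau_\liq^\sat,\tau_\vap^\sat)$ sits on the \emph{boundary} of $\admis_\text{pb}$, namely on the curve $\{\s=0\}=\{\surf-\jump{p}=0\}$; your slices $\tau_\vap\mapsto\Phi_\c(\tau_\liq,\tau_\vap)$ are therefore only defined on one side of $\tau_\vap^\sat$, and strict monotonicity alone does not yet tell you the slice takes both signs there, so your IVT step is not fully justified at the anchor. The paper fixes this by extending $\s_{\c/\e}$ to signed functions $\bar\s_{\c/\e}(\tau_\liq,\tau_\vap)=\pm\operatorname{sign}(\surf-\jump{p})\sqrt{\lvert(\surf-\jump{p})/\jump{\tau}\rvert}$ on the larger set $\admis_{\text{ext}}$, which makes the saturation pair an \emph{interior} point; the derivative identities and the monotonicity then hold across the $\s=0$ curve, and the implicit-function step goes through without a one-sided analysis. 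Either adopt this extension, or add an argument that for $\tau_\liq$ slightly below $\tau_\liq^\sat$ the slice $\Phi_\c(\tau_\liq,\cdot)$ actually changes sign inside $\admis_\text{pb}$ (and symmetrically for $\Phi_\e$).
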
  
Note that either $\pslash{K}{f}=1$ and $\pslash{K}{\s}=-g'(\s)$ or $\pslash{K}{f}=h'(f)$ and $\pslash{K}{\s}=-1$.
Driving force and propagation speed are zero for the end states $(\tau_\liq^\sat, \tau_\vap^\sat)$. The condition $K(0,0)=0$ guarantees then, that the saturation states are a solution of the kinetic relation \eqref{eq:kinrel:f}. 
The Riemann solver of Section~\ref{sec:solver} requires kinetic functions \textbf{with} monotonic decay. 
The subsequent theorems state corresponding necessary conditions for the kinetic relations.

\begin{figure}\centering
 \begin{tikzpicture}[ xscale=0.4, yscale=0.4]
 
 \coordinate (S) at (8,7);
 
 \pgfmathdeclarefunction{f}{1}{\pgfmathparse{(  sqrt(#1-6.8)-0.5 )}}
 \coordinate (s0) at ( 7.05, {f( 7.05)});
 \coordinate (s1) at ( 8.0, {f( 8.0)});
 \coordinate (s2) at (10.0, {f(10.0)});
 \coordinate (s3) at (12.0, {f(12.0)});
 \coordinate (s4) at (15.0, {f(15.0)});
 \coordinate (s5) at (17.0, {f(17.0)});
 \coordinate (s6) at (20.0, {f(20.0)});
 \coordinate (s7) at (24.0, {f(24.0)});
 
 \fill [ pattern color=gray, pattern=north east lines]  (0,10.7) -- 
                  plot [smooth, tension=1.1] coordinates {(0,3) (S) (24,9)}  -- 
                  (24,10.7) -- (0,10.7)-- cycle;
 \draw [thick] plot [smooth, tension=1.1] coordinates {(0,3) (S) (24,9)} node [right,align=center] {$\s=0$};
 \draw [thick,green] plot [smooth, tension=1.1] coordinates {(0,1.8) (8,6) (24,8)} node [below,align=left] {$\s_\c=1$,\\$\s_\e=-1$};
 \draw [thick,blue] plot [smooth, tension=1.1] coordinates {(0,4.2) (8,8) (24,9.7)} node [above right,align=left,fill=white] {$\bar\s_\c=-1$,\\$\bar\s_\e=1$};
 
 \node[right,rectangle,fill=white] at (1,9) {$\s\notin\setR$};
 
 \filldraw [black] (S) circle (4pt) ;

 \fill [ fill=lightgray]  (24,0) --  
         plot [smooth] coordinates {(s0) (s1) (s2) (s3) (s4) (s5) (s6) (s7)} -- cycle;
 \node[right, align=center] at (8.8,3.5) {$\admis_\text{pb}$};
 \node[right, align=center,orange] at (16,6) {$f<0$};
 \node[right, align=center,orange] at (3,2) {$f>0$};
 \node[right] at (18,1) {supersonic};
 
 \draw [thick,orange] (S) .. controls (9,5) and ($(s4) - (3,0)$) .. (s4)  node [above,align=center] {$f=0$};
 \draw [thick,orange] (S) .. controls (7.5,8) and (7,9) .. (7,10.7) ;

 \draw [thick,red] plot [smooth] coordinates {(9.5,10.7) (S) (7.5,3) (s2)} node [below right,align=center] {$\Kr_\c=0$};
 \draw [thick,magenta] plot [smooth] coordinates {(4,10.7) (S) (17,4) (20,4.5) (24,4.5)} node [right,align=center] {$\Kr_\e=0$}; 
 
 \draw [dashed]  (8,-0.1) node [below] {$\tau_\vap^\sat$}  -- (S) -- (-0.1,7)  node [left] {$\tau_\liq^\sat$} ;
 \draw[->] (0,-.3) node [below] {$\tau_\vap^\tmin$} -- (0,11) node(yline)[above right] {$\tau_\liq$};
 \draw[->] (-.3,0) node [left] {$\tau_\liq^\tmin$} -- (24,0) node(xline)[right] {$\tau_\vap$};
 \draw    (-.1,10.7) node [left] {$\tau_\liq^\tmax$} -- (24,10.7) ;
 
\end{tikzpicture}
\caption{The Figure shows the set $\admis_\vap\times\admis_\liq$. The gray area corresponds to states, which lead to supersonic phase boundaries. The white area refers to the set $\admis_\text{pb}$. The shaded area corresponds to complex values of the functions $\s_\c$, $\s_\e$. The driving force $f$ is zero along the orange curve, positive on the left side of that curve and negative on the right side. 
The sound speed is zero along the black curve.
}
\label{fig:Ksets}
\end{figure}

\begin{theorem}[Pairs of monotone decreasing kinetic functions]\label{theo:ex:pair:kinfun}   
  Let a kinetic relation $K:\setR\times\setR \to \setR$ be given, that fulfills the conditions of Theorem~\ref{theo:ex:kinfun}.
  If, in addition, $K$ is differentiable in $\setR\!\times\!\setR\setminus\set{(0,0)}$ and 
  \begin{align}\label{eq:K:mon}
    \pfrac{K}{f}\left(f(\tau_\liq,\tau_\vap),\s_{\c/\e}(\tau_\liq,\tau_\vap)\right) + \pfrac{K}{\s}\left(f(\tau_\liq,\tau_\vap),\s_{\c/\e}(\tau_\liq,\tau_\vap)\right) \frac{1}{\jump{\tau}^2}\sqrt{\abs{\frac{\jump{\tau} }{\surf- \jump{p}}}} \geq 0
  \end{align}  
  holds for all $(\tau_\liq,\tau_\vap)\in\mathring\admis_\text{pb}$, then a pair of monotone decreasing kinetic functions in the sense of Definition~\ref{def:kinfun} exists uniquely.
\end{theorem}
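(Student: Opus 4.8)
The plan is to bootstrap from Theorem~\ref{theo:ex:kinfun}, which already produces continuous branches $k_\c$ on $(\tau_\liq^\sc,\tau_\liq^\sat]$ and $k_\e$ on $[\tau_\vap^\sat,\tau_\vap^\se)$ solving $K\equiv 0$, supplemented by $k_\c(\tau_\liq^\sat)=\tau_\vap^\sat$, $k_\e(\tau_\vap^\sat)=\tau_\liq^\sat$ (both hold since $f$ and $\s_{\c/\e}$ vanish at the saturation states and $K(0,0)=0$). It then remains to check the three properties still missing from Definition~\ref{def:kinfun}: (a) the monotone decay $k_\c'\le 0$, $k_\e'\le 0$; (b) the entropy signs \eqref{eg:kinfun:entropy}; (c) the endpoint data $\abs{\s_\c(\tau_\liq^\sc,\tau_\vap^\sc)}=c(\tau_\vap^\sc)$, $\abs{\s_\e(\tau_\liq^\se,\tau_\vap^\se)}=c(\tau_\vap^\se)$, $k_\e'(\tau_\vap^\se)=0$, with $\tau_\liq^\sc$, $\tau_\vap^\se$ interior to the phases. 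Since $K$ is now differentiable away from the origin and the denominators in the formulas below are non-zero on $\mathring\admis_\text{pb}$ by \eqref{eq:K:ex}, the implicit function theorem makes $k_\c,k_\e$ of class $\mathcal{C}^1$ there, which legitimises the computation in (a).

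For (a) one computes, using $\psi'=-p$, $c^2=-p'$ and $\s_{\c/\e}^2=(\surf-\jump{p})/\jump{\tau}$, that with $D:=\jump{\tau}>0$ and $\s$ denoting $\s_\c$ or $\s_\e$,
\begin{align*}
\pfrac{f}{\tau_\liq}&=\tfrac{D}{2}\bigl(\s^2-c^2(\tau_\liq)\bigr), &
\pfrac{f}{\tau_\vap}&=\tfrac{D}{2}\bigl(\s^2-c^2(\tau_\vap)\bigr),\\
\pfrac{\s}{\tau_\liq}&=\frac{\s^2-c^2(\tau_\liq)}{2\,\s\,D}, &
\pfrac{\s}{\tau_\vap}&=\frac{c^2(\tau_\vap)-\s^2}{2\,\s\,D},
\end{align*}
and that the expression occurring in \eqref{eq:K:ex} and \eqref{eq:K:mon} is $Q:=\frac{1}{\jump{\tau}^2}\sqrt{\abs{\jump{\tau}/(\surf-\jump{p})}}=\frac{1}{\abs{\s}\,D^2}$. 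Differentiating $K\equiv 0$ along $\tau_\vap=k_\c(\tau_\liq)$, respectively along $\tau_\liq=k_\e(\tau_\vap)$, and substituting these identities yields, with $K_f:=\pfrac{K}{f}$ and $K_\s:=\pfrac{K}{\s}$ evaluated at the corresponding $(f,\s_{\c/\e})$,
\begin{align*}
k_\c'&=-\frac{\bigl(c^2(\tau_\liq)-\s_\c^2\bigr)\,(K_f+K_\s Q)}{\bigl(c^2(\tau_\vap)-\s_\c^2\bigr)\,(K_f-K_\s Q)},\\
k_\e'&=-\frac{\bigl(c^2(\tau_\vap)-\s_\e^2\bigr)\,(K_f+K_\s Q)}{\bigl(c^2(\tau_\liq)-\s_\e^2\bigr)\,(K_f-K_\s Q)}.
\end{align*}
Now $K_f-K_\s Q>0$ by \eqref{eq:K:ex} (which is at the same time the non-degeneracy used for the implicit function theorem), $K_f+K_\s Q\ge 0$ by \eqref{eq:K:mon}, $c^2(\tau_\liq)-\s^2>0$ on $\admis_\text{pb}$ by Lemma~\ref{lm:H5:2}, and $c^2(\tau_\vap)-\s^2\ge 0$ there by the definition of $\admis_\text{pb}$; combining these four signs gives $k_\c'\le 0$ and $k_\e'\le 0$.

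Step (b) is immediate: the structural form \eqref{eq:kinrel:f} together with the entropy-dissipation property $\s\,g(\s)\ge 0$ (equivalently $f\,h(f)\ge 0$) of a kinetic relation gives $\s_{\c/\e}\,f\ge 0$ along $K\equiv 0$, and since $\s_\c\ge 0$ on the condensation branch and $\s_\e\le 0$ on the evaporation branch this is exactly $f\ge 0$ resp.\ $f\le 0$, i.e.\ \eqref{eg:kinfun:entropy}. For (c): the saturation state is strictly subsonic ($\s_{\c/\e}=0<c(\tau_\vap^\sat)$), so each curve starts inside $\admis_\text{pb}$; being monotone and confined to $\admis_\text{pb}$, whose $\tau_\liq$-projection is the \emph{bounded} interval $\admis_\liq$, each curve has a one-sided limit at its far endpoint, which must sit on $\partial\admis_\text{pb}$. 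By \eqref{eq:H4} it cannot lie on $\{\tau_\liq=\tau_\liq^\tmin\}$ (there $\s_{\c/\e}^2\to\infty$, so the curve would already have exited $\admis_\text{pb}$) and by Lemma~\ref{lm:H5:2} not on the liquid-sonic part $\{c(\tau_\liq)=\abs{\s}\}$ of the boundary, hence it lies on the vapour-sonic curve $\{c(\tau_\vap)=\abs{\s}\}$; this delivers the sonic relations together with $\tau_\liq^\sc\in(\tau_\liq^\tmin,\tau_\liq^\sat)$ and $\tau_\vap^\se\in(\tau_\vap^\tmin,\infty)$, while $k_\e'(\tau_\vap^\se)=0$ drops out because the numerator of the $k_\e'$-formula above carries the factor $c^2(\tau_\vap)-\s_\e^2$. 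Uniqueness of the pair follows from the local uniqueness of the $\mathcal{C}^1$ continuation through $(\tau_\liq^\sat,\tau_\vap^\sat)$ supplied by the implicit function theorem, which also pins down $\tau_\liq^\sc$ and $\tau_\vap^\se$.

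The step I expect to be the main obstacle is (c), and precisely the exclusion of escape to $\tau_\vap=\infty$: one must be sure the monotone curves meet the vapour-sonic locus at a \emph{finite} vapour volume, so that $\tau_\vap^\sc,\tau_\vap^\se<\infty$ and the sonic relations are genuinely attained. Discarding the $\{\tau_\liq=\tau_\liq^\tmin\}$ part of $\partial\admis_\text{pb}$ is cheap via \eqref{eq:H4}, but ruling out $\tau_\vap\to\infty$ needs the entropy sign from (b) combined with the growth of $p$ and of $\psi$ built into Definition~\ref{def:thermo} (notably \eqref{eq:H4} near $\tau_\liq^\tmin$ and the far-field condition \eqref{eq:H6}); by contrast the monotonicity identities of (a), though a small computation, are routine once the four sign factors $K_f\mp K_\s Q$, $c^2(\tau_\liq)-\s^2$, $c^2(\tau_\vap)-\s^2$ are isolated.
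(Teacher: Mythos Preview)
Your monotonicity argument (a) is correct and essentially equivalent to the paper's: both differentiate $\Kr_{\c/\e}=K\circ(f,\s_{\c/\e})$ implicitly and read off the sign of $k'_{\c/\e}$ from the four factors you isolate. Your explicit quotient formulas for $k_\c'$ and $k_\e'$ make the reasoning very transparent and, as a bonus, deliver $k_\e'(\tau_\vap^\se)=0$ directly from the numerator factor $c^2(\tau_\vap)-\s_\e^2$. Step (b), the entropy inequality \eqref{eg:kinfun:entropy}, is actually not verified in the paper's proof either; your appeal to the structural property $\s\,g(\s)\ge 0$ (resp.\ $f\,h(f)\ge 0$) of a kinetic relation is the natural route, though strictly speaking this property is discussed in the text surrounding \eqref{eq:kinrel:f} rather than listed as a formal hypothesis of the theorem.

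The genuine difference is in step (c), precisely where you flag the difficulty. The paper does \emph{not} close the gap via far-field asymptotics on $p$ or $\psi$ (your proposed use of \eqref{eq:H4}, \eqref{eq:H6}); instead it recycles Lemma~\ref{lm:hat} (with $\tau_\R=\tau_\vap^\sat$) and Lemma~\ref{lm:gs} to show that the vapour-sonic locus $\{c(\tau_\vap)=\abs{\s}\}$ meets both the vertical line $\tau_\vap=\tau_\vap^\sat$ and the horizontal line $\tau_\liq=\tau_\liq^\sat$ at \emph{finite} interior points $(\tau_\liq^\ast,\tau_\vap^\sat)$ and $(\tau_\liq^\sat,\tau_\vap^\ast)$. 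Each kinetic curve starts at the saturation state, is monotone decreasing, and stays in $\admis_\text{pb}$, so it is trapped by the sonic arc joining those two points and must hit it at a finite vapour volume---no escape to $\tau_\vap=\infty$ is possible. This barrier argument is cleaner than controlling $f$ or $\psi$ at infinity and reuses machinery already built for the Lax-curve analysis of Section~\ref{sec:solver}; it is the piece you are missing.
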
  

The proof of Theorem~\ref{theo:ex:kinfun} requires a variant of the implicit function theorem, that does not require $\mathcal{C}^1$-smoothness.
\begin{theorem}[Implicit function theorem for continuous functions] \label{thm:implicit}
Suppose that $F : D \subset \setR \times \setR \to \setR$ is a continuous function with
\begin{align*}
  F(a_0, b_0) = 0. 
\end{align*}
Assume that there exist open neighborhoods $A \subset \setR$ and $B \subset \setR$ of $a_0$ and $b_0$, respectively, such that, for all 
$b \in B$, $F(\cdot, b) : A \subset \setR \to \setR$ is
injective. Then, for all $b \in B$, the equation
\begin{align*}
  F(a, b ) = 0 
\end{align*}
has a unique solution $a =H(b) \in A$, 
and the function $H : A \to \setR$ is continuous.
\end{theorem}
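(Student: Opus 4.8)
The plan is to prove this elementary variant by replacing the role of an invertible derivative in the classical implicit function theorem with the real-analysis fact that a continuous injective function on an interval is strictly monotone. First I would shrink $A$ to an open interval $A_0\ni a_0$; the restriction $F(\cdot,b)|_{A_0}$ remains continuous and injective, hence is \emph{strictly monotone} on $A_0$ for every $b\in B$. Applying this with $b=b_0$ and, if necessary, replacing $F$ by $-F$ (which changes neither the zero set nor continuity and flips the direction of monotonicity), I may assume $F(\cdot,b_0)$ is strictly increasing on $A_0$.

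\medskip\noindent\emph{Existence and uniqueness of $H(b)$.} Pick $a_-<a_0<a_+$ in $A_0$, so that $F(a_-,b_0)<0<F(a_+,b_0)$. By continuity of $F$ at $(a_\pm,b_0)$ there is an open neighborhood $B_0\subseteq B$ of $b_0$ with $F(a_-,b)<0<F(a_+,b)$ for all $b\in B_0$. Fix $b\in B_0$: the intermediate value theorem applied to the continuous map $F(\cdot,b)$ produces some $a\in(a_-,a_+)\subset A_0\subset A$ with $F(a,b)=0$, and injectivity of $F(\cdot,b)$ on $A$ makes this $a$ unique in $A$. Set $H(b):=a$. (This is exactly where the statement needs $B$ to be small enough; on the original $B$ no solution in $A$ need exist, as $F(a,b)=a-2b$ on $A=B=(-1,1)$, $a_0=b_0=0$, shows for $b=3/5$. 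So the conclusion should be read after possibly shrinking $B$, which is surely the intended reading.)

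\medskip\noindent\emph{Continuity of $H$.} Fix $b^\ast\in B_0$ and put $a^\ast:=H(b^\ast)$. Choose $\epsilon>0$ with $(a^\ast-\epsilon,a^\ast+\epsilon)\subseteq A_0$. Since $F(\cdot,b^\ast)$ is continuous and injective on $A_0$ it is strictly monotone there, and the sign data $F(a_-,b^\ast)<0<F(a_+,b^\ast)$ with $a_-<a_+$ force it to be strictly increasing; as $F(a^\ast,b^\ast)=0$ this gives $F(a^\ast-\epsilon,b^\ast)<0<F(a^\ast+\epsilon,b^\ast)$. By continuity of $F$ in the second variable at $(a^\ast\pm\epsilon,b^\ast)$ there is $\delta>0$ such that $|b-b^\ast|<\delta$ implies $b\in B_0$ and $F(a^\ast-\epsilon,b)<0<F(a^\ast+\epsilon,b)$. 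For such $b$, the intermediate value theorem gives a zero of $F(\cdot,b)$ in $(a^\ast-\epsilon,a^\ast+\epsilon)$, and by the uniqueness already established this zero equals $H(b)$; hence $|H(b)-H(b^\ast)|<\epsilon$, so $H$ is continuous at $b^\ast$.

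\medskip The step I would treat most carefully is the invocation of ``continuous $+$ injective on an interval $\Rightarrow$ strictly monotone,'' since this monotonicity is precisely what plays the part of the nonvanishing derivative in the usual theorem and is what links injectivity of $F(\cdot,b)$ to the sign-change/IVT mechanism; the remaining bookkeeping is routine $\epsilon$--$\delta$ work, together with the minor observation that $B$ may need to be replaced by a smaller neighborhood for the existence half of the statement to hold.
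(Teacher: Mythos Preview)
Your proof is correct. The paper does not actually give a proof of this theorem; it simply cites Jittorntrum \cite{JITimplicit} for the more general $\setR^n\times\setR^m$ version and moves on. Your argument supplies an elementary, self-contained proof in the scalar case via the standard ``continuous injective on an interval $\Rightarrow$ strictly monotone'' fact together with the intermediate value theorem, which is exactly the natural substitute for the invertible-derivative hypothesis of the classical implicit function theorem. The continuity argument is clean, and your observation that the monotonicity direction of $F(\cdot,b^\ast)$ is forced by the sign data $F(a_-,b^\ast)<0<F(a_+,b^\ast)$ is the right way to handle the possibility that monotonicity could in principle flip as $b$ varies.

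You are also right to flag that the statement as written is slightly imprecise: your counterexample $F(a,b)=a-2b$ on $A=B=(-1,1)$ shows that existence of a root in $A$ can fail for some $b\in B$, so the conclusion must be understood after possibly shrinking $B$ to a smaller neighborhood $B_0$ of $b_0$. (There is also a typographical slip in the paper's statement: the implicit function should be $H:B\to\setR$, not $H:A\to\setR$.) These are issues with the formulation rather than with your proof, and the paper's later applications of the theorem use it only locally near the saturation point, where this reading is the intended one.
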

The theorem is proven in \cite{JITimplicit} for the more general case of functions  $F : D \subset \setR^n \times \setR^m \to \setR^n$ with $n,m\in\setN$.

\begin{proof}[Proof of Theorem~\ref{theo:ex:kinfun}]
Let us first extend the functions $\s_\c$ and $\s_\e$ to the domain 
\begin{align*}
\admis_\text{ext}:= \Set{(\tau_\liq,\tau_\vap)\in\admis_\liq\times\admis_\vap | 
                        p'(\tau_\liq), p'(\tau_\vap) \leq \abs{ \frac{\jump{p}-\surf}{\jump{\tau}} } }. 
\end{align*}
Define $\bar\s_\c(\tau_\liq,\tau_\vap)= \sign(\surf - \jump{p})\sqrt{\abs{ (\surf - \jump{p}) / \jump{\tau} } }$ and $\bar\s_\e(\tau_\liq,\tau_\vap)=-\bar\s_\c(\tau_\liq,\tau_\vap)$ for $(\tau_\liq,\tau_\vap)\in\admis_\text{ext}$.
Note that the pair of saturation states $(\tau_\liq^\sat, \tau_\vap^\sat)$ is an inner point of the set $\admis_\text{ext}$.
In Figure~\ref{fig:Ksets}, the set $\admis_\text{ext}$ is the union of the white area with the shaded area.

The following derivatives and monotonicity properties are readily checked
\begin{align*}
 \pfrac{f}{\tau_\vap} (\tau_\liq,\tau_\vap) &= \frac{1}{2} \left(p'(\tau_\vap) \jump{\tau} +\surf - \jump{p} \right) <0 \text{ in } \mathring\admis_\text{ext}, \\
 \pfrac{f}{\tau_\liq} (\tau_\liq,\tau_\vap) &= \frac{1}{2} \left(p'(\tau_\liq) \jump{\tau} +\surf - \jump{p} \right) <0 \text{ in } \admis_\text{ext}, \\[1ex]
 -\pfrac{\s_\c}{\tau_\vap} (\tau_\liq,\tau_\vap)  =  \pfrac{\s_\e}{\tau_\vap} (\tau_\liq,\tau_\vap)
   &= \frac{1}{2\jump{\tau}^2}\sqrt{\abs{\frac{\jump{\tau} }{\surf-\jump{p}}}}\, \left(p'(\tau_\vap) \jump{\tau} +\surf - \jump{p} \right) <0 \text{ in } \mathring\admis_\text{ext},\\
 \pfrac{\s_\c}{\tau_\liq} (\tau_\liq,\tau_\vap)  =  -\pfrac{\s_\e}{\tau_\liq} (\tau_\liq,\tau_\vap)
   &= \frac{1}{2\jump{\tau}^2}\sqrt{\abs{\frac{\jump{\tau} }{\surf-\jump{p}}}}\, \left(p'(\tau_\liq) \jump{\tau} +\surf - \jump{p} \right) <0 \text{ in } \admis_\text{ext}.
\end{align*}
The derivatives with respect to $\tau_\vap$ are zero in the sonic case, such that strict monotonicity holds only in the interior of the set $\admis_\text{ext}$.
The derivatives with respect to $\tau_\liq$ are negative in the sonic point.
The saturation states \eqref{eq:H3} satisfy $f(\tau_\liq^\sat,\tau_\vap^\sat)=0$ and $\s_{\e/\c}(\tau_\liq^\sat,\tau_\vap^\sat)=0$. 
Furthermore, the condition $K(0,0)=0$ ensures, that one solution is given by $(\tau_\liq^\sat,\tau_\vap^\sat)\in \admis_\text{ext}$.

We start with the condensation case and define a kinetic relation in terms of specific volume values via $\Kr_\c(\tau_\liq,\tau_\vap) := K(f(\tau_\liq,\tau_\vap), \bar\s_\c(\tau_\liq,\tau_\vap))$ for $(\tau_\liq,\tau_\vap)\in\admis_\text{ext}$. Figure~\ref{fig:Ksets} illustrates the set $\set{(\tau_\liq,\tau_\vap)\in\admis_\text{ext}| \Kr_\c(\tau_\liq,\tau_\vap)=0}$.
With \eqref{eq:K:ex} it holds
\begin{align*}
 \ddfrac{\Kr_\c}{\tau_\vap}(\tau_\liq,\tau_\vap) &= \pfrac{K}{f}\,\pfrac{f}{\tau_\vap}  + \pfrac{K}{\s} \pfrac{\s_\c}{\tau_\vap}  \\
 &= \frac{1}{2} \left(p'(\tau_\vap) \jump{\tau} +\surf - \jump{p} \right) \left( \pfrac{K}{f} - \pfrac{K}{\s}\frac{1}{\jump{\tau}^2}\sqrt{\abs{\frac{\jump{\tau} }{\surf- \jump{p}}}} \right) <0 
\end{align*}
for almost all $(\tau_\liq,\tau_\vap)\in\mathring\admis_\text{ext}$. There exists an open neighborhood $B_\liq\subset\admis_\liq$ of $\tau_\liq^\sat$ and an open neighborhood $B_\vap\subset\admis_\vap$ of $\tau_\vap^\sat$,  such that the function $\Kr_\c(\tau_\liq,\cdot): B_\vap \to \setR$ is strictly decreasing and injective for any $\tau_\liq\in B_\liq$. 
With Theorem~\ref{thm:implicit}, there exist a unique continuous function $k_\c: B_\liq \to B_\vap$, such that $\Kr_\c(\tau_\liq,k_\c(\tau_\liq))=0$ holds.
For values $\tau_\liq<\tau_\liq^\sat$ we can proceed with the same arguments, as long as, \eqref{eq:K:ex} holds. 
Finally, we restrict the domain of definition to values less or equal than $\tau_\liq^\sat$.

The evaporation case is very similar. For $\Kr_\e(\tau_\liq,\tau_\vap) := K(f(\tau_\liq,\tau_\vap), \bar\s_\e(\tau_\liq,\tau_\vap))$ it holds with \eqref{eq:K:ex}
\begin{align*}
 \ddfrac{\Kr_\e}{\tau_\liq}(\tau_\liq,\tau_\vap) &= \pfrac{K}{f}\,\pfrac{f}{\tau_\liq}  + \pfrac{K}{\s} \pfrac{\s_\e}{\tau_\liq}  \\
 &= \frac{1}{2} \left(p'(\tau_\liq) \jump{\tau} +\surf - \jump{p} \right) \left( \pfrac{K}{f} - \pfrac{K}{\s}\frac{1}{\jump{\tau}^2}\sqrt{\abs{\frac{\jump{\tau} }{\surf- \jump{p}}}} \right)<0
\end{align*}
for almost all $(\tau_\liq,\tau_\vap)\in\admis_\text{ext}$. 
The function $\Kr_\e(\cdot,\tau_\vap)$ is strictly decreasing and injective in an open neighborhood of the saturation states and we can apply the same arguments as above.

\end{proof}

\begin{proof}[Proof of Theorem~\ref{theo:ex:pair:kinfun}]
Due to Theorem~\ref{theo:ex:kinfun}, there are continuous kinetic functions $k_\c:  (\tau_\liq^\sc, \tau_\liq^\sat] \to \admis_\vap$ and $k_\e:  [\tau_\vap^\sat, \tau_\vap^\se) \to \admis_\liq$. 
The extra regularity assumption of differentiability is inherited to the kinetic functions.

We show, that $k_\c$ is a monotone decreasing function and use the functions defined in the proof of Theorem~\ref{theo:ex:kinfun}. From condition \eqref{eq:K:mon} it follows that
\begin{align*}
 \ddfrac{\Kr_\c}{\tau_\liq}(\tau_\liq,\tau_\vap) &= 
  \pfrac{K}{f}\,\pfrac{f}{\tau_\liq}  + \pfrac{K}{\s} \pfrac{\s_\c}{\tau_\liq}  \\
 &= \frac{1}{2} \left(p'(\tau_\liq) \jump{\tau} +\surf - \jump{p} \right) \left( \pfrac{K}{f} + \pfrac{K}{\s}\frac{1}{\jump{\tau}^2}\sqrt{\frac{\jump{\tau} }{\surf- \jump{p}}} \right)\leq0
\end{align*}
for $(\tau_\liq,\tau_\vap)\in\admis_\text{pb}$. 
We consider $\Kr_\c( \tau,k_\c(\tau))=0$ and derive 
\begin{align*}
  0 = \ddfrac{\Kr_\c}{\tau}\left( \tau,k_\c(\tau)\right) = 
  \pfrac{\Kr_\c}{\tau_\liq}\left( \tau,k_\c(\tau)\right) + 
  \pfrac{\Kr_\c}{\tau_\vap}\left( \tau,k_\c(\tau)\right) \,  k_\c'(\tau).
\end{align*}
The derivatives of the kinetic relation are both not positive and $\pfrac{\Kr_\c}{\tau_\vap}$ is negative except of the sonic point $\tau_\vap^\sc$, thus $k_\c'(\tau_\vap)\leq0$ for all $\tau_\vap\in[\tau_\vap^\sat,\tau_\vap^\sc)$.

We proceed with the evaporation case and show that the function $k_\e$ is monotone decreasing. 
From \eqref{eq:K:mon} it follows that
\begin{align*}
 \ddfrac{\Kr_\e}{\tau_\vap}(\tau_\liq,\tau_\vap) &= 
  \pfrac{K}{f}\,\pfrac{f}{\tau_\vap}  + \pfrac{K}{\s} \pfrac{\s_\e}{\tau_\vap}  \\
 &= \frac{1}{2} \left(p'(\tau_\vap) \jump{\tau} +\surf - \jump{p} \right) \left( \pfrac{K}{f} + \pfrac{K}{\s}\frac{1}{\jump{\tau}^2}\sqrt{\frac{\jump{\tau} }{\surf- \jump{p}}} \right) \leq0
\end{align*}
for $(\tau_\liq,\tau_\vap)\in\admis_\text{pb}$. 
Consider $\Kr_\e(k_\e(\tau), \tau)=0$ and derive 
\begin{align}\label{eq:ke:mon}
  0 = \ddfrac{\Kr_\e}{\tau}\left( k_\e(\tau),\tau\right) = 
  \pfrac{\Kr_\e}{\tau_\liq}\left( k_\e(\tau),\tau\right) \,  k_\e'(\tau) + 
  \pfrac{\Kr_\e}{\tau_\vap}\left( k_\e(\tau),\tau\right).
\end{align}
The term $\pfrac{\Kr_\e}{\tau_\liq}$ is negative and the term $\pfrac{\Kr_\e}{\tau_\vap}$ is not positive, thus $k_\e'\leq0$ in $(\tau_\liq^\sc,\tau_\liq^\sat]$.

It remains to show that the domain of definition can be extended up to the sonic points, with
$ \s_\c(\tau_\liq^\sc, k_\c(\tau_\liq^\sc)) = c(\tau_\vap^\sc)$, 
$-\s_\e(k_\e(\tau_\vap^\se), \tau_\vap^\se) = c(\tau_\vap^\se)$
and the condition $k_\e'(\tau_\vap^\se) =0$.
Note that the points $(\tau_\liq^\sc, k_\c(\tau_\liq^\sc))$, $(k_\e(\tau_\vap^\se), \tau_\vap^\se)$ are the intersection points of the kinetic functions with the boundary segment $p'(\tau_\vap) = \frac{\jump{p}-\surf}{\jump{\tau}}$, c.f.\ Figure~\ref{fig:Ksets}. The kinetic functions are monotone decreasing in $\admis_\text{pb}$. 
Thus, $k_\c$ and $k_\e$ intersect the boundary segment $p'(\tau_\vap) = \frac{\jump{p}-\surf}{\jump{\tau}}$ between the points 
$(\tau_\liq^\ast,\tau_\vap^\sat)$ and 
$(\tau_\liq^\sat,\tau_\vap^\ast)$. 
The points are the intersection points with a horizontal line $\set{(\tau_\liq,\tau_\vap)\in\admis_\text{pb}|\tau_\liq=\tau_\liq^\sat}$ and a vertical line $\set{(\tau_\liq,\tau_\vap)\in\admis_\text{pb}|\tau_\vap=\tau_\vap^\sat}$ through the saturation states.
The first intersection point exists due to Lemma~\ref{lm:hat} with $\tau_\R = \tau_\vap^\sat$ and $\tau_\liq^\ast = \hat\tau$.
The second intersection point exists due to Lemma~\ref{lm:gs} with  $\tau_\vap^\ast = g_s(\tau_\liq^\sat)$. 
Thus, also intersection points $(\tau_\liq^\sc, \tau_\vap^\sc)$ and $(\tau_\liq^\se,\tau_\vap^\se)$ exist, with $\tau_\vap^\sc :=k_\c(\tau_\liq^\sc)$, $\tau_\liq^\se:=k_\e(\tau_\vap^\se)$. 

Finally, we find that $\ddfrac{\Kr_\e}{\tau_\vap}(\tau_\liq^\se,\tau_\vap^\se)=0$ and $\ddfrac{\Kr_\e}{\tau_\liq}(\tau_\liq^\se,\tau_\vap^\se)>0$ in \eqref{eq:ke:mon}, that means $k_\e'(\tau_\vap^\se)=0$. 
Thus, $(k_\c,k_\e)$ is a pair of monotone decreasing kinetic functions.
\end{proof}

The applicability of kinetic relations that correspond to pairs of monotone decreasing functions is in fact limited. 
This is underlined by the following result (see also Subsection~\ref{sub:simoera}).

\begin{corollary}[Metastable phase boundaries]\label{col:metastable:waves}
  Consider a phase boundary \eqref{eq:discontinuous}, that obeys a kinetic relation as required in Theorem~\ref{theo:ex:pair:kinfun}.
  
  Then, the end states $\tau_\liq$, $\tau_\vap$ of the phase boundary belong to stable phases, i.e.\ $\tau_\liq\leq\tau_\liq^\sat$ and $\tau_\vap\geq\tau_\vap^\sat$. Thus, metastable end states are excluded.
\end{corollary}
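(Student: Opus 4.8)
The plan is as follows. A phase boundary that obeys a kinetic relation is---like every phase boundary constrained by a kinetic function---subsonic, so its end states $(\tau_\liq,\tau_\vap)$ lie in $\admis_\text{pb}$; I then want to show that, written in the specific-volume variables, the kinetic relation has no zero in $\admis_\text{pb}$ with $\tau_\liq>\tau_\liq^\sat$ or $\tau_\vap<\tau_\vap^\sat$, which is exactly the claim. I would treat the condensation case in detail (the evaporation case is symmetric under $\tau_\liq\leftrightarrow\tau_\vap$ and $\Kr_\c\leftrightarrow\Kr_\e$): there the relation reads $\Kr_\c(\tau_\liq,\tau_\vap):=K\big(f(\tau_\liq,\tau_\vap),\s_\c(\tau_\liq,\tau_\vap)\big)=0$, with $\Kr_\c$ as in the proof of Theorem~\ref{theo:ex:kinfun}. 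The inputs I would use are: $f$ and $\s_\c$ vanish at $(\tau_\liq^\sat,\tau_\vap^\sat)$ by~\eqref{eq:H3}, so $\Kr_\c(\tau_\liq^\sat,\tau_\vap^\sat)=K(0,0)=0$; and the proofs of Theorems~\ref{theo:ex:kinfun} and~\ref{theo:ex:pair:kinfun} already show, via~\eqref{eq:K:ex} and~\eqref{eq:K:mon}, that $\partial_{\tau_\vap}\Kr_\c<0$ and $\partial_{\tau_\liq}\Kr_\c\le0$ on $\mathring\admis_\text{pb}$.

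Then I would argue by contradiction. Suppose $\tau_\liq>\tau_\liq^\sat$. Comparing $\Kr_\c$ with its saturation value along segments contained in $\admis_\text{pb}$, the sign $\partial_{\tau_\liq}\Kr_\c\le0$ gives $0=\Kr_\c(\tau_\liq,\tau_\vap)\le\Kr_\c(\tau_\liq^\sat,\tau_\vap)$, and then the strict monotonicity of $\tau_\vap\mapsto\Kr_\c(\tau_\liq^\sat,\tau_\vap)$ together with $\Kr_\c(\tau_\liq^\sat,\tau_\vap^\sat)=0$ forces $\tau_\vap\le\tau_\vap^\sat$. Since $p$ is strictly decreasing on each phase and $p(\tau_\vap^\sat)-p(\tau_\liq^\sat)=\surf$, this yields
\begin{align*}
 p(\tau_\vap)-p(\tau_\liq)\;\ge\;p(\tau_\vap^\sat)-p(\tau_\liq)\;>\;p(\tau_\vap^\sat)-p(\tau_\liq^\sat)\;=\;\surf,
\end{align*}
so the radicand of the phase-boundary speed in~\eqref{eq:transitionspeed} is negative and $(\tau_\liq,\tau_\vap)\notin\admis_\text{pb}$ --- contradiction. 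Hence $\tau_\liq\le\tau_\liq^\sat$; the mirror argument with $\Kr_\e$, $\partial_{\tau_\liq}\Kr_\e<0$, $\partial_{\tau_\vap}\Kr_\e\le0$ (which at the same time settles the evaporation case) excludes $\tau_\vap<\tau_\vap^\sat$. As $\tau_\liq\in\admis_\liq$ and $\tau_\vap\in\admis_\vap$, this puts the end states into the stable phases $(\tau_\liq^\tmin,\tau_\liq^\sat]$ and $[\tau_\vap^\sat,\infty)$ and rules out the metastable ranges $(\tau_\liq^\sat,\tau_\liq^\tmax)$, $(\tau_\vap^\tmin,\tau_\vap^\sat)$. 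I would close by remarking that this is equivalently the statement that the kinetic functions $k_\c:(\tau_\liq^\sc,\tau_\liq^\sat]\to\admis_\vap$ and $k_\e:[\tau_\vap^\sat,\tau_\vap^\se)\to\admis_\liq$ of Theorem~\ref{theo:ex:pair:kinfun}---whose domains lie in the stable phases and whose ranges do too, by $k_\c'\le0$, $k_\e'\le0$ and $k_\c(\tau_\liq^\sat)=\tau_\vap^\sat$, $k_\e(\tau_\vap^\sat)=\tau_\liq^\sat$---already parametrize every phase boundary obeying $K$.

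The step that needs care, and the main obstacle, is the comparison of $\Kr_\c$ with its saturation value: $\admis_\text{pb}$ is not a product set, so one has to verify that the horizontal and vertical segments joining $(\tau_\liq,\tau_\vap)$ to $(\tau_\liq^\sat,\tau_\vap^\sat)$ stay inside $\mathring\admis_\text{pb}$, where the sign information~\eqref{eq:K:ex}--\eqref{eq:K:mon} is available. This is where~\eqref{eq:H2} together with Lemmas~\ref{lm:H5:2} and~\ref{lm:gs} enter, since they control the shape of $\admis_\text{pb}$---its boundary consisting of the sonic condensation and evaporation curves and the curve $\{\s=0\}$ through the saturation pair---and make it regular enough for the comparison. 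Endpoint situations, where a comparison segment touches a sonic boundary or the point $(f,\s)=(0,0)$ at which $K$ need only be Lipschitz, should be absorbed by the continuity of $\Kr_\c$ and of $f,\s_\c$ up to $\overline{\admis_\text{pb}}$; the remaining verifications are routine.
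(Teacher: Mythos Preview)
Your closing remark \emph{is} the paper's proof. The paper simply invokes Theorem~\ref{theo:ex:pair:kinfun} to obtain the pair of monotone decreasing kinetic functions $k_\c:[\tau_\liq^\sc,\tau_\liq^\sat]\to\admis_\vap$ and $k_\e:[\tau_\vap^\sat,\tau_\vap^\se]\to\admis_\liq$; since a subsonic phase boundary obeying $K$ has end states related by one of these functions, the domain restrictions already give $\tau_\liq\le\tau_\liq^\sat$ (condensation) respectively $\tau_\vap\ge\tau_\vap^\sat$ (evaporation), and monotonicity together with $k_\c(\tau_\liq^\sat)=\tau_\vap^\sat$, $k_\e(\tau_\vap^\sat)=\tau_\liq^\sat$ gives the other inequality. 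That is two lines.

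Your main argument---the contradiction via the sign of $\partial_{\tau_\liq}\Kr_\c$, $\partial_{\tau_\vap}\Kr_\c$ and the radicand of $\s_\c$---is a detour that reproves part of Theorem~\ref{theo:ex:pair:kinfun} in situ. It is not wrong, but the obstacle you flag (keeping the comparison segments inside $\mathring\admis_\text{pb}$, where \eqref{eq:K:mon} is assumed) is precisely the kind of geometric bookkeeping that the kinetic-function formalism was introduced to absorb. Once Theorem~\ref{theo:ex:pair:kinfun} is available, there is no reason to descend back to $\Kr_\c$; the corollary falls out of the domain and monotonicity data of $k_\c,k_\e$ without any further analysis of $\admis_\text{pb}$.
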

\begin{proof}
  Due to Theorem~\ref{theo:ex:pair:kinfun} there is a pair of monotone decreasing kinetic function, such that the end states $\tau_\liq\in\admis_\liq$, $\tau_\vap\in\admis_\vap$ satisfy $k_\c(\tau_\liq)=\tau_\vap$ for a condensation wave and $k_\e(\tau_\liq)=\tau_\vap$ for an evaporation wave. One pair of end states is given by the saturation states, i.e.\  $k_\c(\tau_\liq^\sat)=\tau_\vap^\sat$ and $k_\e(\tau_\liq^\sat)=\tau_\vap^\sat$.
  Because of the monotonicity of $k_\c$ and $k_\e$ it holds that $\tau_\vap^\sat \leq \tau_\vap$ for $\tau_\liq \in [\tau_\liq^\sc, \tau_\liq^\sat]$ in the condensation case and that $\tau_\liq \leq \tau_\liq^\sat$ for $\tau_\vap \in [\tau_\vap^\sat, \tau_\vap^\se]$ in the evaporation case.
\end{proof}

\section{Examples of kinetic relations and two-phase Riemann solver for examples of kinetic relations}\label{sec:kinrelex}

We apply the theorems of the previous section to examples of kinetic relations, as they have been suggested in the literature. 
Furthermore, two-phase Riemann solutions are determined and studied with respect to different kinetic relations, but also with respect surface tension.
A comparison with experimental measurements from \cite{Simdodecane1999} is presented.

\subsection{Examples of kinetic relations}

\begin{table}
 {
  \renewcommand{\arraystretch}{1.3}
  \begin{tabularx}{\textwidth}{X}
    \toprule
  \end{tabularx} 
 }
 {
  \belowrulesep=0pt
  \aboverulesep=0pt
  \begin{tabularx}{\textwidth}{p{6.5cm}R{3.6cm}R{2.6cm}L}
      \textbf{kinetic relation}
      & corresponds to a pair
      & \multicolumn{2}{c}{the phase boundary}\\        \cmidrule(l){3-4}    
      & of monotone decreasing kinetic functions 
      & dissipates entropy (\small$f\,\s\neq0$) & is static ($\s=0$)\\ 
  \end{tabularx}
  }
\renewcommand{\arraystretch}{1.3}  
  \begin{tabularx}{\textwidth}{p{6.5cm}R{3.6cm}R{2.6cm}L}
      \cmidrule(r){1-1} \cmidrule(lr){2-2} \cmidrule(lr){3-3}\cmidrule(l){4-4}
      $K_1(f,\s) := f $                             & \cmark & \xmark & \xmark \\
      $K_2(f,\s) := f - k^\ast\,\s$                 & \xmark & \cmark & \xmark \\
      $K_3(f,\s) := f - k^\ast\, \sign(\s)\,\s^2$   & \cmark \quad for small $k^\ast>0$ & \cmark & \xmark \\
      $K_4(f,\s) := f - k^\ast\,  \s^3$             & \cmark \quad for small $k^\ast>0$ & \cmark & \xmark \\
      $K_5(f,\s) :=  \left\{
		\begin{array}{lll}
		  f+a & -k^\ast\,\s &: f<-a  \\
		      & -k^\ast\,\s &: \abs{f}\leq a\\
		  f-a & -k^\ast\,\s &: f>a              
	      \end{array}\right.$                     & \xmark & \cmark & \xmark \\
      $K_6(f,\s,\tau_\liq,\tau_\vap) := f - \sign(\s)\,\s^2\,\jump{\tau}^2$	   & \xmark & \cmark & \xmark \\      
      $K_7$ such that $ \begin{cases}
               k_\c(\tau_\liq) = \tau_\vap^\sat &: \s\geq0 \\
               k_\e(\tau_\vap) = \tau_\liq^\sat &: \s<0 
              \end{cases}$
						    & \cmark & \cmark & \xmark \\
      $K_8(f,\s) := -\s$			    & \xmark & \xmark & \cmark \\
    \bottomrule
  \end{tabularx}
  \caption{Different kinetic relations and properties, in particular the existence of a corresponding pair of monotone decreasing kinetic functions due to Theorem~\ref{theo:ex:pair:kinfun}. The parameters satisfy $k^\ast>0$, $a >0$.} \label{tab:kinrel}
\end{table}

Table~\ref{tab:kinrel} provides a list with examples for kinetic relations as they can be found in the literature \cite{ABEKNO2006,TRU1993,Kabil2016,HANTKE2013,BEN1998,BENFRE2004,BEN1999}.
Figure~\ref{fig:Kcontour} shows the zero contour lines of the kinetic relations and an equation of state as in Definition~\ref{def:thermo}. 
Figure~\ref{fig:Kcontour:dodecane} illustrates the same as Figure~\ref{fig:Kcontour} but in terms of the pressure and for an equation of state of n-dodecane at $T=\unit[230]{\degree C}$, computed by the library CoolProp \cite{CoolProp}. To be precise the set $\set{ (p(\tau_\vap),p(\tau_\liq))\in\setR_+^2 | K(\tau_\liq, \tau_\vap)=0}$ is shown.

We proceed with a description of the kinetic relations of Table~\ref{tab:kinrel}.

\begin{figure}\centering
 \begin{tikzpicture}[ xscale=0.5, yscale=0.5]
 
 \coordinate (S) at (8,7);
 
 \pgfmathdeclarefunction{f}{1}{\pgfmathparse{(  sqrt(#1-6.8)-0.5 )}}
 \coordinate (s0) at ( 7.05, {f( 7.05)});
 \coordinate (s1) at ( 8.0, {f( 8.0)});
 \coordinate (s2) at (10.0, {f(10.0)});
 \coordinate (s22)at ( 8.5, {f( 8.5)});
 \coordinate (s3) at (12.0, {f(12.0)});
 \coordinate (s32)at (11.5, {f(11.5)});
 \coordinate (s4) at (15.0, {f(15.0)});
 \coordinate (s5) at (17.0, {f(17.0)});
 \coordinate (s52)at (18.0, {f(18.0)});
 \coordinate (s6) at (20.0, {f(20.0)});
 \coordinate (s62)at (22.0, {f(22.0)});
 \coordinate (s7) at (24.0, {f(24.0)});
 
 \fill [ pattern color=gray, pattern=north east lines]  (0,10) -- 
                  plot [smooth, tension=1.1] coordinates {(0,3) (S) (24,9)}  -- 
                  (24,10) -- (0,10)-- cycle;
 \draw [thick, cyan] plot [smooth, tension=1.1] coordinates {(0,3) (S) (24,9)} node [right,align=center] {$\s=0$,\\$K_8=0$};
 \node[right,rectangle,fill=white] at (1,9) {$\s\notin\setR$};
 
 \filldraw [black] (S) circle (4pt) ;

 \fill [ fill=lightgray]  (24,0) --  
         plot [smooth] coordinates {(s0) (s1) (s2) (s3) (s4) (s5) (s6) (s7)} -- cycle;
 \node[right, align=center] at (0.7,2.4) {$\admis_\text{pb}$,\\subsonic};
 \node[right] at (19,1) {supersonic};
 
 \draw [thick, blue] (S) .. controls (9,5) and ($(s4) - (3,0)$) .. (s4)  node [below,align=center] {$f=0$,\\$K_1=0$};

 \draw [dashed]  (8,-0.1) node [below] {$\tau_\vap^\sat$}  -- (S) -- (-0.1,7)  node [left] {$\tau_\liq^\sat$} ;
 \draw [thick,green] (s1) -- (S) -- (24,7)  node [right] {$K_7=0$} ;
 \draw [thick] (S) .. controls (12,8.3) and ($(s6) - (3,0)$) .. (s6) node [below right] {$K_2=0$} ;
 \draw [thick] (s2) .. controls ($(s2) - (3,0)$) and (4,6) .. (S) ;

 \draw [thick,magenta] (10,7.45) .. controls (14,8.2) and ($(s62) - (3,0)$) .. (s62) node [above] {$K_5=0$} ;
 \draw [thick,magenta] (s22) .. controls ($(s22) - (3,0)$) and (2,5) .. (6,6.45) ; 
 \draw [thick,magenta] plot [smooth, tension=1.1] coordinates {(10,7.45) (S) (6,6.45)};
 
 \draw [thick, red] (S) .. controls (9,6) and ($(s5) - (3,0)$) .. (s5)   node [above] {$k_\e$} node [below] {$K_3=0$}; 
 \draw [thick, red]  (S) .. controls (9,4) and ($(s3) - (2,0)$) .. (s3)  node [above] {$k_\c$};

 \draw [thick,dashed,green] (S) .. controls (9,5.5) and ($(s52) - (3,0)$) .. (s52)  ;
 \draw [thick,dashed,green]  (S) .. controls (9,4.5) and ($(s32) - (2,0)$) .. (s32) node [below] {$K_4=0$};
 
 \draw [thick,orange] plot [smooth, tension=1.1] coordinates {(S) (16,7.6)  (24,8)} node[right] {$K_6=0$};
 \draw [thick,orange] plot [smooth, tension=1.1] coordinates {(S) (5,4)  (2,0)};
 
 \draw[->] (0,0) -- (0,10) node(yline)[above] {$\tau_\liq$};
 \draw[->] (0,0) -- (24,0) node(xline)[right] {$\tau_\vap$};
 
 \draw ( 4,0) node [below] {metastable};
 \draw (16,0) node [below] {stable};
 \draw (0,3.5) node [rotate=90,above] {stable};
 \draw (0,7) node [rotate=90,above right] {\, metast.};
\end{tikzpicture}
\caption{Zero contour lines of kinetic relations, i.e.\ $\set{(\tau_\liq,\tau_\vap)\in\admis_\text{pb}|K(\tau_\liq,\tau_\vap)=0}$, and a pair of monotone decreasing kinetic functions, i.e.\ $(\tau_\liq,k_\c(\tau_\liq))\subset \admis_\text{pb}$ and $(k_\e(\tau_\vap),\tau_\vap)\subset \admis_\text{pb}$ resulting from $K_3$. The shaded area corresponds to complex speeds $\s_\c$, $\s_\e$ and the gray area to supersonic phase boundaries. The white area corresponds to the set $\admis_\text{pb}$ of subsonic phase boundaries.}
\label{fig:Kcontour}
\end{figure}

\begin{figure} \centering  
\newcommand{\datafile}{KinrelDodecane.txt}
  \begin{tikzpicture}
    \begin{axis}[x post scale = 1.8, y post scale=1.4,
                 xmin=0.37, xmax=1.8,   
                 ymin=0.48, ymax=2.02,
                 xlabel={vapor pressure $p(\tau_\vap)$ in \unit{bar}}, ylabel={liquid pressure $p(\tau_\liq)$ in \unit{bar}},
                  legend pos=south east
                  ]
    \addplot[pattern color=gray, pattern=north east lines, area legend]  table[x expr={\thisrowno{15}}, y expr={\thisrowno{14}}, col sep=comma] {\datafile} \closedcycle;                  
    \addlegendentry{$\s\notin\setR$};
    \addplot[thick, blue]  table[x expr={\thisrowno{1}}, y expr={\thisrowno{0}}, col sep=comma] {\datafile};
    \addlegendentry{$f=0$, $K_1=0$};

    \addplot[thick,forget plot]  table[x expr={\thisrowno{3}}, y expr={\thisrowno{2}}, col sep=comma] {\datafile};
    \addplot[thick]  table[x expr={\thisrowno{5}}, y expr={\thisrowno{4}}, col sep=comma] {\datafile};
    \addlegendentry{$K_2=0$};
    
    \addplot[thick,dashed,forget plot]  table[x expr={\thisrowno{7}}, y expr={\thisrowno{6}}, col sep=comma] {\datafile};
    \addplot[thick,dashed]  table[x expr={\thisrowno{9}}, y expr={\thisrowno{8}}, col sep=comma] {\datafile};
    \addlegendentry{$K_\text{dft}=0$};

    \addplot[thick, red]  table[x expr={\thisrowno{11}}, y expr={\thisrowno{10}}, col sep=comma] {\datafile};
    \addlegendentry{$k_\c: K_3=0$};
    \addplot[thick, magenta]  table[x expr={\thisrowno{13}}, y expr={\thisrowno{12}}, col sep=comma] {\datafile};
    \addlegendentry{$k_\e: K_3=0$};
    
    \addplot[thick,orange,forget plot]      table[x expr={\thisrowno{17}}, y expr={\thisrowno{16}}, col sep=comma] {\datafile};
    \addplot[thick,orange]  table[x expr={\thisrowno{19}}, y expr={\thisrowno{18}}, col sep=comma] {\datafile};
    \addlegendentry{$K_6=0$};
    
    \addplot[thick,green] plot coordinates { (0.7123,1.39) (1.39,1.39) (1.39,2.2) }; 
    \addlegendentry{$K_7=0$};

    \addplot[thick, cyan]  table[x expr={\thisrowno{15}}, y expr={\thisrowno{14}}, col sep=comma] {\datafile} ;
    \addlegendentry{$\s=0$, $K_8=0$};
    
    \addplot[thick, gray, name path=sonic,forget plot]  table[x expr={\thisrowno{21}}, y expr={\thisrowno{20}}, col sep=comma] {\datafile};
    \path[name path=top] (axis cs:0,2.5) -- (axis cs:2,2.5);
    
    \addplot [
        thick,
        color=lightgray,
        fill=lightgray
    ]
    fill between[
        of=sonic and top,
    ];
    \addlegendentry{supersonic};
    
    \addplot[color=black,mark=*, only marks]  table[x expr={\thisrowno{3}}, y expr={\thisrowno{2}}, col sep=comma] {IsoEvaporationDodecaneRS.txt};
    \addlegendentry{measurements};

    \end{axis}
  \end{tikzpicture}    
  \caption{Kinetic relations for n-dodecane with respect to the pressure. Relation $K_2$ for $k^\ast=\unitfrac[5]{m^4}{kg\ s}$ and $K_\text{dft}$ as in Example~\ref{exp:dft}. Kinetic functions $k_\c$ and $k_\e$ result from $K_3$ with $k^\ast=\unitfrac[0.005]{m^6}{kg^2}$. The gray line marks sonic phase boundaries. The black dots mark measured values from the experiment in Subsection~\ref{sub:simoera}.} \label{fig:Kcontour:dodecane}
\end{figure}
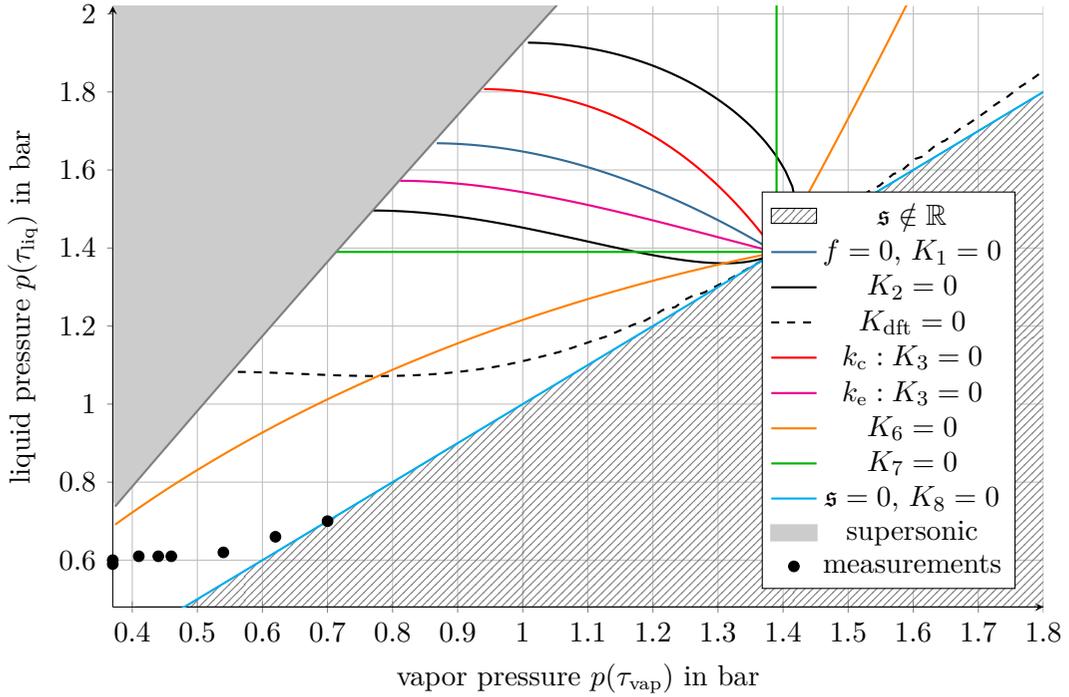

\subsubsection{$K_1$: kinetic relation with zero entropy dissipation}
The kinetic relation $K_1$ has been analyzed in \cite{BENFRE2004,BEN1998}.
We find $\pslash{K_1}{f}=1$, $\pslash{K_1}{\s}=0$ such that the conditions \eqref{eq:K:ex} and \eqref{eq:K:mon} hold. Due to Theorem~\ref{theo:ex:kinfun}, there is a pair of monotone decreasing kinetic functions $(k_\c,\, k_\e)$.

A phase boundary, that satisfies kinetic relation $K_1$, conserves the entropy since $\s\,f=0$ (cf.\ \eqref{eq:entropyjump}) and $k_\e$ is the inverse function of $k_\c$. From a thermodynamic point of view this can be interpreted as a reversible process.

\subsubsection{$K_2$, $K_3$ and $K_4$: kinetic relations with polynomial growth}
The kinetic relation $K_2$ has been suggested in \cite{TRU1993} and has been analyzed in \cite{BEN1999,Kabil2016}.
We find $\pslash{K_2}{f}=1$, $\pslash{K_2}{\s}=-k^\ast$ such that condition \eqref{eq:K:ex} holds for any $k^\ast > 0$ but \eqref{eq:K:mon} is not satisfied. The term $\sqrt{{\jump{\tau} }/{(\surf- \jump{p}})}=1/\abs{\s}$ is infinite in the saturation state ($\s=0$) and monotone decreasing in $\abs{\s}$.
Due to Theorem~\ref{theo:ex:kinfun}, kinetic functions exist but are not monotone decreasing for $k^\ast>0$.  
A pair of monotone decreasing kinetic functions exists only for $k^\ast = 0$ (kinetic relation $K_1$).
Otherwise the kinetic functions are not monotone, see also Figure~\ref{fig:Kcontour:dodecane}.

A specific choice of $k^\ast>0$ in $K_2$, that will lead to consistent results with physical experiments in Subsection~\ref{sub:simoera}, is given by the following example.
\begin{example}[Density functional theory and kinetic relation $K_\text{dft}$]\label{exp:dft}
  Density functional theory is used in \cite{GROSS1008} to compute resistivities for heat transfer and for mass transfer at vapor liquid interfaces.
  The authors assume a correlation between the interfacial mass flux and differences in the chemical potential that is similar to kinetic relation $K_2$. For isothermal one-component fluids the correlation reduces to $\jump{\mu} = -T\,R\,j$, where $R\geq0$ is called interfacial resistivity. That gives for \eqref{eq:iso:freeenergy}, \eqref{eq:f}, \eqref{eq:j-s} and $\surf=0$ the relation 
  \begin{align*}
    \jump{f} + \mean{\tau}\jump{p} = T\,R\,\s.
  \end{align*}
  Note that this is $K_2$ with $k^\ast=T\,R$ up to the term $\mean{\tau}\jump{p}$. The term vanishes in the equilibrium case \eqref{eq:equilibrium} and is small for slow phase boundaries, since $\abs{\s}=\sqrt{\jump{p}/\jump{\tau}}$.
  
  One finds values of $R$ for n-octane in \cite{klink2015analysis}. We assume that the fluids n-octane and n-dodecane behave similar, since both are alkanes. 
  The resistivity values are now used to estimate $k^\ast$ in kinetic relation $K_2$.
  For n-dodecane at \unit[230]{\degree C}, this results in the definition
  \begin{align}\label{eq:dft_dodecane}
   K_\text{dft}(f,\s) := f-\s\,k^\ast_\text{dft} &&\text{with}&& k^\ast_\text{dft}=\unitfrac[28]{m^4}{kg\ s}.
  \end{align}
  As a particular choice of $K_2$, the kinetic functions for $K_\text{dft}$ exist, but they are not monotone decreasing. Thus, Theorem~\ref{theo:kinrel} is not applicable.
  However, Subsection~\ref{sub:simoera} below shows that Riemann solutions can nevertheless be computed by Algorithm~\ref{alg:kinrelsolver}.
\end{example}

Kinetic relations $K_3$ and $K_4$ in Table~\ref{tab:kinrel} were chosen as further examples to satisfy the conditions of Theorem~\ref{theo:ex:pair:kinfun}. They lead to pairs of monotone decreasing kinetic functions, for sufficiently small $k^\ast>0$, and can be used for Algorithm~\ref{alg:kinrelsolver}. Kinetic relations $K_3$ and $K_4$ behave very similar and we consider only $K_3$ in the following.
Note that the parameter $k^\ast>0$ in the kinetic relations $K_2$, $K_3$ and $K_4$ has different physical units.

If one splits the contour lines at the saturation point into two branches, one finds the corresponding kinetic functions for evaporation $k_\e=k_\e(\tau_\vap)$ and condensation waves $k_\c=k_\c(\tau_\liq)$ respectively. In Figure~\ref{fig:Kcontour}, this is shown for $K_3$. Generally, kinetic functions for evaporation waves are located to the right of the curve $K_1=0$ and kinetic functions for condensation waves are located to the left of this curve. This is a consequence of the entropy inequality \eqref{eq:entropyjump}, since $f\,\s\geq0$ holds.

\subsubsection{$K_5$: a non-smooth kinetic relation with multiple static solutions} \label{sub:K5}
Kinetic relations like $K_5$ in Table~\ref{tab:kinrel} are often considered for phase boundaries in solid mechanics (see \cite[Section~4.4]{ABEKNO2006}). 
There, no transition takes place until the driving force $f$ passes a certain threshold $a>0$. If the driving force $f$ is sufficiently small, the phase boundary does not propagate. Note that this involves static phase boundaries, whose end states are not the saturation states.
    
The conditions of Theorem~\ref{theo:ex:kinfun} are satisfied, but for $\abs{f}<a$ condition \eqref{eq:K:mon} is violated. Thus, Theorem~\ref{theo:ex:pair:kinfun} does not apply.
In \cite{ABEKNO1990}, unique Riemann solutions are singled out assuming non-monotone pressure functions that are piecewise linear.

\subsubsection{$K_6$: limit case of a kinetic relation with maximal entropy dissipation}
Because there is no entropy dissipation for $K_1$ and $K_8$ (see Table~\ref{tab:kinrel}), since either $f=0$ or $\s=0$, the kinetic relation with the highest entropy release has to be searched somewhere in between.
The interfacial entropy production is given by the product $\s\,f$, see \eqref{eq:entropyjump}. 
We may derive a kinetic relation with the highest entropy release at constant $\tau_\liq$ or at constant $\tau_\vap$ related to the extreme value of $f(\tau_\liq, \tau_\vap)\,\s_{\c/\e}(\tau_\liq, \tau_\vap)$.
The conditions $\ddfrac{}{\tau_\vap} f\,\s_\c=0$ and $\ddfrac{}{\tau_\liq} f\,\s_\e=0$ lead to the relations
\begin{align*}
  f(\tau_\liq,\tau_\vap) + \s_\c(\tau_\liq,\tau_\vap)^2 \, \jump{\tau}^2 &= 0, &
  f(\tau_\liq,\tau_\vap) - \s_\e(\tau_\liq,\tau_\vap)^2 \, \jump{\tau}^2 &=0.
\end{align*}
Kinetic relation $K_6(f,\s,\tau_\liq,\tau_\vap) = f - \sign(\s) \, \s^2 \, \jump{\tau}^2$ takes both cases into account. Note that $K_6$ needs more arguments.
Figure~\ref{fig:Kcontour:dodecane} shows, that the corresponding kinetic functions for $K_6$ are monotone increasing, thus Theorem~\ref{theo:ex:pair:kinfun} is not applicable.

Note that this kinetic relation does not correspond to the energy rate admissibility criterion in \cite{DAFentropyrate1973,HAT1986}. There, entropy is minimized over a set of admissible Riemann solutions, while here it is minimized over a set of phase boundaries with one fixed end state.

\subsubsection{$K_7$: limit case of a kinetic relation that corresponds to Liu's entropy criterion}\label{sub:Liu}
    Godlewski \& Seguin solved in \cite{GODSEG2006} the one-dimensional two-phase Riemann problem for homogenized pressure laws applying the Maxwell equal area rule. For uniqueness they apply the entropy criterion of Liu \cite{LIU1974}. This was extended to the surface tension dependent case in \cite{JAEROHZER2012}. 
    
    In terms of Definition~\ref{def:thermo} the homogenized pressure law is given by
    \begin{align}\label{eq:pLiu}
     p^\surf &:(\tau_\liq^\tmin,\infty) \to \setR &   p^\surf(\tau) = 
       \begin{cases}
         p(\tau)+\surf & \text{if } \tau \in (\tau_\liq^\tmin, \tau_\liq^\sat ],\\
         p(\tau_\vap^\sat) & \text{if } \tau \in (\tau_\liq^\sat, \tau_\vap^\sat), \\
         p(\tau) & \text{if } \tau \in [\tau_\vap^\sat, \infty).
       \end{cases}
    \end{align}
    Note that $p^\surf$ depends on the surface tension term $\surf$ and that $p(\tau_\liq^\sat)+\surf = p(\tau_\liq^\sat)$, see \eqref{eq:H3}. 
    The two-phase Riemann problem with that pressure law and the entropy criterion of Liu implies a kinetic relation implicitly.
    All subsonic phase boundaries connect to one of the saturation states. This determines the kinetic functions
    \begin{align}\label{eq:kinfun:liu}
      k_\c: \left\{
      \begin{array}{cl}
         [\tau_\liq^\sc,  \tau_\liq^\sat] &\to \admis_\vap, \\
                   \tau_\liq \quad        &\mapsto \tau_\vap^\sat, 
      \end{array}
      \right.
      &&
      k_\e: \left\{ 
      \begin{array}{cl}
         [\tau_\vap^\sat, \tau_\vap^\sc ]&\to \admis_\liq, \\
         \tau_\vap \quad                 &\mapsto \tau_\liq^\sat.
      \end{array}
      \right.
    \end{align}    
    The corresponding kinetic relation is named $K_7$ in Table~\ref{tab:kinrel} and Figure~\ref{fig:Kcontour}.  
    For $K_7$, it is simpler to state the kinetic functions directly.
    The relation was already applied in Example~\ref{exp:kin:1}. 
    
    The kinetic functions are constant and can be seen as the limit case of monotone decreasing functions, since $k_\c'=0$ and $k_\e'=0$.
    They fulfill the conditions of Definition~\ref{def:kinfun} and Theorem~\ref{theo:kinrel} applies.
   
    Note that the difference between the Riemann solution with $K_7$ and the \defemph{Liu Riemann solution} from \cite{GODSEG2006,JAEROHZER2012} is the different underlying pressure function. The Liu Riemann solver uses the homogenized pressure \eqref{eq:pLiu} and not the pressure of Definition~\ref{def:thermo}, which is defined only for bulk phases.
    However, because of $p'(\tau) = p^{\surf\prime}(\tau)$ for $\tau\in(\tau_\liq^\tmin,\tau_\liq^\sat] \cup [\tau_\vap^\sat,\infty)$, both solutions are identical for initial states in stable phases. A proof of that statement can be found in \cite{Z2015}.

\subsubsection{$K_8$: limit case of a kinetic relation for static phase boundaries / zero mass flux}
    The limit case $K_2$ with $k^\ast \to \infty$ leads to the kinetic relation $K_8(f,\s) = -\s$, what means that no entropy is dissipated, since $\s\,f=0$ (cf.\ \eqref{eq:entropyjump}).
    Recall, that the case $k^\ast \to 0$ leads to $K_1$. 
    Theorem~\ref{theo:ex:kinfun} can be applied for $K_8$ but the corresponding kinetic functions are monotone increasing. 
    Phase boundaries, that obey $K_8$, satisfy $\s=0$, $v_\liq=v_\vap$, $p(\tau_\liq)=p(\tau_\vap)$. 
    The kinetic functions are given by 
    \begin{align}\label{eq:kinfunK8}
     k_\c(\tau_\liq) &= p_\vap^{-1}(p(\tau_\liq))& &\text{and}&
     k_\e(\tau_\vap) &= p_\liq^{-1}(p(\tau_\vap)),
    \end{align}
    where $p_\liq^{-1}$ is the inverse function of $p:\admis_\liq\to\setR$ and $p_\vap^{-1}$ is the inverse of $p:\admis_\vap\to\setR$.

    In Eulerian coordinates $\s=j=0$ (cf.\ \eqref{eq:j-s}) means, that there is no mass transfer between the phases. Such a phase boundary may represent material boundaries of different immiscible substances. 
    Riemann solvers for impermeable material boundaries can be found, e.g.\ in \cite{FJS2013}.

\begin{remark}[Entropy dissipation rate for evaporation waves and condensation waves]
  Kinetic relation $K_2,\ldots,K_5$ depend on the parameter $k^\ast$, that controls the amount of entropy dissipation. There is no physical reason why evaporation waves and condensation waves share the same value for $k^\ast$. The parameter could also depend on the sign of $\s$, but this is not considered here.  
\end{remark}

\subsection{Riemann solvers for non-decreasing kinetic functions}\label{sub:increasingkinfun}
  We presented several examples of kinetic relations, which lead to non-decreasing kinetic function, such that Theorem~\ref{theo:kinrel} is not applicable. However, it is remarkable that unique Riemann solutions may still exist, see e.g.\ $K_\text{dft}$ in Subsection~\ref{sub:simoera}. Further examples are $K_7$ and $K_8$:
  
  The arguments for $K_8$ are rather simple because the related generalized Lax curves are strictly monotone.
  The kinetic functions \eqref{eq:kinfunK8} are monotone increasing and such that the pressure is equal in both end states. 
  That means the value of the Lax curve is the same as in the metastable phase.
  More precisely 
  \begin{alignat*}{3}
   \mathcal{L}_1(\tau_\L,\tau_\vap) &= \mathcal{L}_1(\tau_\L,k_\e(\tau_\vap)) 
   &&\quad\text{ for }  \tau_\vap\geq \tau_\vap^\sat, &\text{ since } p(k_\e(\tau_\vap))&=p(\tau_\vap) \text{ and} \\
   \mathcal{L}_2(\tau_\liq,\tau_\R) &= \mathcal{L}_2(k_\c(\tau_\liq),\tau_\R) 
   &&\quad\text{ for }  \tau_\liq\leq \tau_\liq^\sat, &\text{ since } p(k_\c(\tau_\liq))&=p(\tau_\liq).
  \end{alignat*}
  The domain of definition for such Lax curves is restricted since we cannot expect that the pressure function provides for any pressure value in a stable phase a corresponding metastable volume value with the same pressure. Furthermore, attached waves are excluded due to the zero propagation speed of the phase boundary. However, as long as the Lax curves exist, they are monotone.
  
  For $K_7$, the corresponding kinetic functions \eqref{eq:kinfun:liu} are constant, what is related to the extreme case of a monotone function. But even in this case, the Lax curves are by far not constant (see Figure~\ref{fig:kin:1} (right)), that would be the crucial limit for monotonicity. We believe therefore, that considering monotone decreasing kinetic functions is too restrictive and not necessary for unique two-phase Riemann solutions.

\subsection{Comparative study of Riemann solutions obeying different kinetic relations} \label{sub:kin:examples}
We apply different kinetic relations, or the related pairs of kinetic functions, to the Riemann solver of Section~\ref{sec:solver}.
In order to distinguish two-phase Riemann solutions, we write \defemph{$K_n$-Riemann solution} if the contained phase boundary satisfies one of the kinetic relations $K_n$ in Table~\ref{tab:kinrel}. We will consider the kinetic relations $K_1$, $K_3$ and $K_7$. For them, Theorem~\ref{theo:kinrel} guarantees unique solvability.

\begin{example}[Influence of different kinetic relations]
This example illustrates the effect of different kinetic relations.
We use the van der Waals pressure of Example~\ref{exp:vdw} and initial conditions
\begin{align*}
  \U(\xi, 0) = \begin{cases}
               (0.57,0)^\transp & \text{for } \xi\leq0, \\
               (50,0)^\transp & \text{for } \xi >0,
             \end{cases}
\end{align*}
such that the liquid state is in the metastable phase. 
The solid lines in Figure~\ref{fig:exact:entropy} show Riemann solutions for $\surf=0$ and different kinetic relations. All solutions are composed of a shock wave followed by an evaporation wave with attached rarefaction wave and a shock wave. In terms of the notation in Table~\ref{tab:KL1} and Table~\ref{tab:KL2} the solution is composed of wave type~\klwave{3} and type~\krwave{6}. 
We see that the pressure in the liquid phase is higher for phase boundaries that dissipate more entropy, while the propagation speed becomes slower.

Furthermore, the example illustrates the difference to the Liu Riemann solution, which uses the homogenized pressure law \eqref{eq:pLiu}, see Subsection~\ref{sub:Liu}.
The Liu Riemann solution is plotted with a dashed line in Figure~\ref{fig:exact:entropy} and differs from the $K_7$-Riemann solution, since the liquid initial states is in the metastable phase. 

\end{example}

\begin{figure}\centering
\setlength{\includewidth}{7cm}
\begin{tikzpicture}[scale=0.9]
	\node [anchor=south west] (O) at (0,0) {\includegraphics[width=0.9\includewidth]{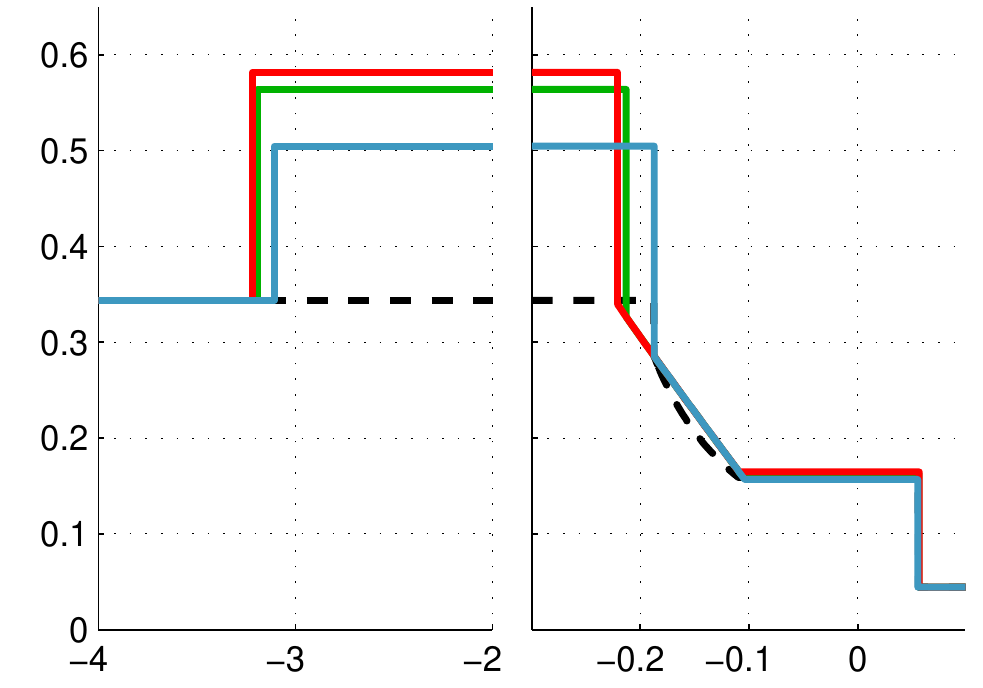}};
	\node at (7,0.3) {$\xi$};
	\node at (1.1,5){$p$};
	\node [above right, draw=black, fill=white] at (0.9,0.6){%
	  \begin{footnotesize}
	  \begin{tabular}{@{}c@{}l@{}}
	     \textcolor{black}{\markdashed} & Liu sol. \\	  
	     \textcolor{green}{\markline}   &  $K_3$, $k^\ast=1$ \\
	     \textcolor{red}{\markline}     &  $K_1$  \\
	     \textcolor{blue}{\markline}    &  $K_7$  
	   \end{tabular}
	  \end{footnotesize} 
	  };
\end{tikzpicture}
\hspace{\hfloatsep}
\begin{tikzpicture}[scale=0.9]
	\node [anchor=south west] (O) at (0,0) {\includegraphics[width=0.9\includewidth]{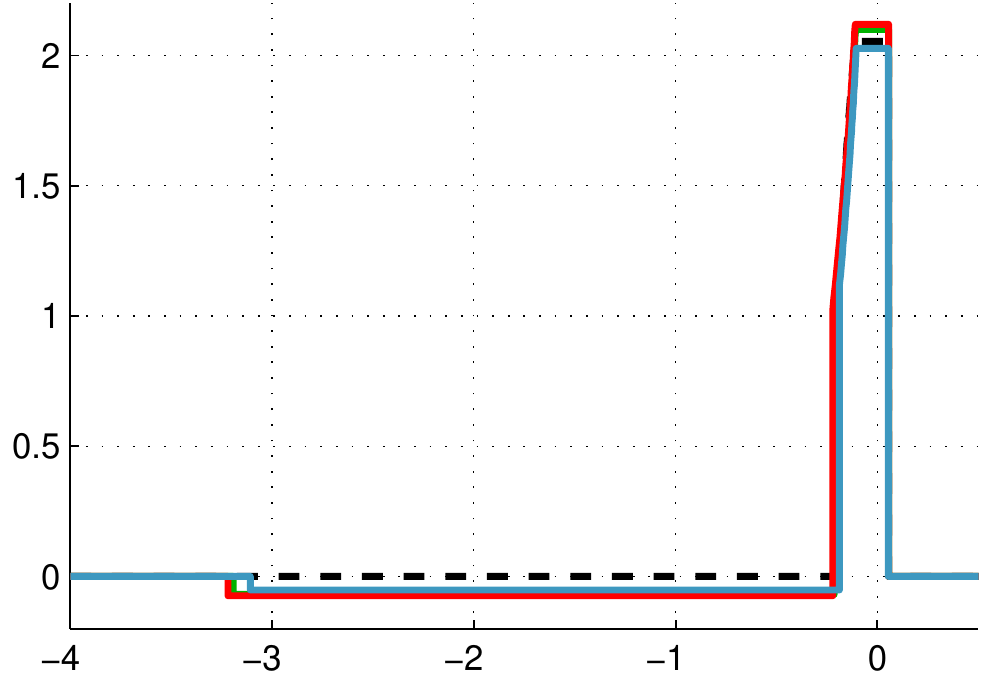}};
	\node at (7,0.3) {$\xi$};
	\node at (1,5){$v$};
	\node [below right, draw=black, fill=white] at (1.5,5){%
	  \begin{footnotesize}
	  \begin{tabular}{@{}c@{}l@{}}
	     \textcolor{black}{\markdashed} & Liu sol. \\	  
	     \textcolor{green}{\markline}   &  $K_3$, $k^\ast=1$ \\
	     \textcolor{red}{\markline}     &  $K_1$  \\
	     \textcolor{blue}{\markline}    &  $K_7$  
	   \end{tabular}
	  \end{footnotesize} 
	  };
\end{tikzpicture}
\caption{Liu Riemann solution (dashed line) and Riemann solution (solid lines) with different kinetic relations. The left figure shows the pressure and the right one the velocity as function of the Lagrangian space variable at time $t=1$. }            \label{fig:exact:entropy}
\end{figure}

\begin{example}[Static solutions and influence of the surface tension term $\surf$]\label{exp:exact:surface}
This example intends to check the basic property, that thermodynamic equilibrium solutions are preserved.
The saturation states $\tau_\liq^\sat\approx 0.55336$, $\tau_\vap^\sat\approx3.1276$ for the van der Waals pressure of Example~\ref{exp:vdw} with $\surf=0$ are used as initial states 
\begin{align*}
  \U_0(\xi) = \begin{cases}
               (\tau_\liq^\sat,0)^\transp & \text{for } \xi\leq0, \\
               (\tau_\vap^\sat,0)^\transp & \text{for } \xi >0
             \end{cases}
\end{align*}
and we apply the kinetic relation $K_7$.
The red line in Figure~\ref{fig:exact:surface} shows that the $K_7$-Riemann solution and initial condition are identical. 
Note that this holds for all kinetic relations in Table~\ref{tab:kinrel}, since $K(0,0)=0$ and $f(\tau_\liq^\sat,\tau_\vap^\sat)=0$, $\s_{\c/\e}(\tau_\liq^\sat,\tau_\vap^\sat)=0$.

That changes for $\surf\neq0$. Figure~\ref{fig:exact:surface} shows also the $K_7$-Riemann solution for $\surf = \pm 0.01$. 
The $K_7$-Riemann solution for $\surf =-0.01$ is a composition of a shock wave followed by an evaporating wave with speed $\s\approx -0.004$ and another shock wave, respectively a composition of wave type~\klwave{2} and \krwave{1}. 
For $\surf =0.01$ we find a rarefaction wave followed by a condensation wave with speed $\s\approx 0.004$ and another rarefaction wave, respectively a composition of wave type~\klwave{1} and \krwave{4}. 

One can interpret the examples with $\surf\neq0$ as considering a spherical bubble or droplet of the same radius with the same pressure and Gibbs free energy inside and outside. 
In both cases, the radius decreases in order to compensate the pressure difference due to the Young-Laplace law. Note that due to that law, the pressure inside a static bubble or droplet is higher than outside.
\end{example}

\begin{figure}\centering
\setlength{\includewidth}{7cm}
\begin{tikzpicture}[scale=0.9]
	\node [anchor=south west] (O) at (0,0) {\includegraphics[width=0.9\includewidth]{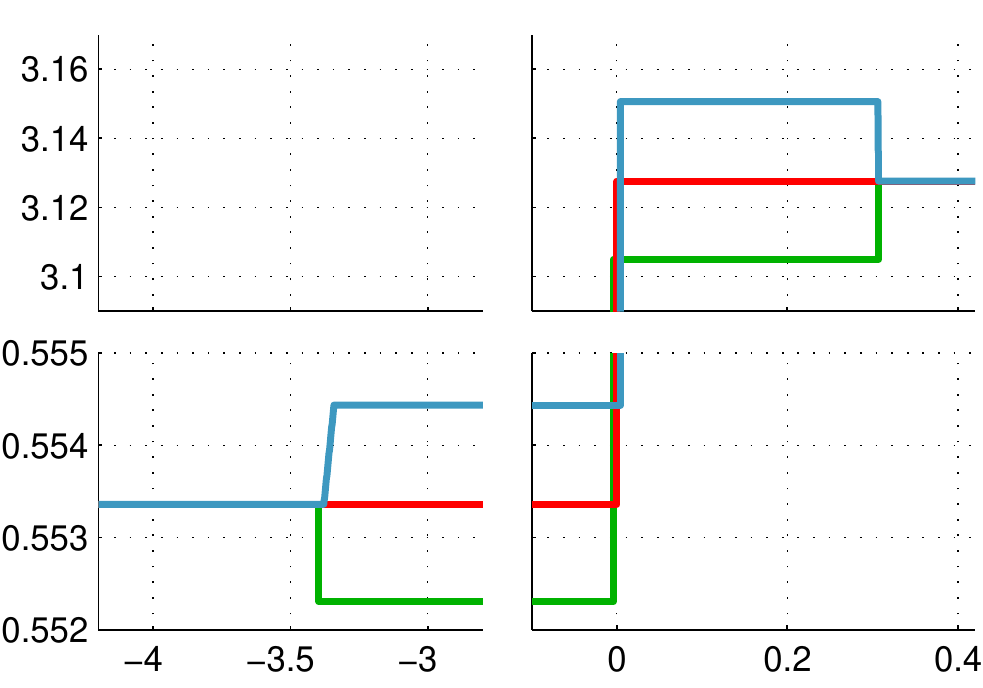}};
	\node at (7,0.7) {$\xi$};
	\node at (1,5){$\tau$};
	\node [below right, draw=black, fill=white] at (1.5,5){%
	  \begin{footnotesize}
	  $\begin{array}{@{}c@{}c@{}c@{}r@{}}
	     \textcolor{green}{\markline} &  \surf &=& -0.01 \\
	     \textcolor{red}{\markline} &  \surf &=& 0  \\
	     \textcolor{blue}{\markline} &  \surf &=& 0.01  
	   \end{array}$
	  \end{footnotesize} 
	  };
\end{tikzpicture}
\hspace{\hfloatsep}
\begin{tikzpicture}[scale=0.9]
	\node [anchor=south west] (O) at (0,0) {\includegraphics[width=0.9\includewidth]{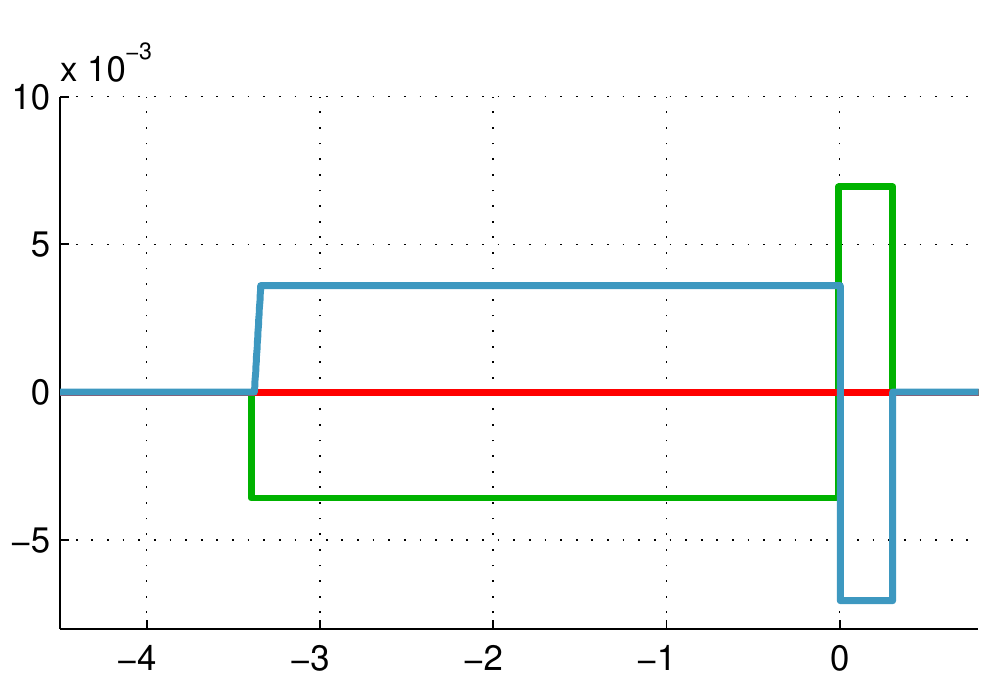}};
	\node at (7,0.9) {$\xi$};
	\node at (1,5){$v$};
	\node [below right, draw=black, fill=white] at (1.5,5){%
	  \begin{footnotesize}
	  $\begin{array}{@{}c@{}c@{}c@{}r@{}}
	     \textcolor{green}{\markline} &  \surf &=& -0.01 \\
	     \textcolor{red}{\markline} &  \surf &=& 0  \\
	     \textcolor{blue}{\markline} &  \surf &=& 0.01 
	   \end{array}$
	  \end{footnotesize} 
	  };
\end{tikzpicture}
\caption{$K_7$-Riemann solutions for different surface tension terms. The left figure shows the specific volume and the right one the velocity as function of the Lagrangian space variable at time $t=1$. }            \label{fig:exact:surface}
\end{figure}

\subsection{Validation with shock tube experiments}\label{sub:simoera}
We compare the Riemann solvers against the shock tube experiments of Simoes-Moreira \& Shepherd in \cite{Simdodecane1999}.
In their experiments liquid n-dodecane was relaxed into a low pressure reservoir. Initially, the liquid was at saturation pressure and the vapor pressure varies between almost vacuum and the saturation pressure. They observed stable evaporation fronts of high velocity.

We consider here only the series of experiments at constant temperature $T=\unit[230]{\degree C}$ and we compare the measured (planar) evaporation front speed of the experiment with data from Riemann solutions. We assume that the dissipation rate $k^\ast$ in kinetic relation $K_2$ or $K_3$ involves temperature\footnote{Density functional theory, cf.\ Example~\ref{exp:dft}, predicts temperature dependent resistivities.}. Thus, the isothermal series allows us to use the same value of $k^\ast$ for all test cases.

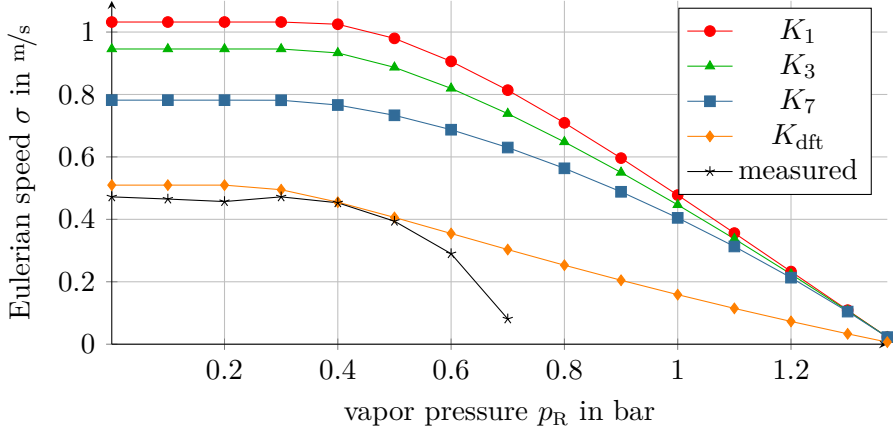
\begin{figure} \centering  
  \begin{tikzpicture}
    \begin{axis}[ x post scale = 1.5, y post scale=0.8,
                  ymin=0, ymax=1.1,
                  xlabel={vapor pressure $p_\R$ in \unit{bar}}, ylabel={Eulerian speed $\sigma$ in \unitfrac{m}{s}},
                  ytick={0,0.2,0.4,0.6,0.8,1},
                  ]
    \addplot[color=red,mark=*]  table[x expr={\thisrowno{1}}, y expr={\thisrowno{7}}, col sep=comma] {IsoEvaporationDodecaneRS.txt};
    \addlegendentry{$K_1$}   
    
    \addplot[color=green,mark=triangle*]  table[x expr={\thisrowno{1}}, y expr={\thisrowno{8}}, col sep=comma] {IsoEvaporationDodecaneRS.txt};
    \addlegendentry{$K_3$}       
    
    \addplot[color=blue,mark=square*]  table[x expr={\thisrowno{1}}, y expr={\thisrowno{9}}, col sep=comma] {IsoEvaporationDodecaneRS.txt};
    \addlegendentry{$K_7$}       
    
    \addplot[color=orange,mark=diamond*]  table[x expr={\thisrowno{1}}, y expr={\thisrowno{10}}, col sep=comma] {IsoEvaporationDodecaneRS.txt};
    \addlegendentry{$K_\text{dft}$}        
    
    \addplot[mark=star]  table[x expr={\thisrowno{1}}, y expr={\thisrowno{4}}, col sep=comma] {IsoEvaporationDodecaneRS.txt};
    \addlegendentry{measured}     
    
    \end{axis}
  \end{tikzpicture}    
  \caption{Comparison of evaporation front speeds for different initial vapor pressure values $p_\R$. In black, the measured values from \cite{Simdodecane1999}. The colored lines refer to interface speeds of two-phase Riemann solutions.} \label{fig:isop:dodecane:RP}
\end{figure}

The experiment shows stable evaporation fronts until a vapor pressure of $p_\R=\unit[0.7]{bar}$. Figure~\ref{fig:isop:dodecane:RP} shows the measured front speed for different values of $p_\R$. At higher pressure values, there was either no evaporation process starting or they observed a train of bubbles and unstable waves. The first case corresponds to zero transition speed. In the second case no evaporation front could be determined. Our special interest lies on the test cases which led to stable evaporation fronts, i.e.\ the range $\unit[0]{bar} \leq p_\R \leq \unit[0.7]{bar}$, in order to compare front speeds.

The initial conditions for the Riemann problems are $\tau_\L=\tau_\liq^\sat$ ($p_\L=p(\tau_\liq^\sat)\approx\unit[1.39]{bar}$) and different values for $\tau_\R$, such that the vapor pressure varies from $p_\R=\unit[1.37]{bar}$ to almost vacuum. The initial velocity is zero on both sides. 
The thermodynamic properties of n-dodecane are calculated with the library CoolProp \cite{CoolProp}.

Figure~\ref{fig:isop:dodecane:RP} shows the propagation speed in Eulerian coordinates of the evaporation wave for the kinetic relations $K_1$, $K_3$, $K_7$ and $K_\text{dft}$. The constant for $K_3$ is $k^\ast=\unitfrac[0.005]{m^6}{kg^2}$ and the corresponding kinetic functions are monotone decreasing. 
For $K_\text{dft}$, Theorem~\ref{theo:kinrel} is not applicable, however, we checked numerically that the corresponding Lax curves are monotone such that $K_\text{dft}$-Riemann solutions exist uniquely. The kinetic relations under consideration are shown in Figure~\ref{fig:Kcontour:dodecane}. 

We compare the solutions with the shock tube experiments.   
For vapor pressure values from almost vacuum to $\unit[0.4]{bar}$, the measured front speed values, as well as, the speed predicted by the two-phase Riemann solver are constant. For lower pressure values the front speeds are decreasing.
  
The measured front speed is close to zero around $\unit[0.7]{bar}$. 
The propagation speeds, computed via the two-phase Riemann solvers, are decreasing much slower. They reach the value $\sigma=0$ for $p_\R=p^\sat$. That reflects the fact that here only thermodynamic equilibrium solutions are static. A behavior, as in the experiment, would require a kinetic relation, in which the mass flux is zero until a certain threshold is passed. Such a kinetic relation is described in Subsection~\ref{sub:K5}. 
Recall that the authors observed unstable waves and bubbly flows for $p_\R>\unit[0.7]{bar}$. Such flows are not comparable with the solutions of Riemann problems.
  
Let us concentrate again on the range $\unit[0]{bar} \leq p_\R \leq \unit[0.7]{bar}$, where Simoes-Moreira and Shepherd observed stable evaporation fronts.
It is remarkable that the propagation speed values of $K_\text{dft}$-Riemann solutions match the measured vales. Note that there is no parameter that could be tuned. 
The propagation speeds for the kinetic relations $K_1$, $K_3$ and $K_7$ are faster than those of the experiment. The difference reduces, with rising entropy dissipation.
The comparison demonstrates, that for this experiment non-decreasing kinetic functions, e.g.\ $K_\text{dft}$, are necessary to predict the correct propagation speed.

The authors measured also the pressure near the evaporation front. This is used for a second study.
Assume for a moment that the measured values are comparable to the end states at the phase boundary. 
The measured pressure values ($P_\text{bottom}$ and $P_\text{exit}$ in \cite{Simdodecane1999}) are plotted into Figure~\ref{fig:Kcontour:dodecane} with black dots. The dots are far from what we can reach with monotone decreasing kinetic functions. 
A kinetic function, that is fitted to the measured values and the saturation state, would be a non-decreasing function.
Note that the liquid pressure values correspond to the liquid metastable phase and phase boundaries with such end states are generally excluded by monotone decreasing kinetic functions, see Corollary~\ref{col:metastable:waves}.


\section{Application of the two-phase Riemann solvers in interface tracking schemes and verification}\label{sec:numerics}
As mentioned in the introduction, one of the applications of two-phase Riemann solvers are numerical schemes of tracking type. 
Such \defemph{interface tracking schemes} involve a tracking of the phase boundary and the computation of fluxes from the liquid phase to the vapor phase and vice versa. Like in Godunov type schemes, Riemann solvers, i.e.\ mapping \eqref{eq:microm}, are applied at edges which are identified with the phase boundary, in order to compute the interfacial flux.
A \defemph{bulk solver}, e.g.\ a finite volume or discontinuous Galerkin method, is then used to solve the Euler system in the bulk.
We analyze this approach with the scheme described in \cite{ROHZER2014} for one-dimensional and radially symmetric solutions of \eqref{eq:euler}-\eqref{eq:Euler:kinrel}. In the radially symmetric framework it is possible to take into account curvature effects without requiring a complex computation of the curvature. Furthermore, the scheme in \cite{ROHZER2014} is conservative. It bases on a first order finite volume method with local grid adaption at the interface and serves as a test environment for two-phase Riemann solvers.

Section~\ref{sec:solver} provides a constructive algorithm to determine two-phase Riemann solutions for kinetic functions and surface tension. In one space dimension (without surface tension), this is also the exact solution. We are now able to verify the interface tracking approach. This was kept open in \cite{ROHZER2014}, since no exact solution was available. Furthermore, two previously developed (approximate) Riemann solvers will be analyzed: 
the \defemph{Liu (Riemann) solver} from \cite{GODSEG2006,JAEROHZER2012}, see Subsection~\ref{sub:Liu}, and 
an approximate Riemann solver for general kinetic relations \eqref{eq:Euler:kinrel} based on relaxation techniques \cite{ROHZER2014}. We called the latter one \defemph{relaxation $K_n$-(Riemann) solver} if the considered relation is $K_n$. All Riemann solvers are mappings of type \eqref{eq:microm}. In order to distinguish the different two-phase solvers, we call Algorithm~\ref{alg:kinrelsolver} \defemph{(exact) $K_n$-Riemann solver}.

The Riemann solver of Subsection~\ref{sub:solveralgo} is implemented for $K_1$, $K_3$ and $K_7$. 
The relaxation Riemann solver \cite{ROHZER2014} applies kinetic relations directly and is less restrictive. Implementations for $K_1$, $K_2$ and $K_3$ are available.
The Liu solution is considered as an approximate solution of the two-phase Riemann problem, since it applies the modified (homogenized) equation of state \eqref{eq:pLiu}.
Thus, we treat the Liu solver as an approximate solver for kinetic relation $K_7$, cf.\ Subsection~\ref{sub:Liu}.

We refer to the space in Eulerian coordinates and transform the output of the Riemann solver mapping \eqref{eq:microm} to that coordinates.
For the numerical flux computation in the bulk phases, we use the local Lax-Friedrichs flux \cite{LEV2007}. Unless otherwise specified, we apply a CFL-like time step restriction with the $\cfl$ number $0.9$, details are described in \cite{ROHZER2014}. The examples apply either the dimensionless van der Waals pressure of Example~\ref{exp:vdw} or equations of state that are provided by the thermodynamic library CoolProp \cite{CoolProp}.

\subsection{Experimental order of convergence}\label{sub:eoc:exact}

\qquad
We consider radially symmetric solutions $\W=\big(\rho(r,t), m(r,t)\big)^\transp$, $r=\abs{\vv x}$, of the Euler system \eqref{eq:euler} in the domain $\Omega=\set{\vv x\in\setR^d| R_\tmin< \abs{\vv x} < R_\tmax}$ and the initial data
\begin{align}\label{eq:rp:radial}
 \W(r,0)  = \begin{cases}
              \W_\L &:  r \in [ R_{\tmin}, \gamma^0 ),\\
              \W_\R &:  r \in [ \gamma^0, R_{\tmax} ].
            \end{cases}
\end{align}
The states $\W_\L\in\admisr_{\liq/\vap}\times\setR$ and $\W_\R\in\admisr_{\vap/\liq}\times\setR$ are constant and in different phases, where $\admisr_{\liq}$, $\admisr_{\vap}$ are the admissible sets for the density corresponding to $\admis_{\liq}$, $\admis_{\vap}$ in Definition~\ref{def:thermo}. Thus, the phase boundary is initially located at $\gamma^0$.
Note that in one spatial dimension \eqref{eq:rp:radial} defines a Riemann problem. The set $[R_{\tmin}, R_{\tmax}]$ is just an interval for any $R_{\tmin}\in\setR$. 
The domain in the multidimensional case is a disc or a ball with a hole in the center, since $R_{\tmin}>0$. The hole is due to a singularity of the radially symmetric system in $r=0$, see \cite{ROHZER2014}. 
Domain $\Omega\subset\setR^d$ and time interval $[0, \theta]$ are chosen such that the waves originating in $\gamma^0$ do not reach the boundary. 
Furthermore, we use the boundary condition $\W(R_{\tmin},t)=\W_\L$, $\W(R_{\tmax},t)=\W_\R$ for $t\in[0, \theta]$.

With respect to a  \defemph{reference solution} $\hat \W = (\hat \rho, \hat m)^\transp$, we compute the \defemph{relative error}
\begin{align*}
  e_I = \int_0^\theta \int_{R_{\tmin}}^{R_{\tmax}} A_d(r)  
  \left( \frac{\abs{\rho_I-\hat\rho }}{1+\abs{\hat\rho }} + \frac{\abs{m_I-\hat m }}{1+\abs{\hat m }} \right) \dd r \dd t,
\end{align*}
where $(\rho_I,m_I)^\transp$ is the numerical solution on a grid with $I\in\setN$ cells and $A_d(r)$ is the volume of a $d$-dimensional sphere with radius $r > 0$.

For a sequence of grids with $I_l\in\setN$ cells and corresponding relative errors $e_{I_l}$ 
we compute the \defemph{experimental order of convergence}
 $\text{eoc}_l := {  \ln\left(e_{I_{l+1}} \big/ e_{I_l}\right) }/{ \ln\left(I_{l}  \big/ I_{l+1}\right)}$.
The number $I$ is also the degree of freedom for the bulk solver.
The optimal order that can be expected in view of the first-order scheme and solutions, that contain discontinuities is between $0.5$ and $1$, cf.\ \cite{LEV2007}.

\subsubsection{Verification of the interface tracking approach in 1D} \label{sub:eoc:1d:exact}

In one space dimension, the solution of the Riemann problem \eqref{eq:rp:radial} is given by the exact $K_n$-Riemann solver after the transformation to Eulerian coordinates. Thus, it is considered as reference solution. This framework allows us to examine convergence towards the exact solution. Note that this was not possible in \cite{ROHZER2014}, since no exact solutions was available. 
We consider kinetic relation $K_3$ with $k^\ast=\unitfrac[0.005]{m^6}{kg^2}$ and an equation of state for the fluid n-dodecane at $T=\unit[230]{\degree C}$, provided by the library CoolProp \cite{CoolProp}. 

Table~\ref{tab:eoc:exact} shows the experimental order of convergence for the conditions (A) and (B) in Table~\ref{tab:eoc:dodecane:ic}. 
The order is in the expected optimal range in view of a first order scheme.
Here, the initial densities $\rho_\L\in\admisr_\liq$, $\rho_\R\in\admisr_\vap$ are computed, such that the pressure values in column $p_\L$ and column $p_\R$ hold initially. 
Note that such conditions were already used in Subsection~\ref{sub:simoera}.

Figure~\ref{fig:testg} displays the pressure distribution of test case (A) at time $t=\unit[0.8]{ms}$.
It shows the numerical solution for a sequence of refined grids and the exact $K_3$-Riemann solution.
The solution is a composition of a 1-shock wave, an evaporation wave, followed by a 2-shock wave. The phase boundary is tracked sharply and the bulk shock waves are approximated very well. 
Note that this example is more challenging that tests cases for the van der Waals fluid, since the pressure in the liquid phase is much stiffer that in the vapor phase. For instance, one finds for the initial states $p'(\tau_\L)\approx \unitfrac[-10^6]{bar\ kg}{m^3}$ in the liquid phase and $p'(\tau_\R)\approx \unitfrac[-0.65]{bar\,kg}{m^3}$ in the vapor phase. 
Furthermore set $\admis_\liq$ is extreme small compared to the spinodal phase.

\begin{table}
\setlength{\tabcolsep}{1.2\tabcolsep}
\newcommand{\midrules}{\cmidrule(lr){2-6}\cmidrule(l){7-11}}
\begin{tabularx}{\textwidth}{*{11}{l}L}
 \toprule
 	& $p_\L$		& $p_\R$		& $v_\L$& $v_\R$ & $K$ 		& $\theta$ 		& $\gamma^0$ 	& $R_\tmin$		& $R_\tmax$ 	& d &\\ \midrules	
 (A)	&\unit[1.39]{bar}	&\unit[0.4]{bar}	&   0   &   0    & $K_3$	& \unit[0.8]{ms} 	& \unit[0.7]{m}	&\unit[0.0]{m}		&\unit[1.0]{m}	& 1 &\\
 (B)	&\unit[1.39]{bar}	&\unit[1.0]{bar}	&   0   &   0    & $K_3$	& \unit[0.8]{ms} 	& \unit[0.7]{m}	&\unit[0.0]{m}		&\unit[1.0]{m}	& 1 &\\ 
 (C)	&\unit[0.098]{bar}	&\unit[0.13]{bar}	&   0   &   0    & $K_3$	& \unit[0.03]{ms} 	& \unit[0.05]{m}&\unit[0.033]{m}	&\unit[0.07]{m}	& 2 &\\
 (D)	&\unit[0.098]{bar}	&\unit[0.13]{bar}	&   0   &   0    & $K_3$	& \unit[0.03]{ms} 	& \unit[0.1]{m}	&\unit[0.066]{m}	&\unit[0.14]{m}	& 3 &\\
 \bottomrule
\end{tabularx}
\caption{Series of initial conditions for n-dodecane in $d$ spatial dimensions. The parameter for $K_3$ is $k^\ast=\unitfrac[0.005]{m^6}{kg^2}$. }\label{tab:eoc:dodecane:ic}
\end{table}

\begin{table}
\newcommand{\midrules}{\cmidrule(r){1-1}\cmidrule(lr){3-4}\cmidrule(lr){6-7}\cmidrule(lr){9-9}\cmidrule(lr){11-12}\cmidrule(l){14-15}}
\begin{tabularx}{\textwidth}{*{7}lX*{7}l}
 \toprule
&& \multicolumn{2}{l}{Test (A)}	&& \multicolumn{2}{l}{Test (B)} && && \multicolumn{2}{l}{Test (C)}	&& \multicolumn{2}{l}{Test (D)}\\
  $I$ 	&& $e_I$\hspace{3em} & eoc && $e_I$\hspace{3em} & eoc 	&& $I$ 	&& $e_I$\hspace{3em} & eoc && $e_I$\hspace{3em} & eoc \\ \midrules
 500 && 5.9e-04 &      && 8.6e-04 &       &&   200  && 4.0e-07 &      && 2.6e-07 &       \\    
1000 && 3.7e-04 & 0.68 && 4.8e-04 & 0.84  &&   400  && 2.0e-07 & 1.00 && 1.3e-07 & 0.97  \\
2000 && 2.2e-04 & 0.76 && 2.6e-04 & 0.88  &&   800  && 8.5e-08 & 1.23 && 5.8e-08 & 1.21  \\
4000 && 1.2e-04 & 0.88 && 1.4e-04 & 0.91  &&   1600 && 2.7e-08 & 1.65 && 1.8e-08 & 1.67  \\
8000 && 6.2e-05 & 0.91 && 7.9e-05 & 0.81  &&   3200 && 4.0e-09 & 2.77 && 2.3e-09 & 3.01  \\  
 \bottomrule
\end{tabularx}
\caption{Error analysis for the front tracking scheme with the exact $K_3$-Riemann solver, see Subsection~\ref{sub:eoc:1d:exact} and Subsection~\ref{sub:eoc:exact:multidim}. }\label{tab:eoc:exact}
\end{table}

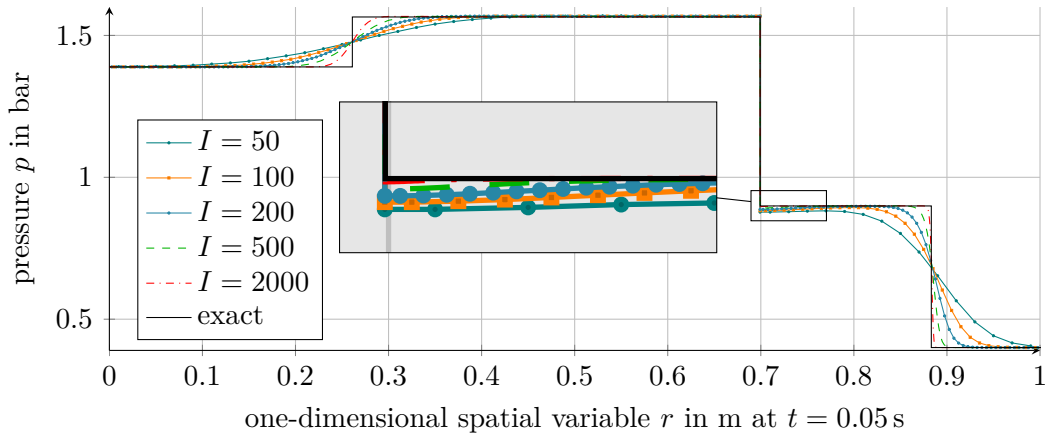
\begin{figure} \centering  
  \begin{tikzpicture}[new spy style]
    \begin{axis}[ x post scale = 1.8, y post scale=0.8,
                  xmin=0, xmax=1,
                  ymin=0.39, ymax=1.6,
                  xlabel={one-dimensional spatial variable $r$ in \unit{m} at $t=\unit[0.05]{s}$}, ylabel={pressure $p$ in \unit{bar}},
                  legend pos=south west,
                  clip marker paths=true,
                  mark size = 0.4,
                  cycle list name=exotic,
                  legend cell align=left,
                  ]
    \addplot  table[x expr={\thisrowno{0}}, y expr={1e-5*\thisrowno{1}}, col sep=comma] {EOCdsolseq_K3_p4000.txt};
    \addlegendentry{$I=50$};                  
    \addplot  table[x expr={\thisrowno{2}}, y expr={1e-5*\thisrowno{3}}, col sep=comma] {EOCdsolseq_K3_p4000.txt};
    \addlegendentry{$I=100$};  
    \addplot  table[x expr={\thisrowno{4}}, y expr={1e-5*\thisrowno{5}}, col sep=comma] {EOCdsolseq_K3_p4000.txt};
    \addlegendentry{$I=200$};  
    \addplot[dashed, no marks, green]  table[x expr={\thisrowno{6}}, y expr={1e-5*\thisrowno{7}}, col sep=comma] {EOCdsolseq_K3_p4000.txt};
    \addlegendentry{$I=500$};   
    \addplot[dashdotted, no marks, red]  table[x expr={\thisrowno{8}}, y expr={1e-5*\thisrowno{9}}, col sep=comma] {EOCdsolseq_K3_p4000.txt};
    \addlegendentry{$I=2000$};   
    \addplot[color=black]  table[x expr={\thisrowno{10}}, y expr={1e-5*\thisrowno{11}}, col sep=comma] {EOCdsolseq_K3_p4000.txt};
    \addlegendentry{exact};    
    
    \coordinate (spypoint)  at (axis cs:0.73,0.9); 
    \coordinate (spyviewer) at (axis cs:0.45,1.0);      
    \end{axis}
    
    \spy[ width=5cm,height=2cm,magnification=5] on (spypoint) in node at (spyviewer);  

  \end{tikzpicture}    
  \caption{Pressure distribution for test case (A) with n-dodecane fluid. In color the numerical solution with $K_3$-Riemann solver and in black the (exact) $K_3$-Riemann solution. } \label{fig:testg}
\end{figure}

\subsubsection{Verification of the interface tracking approach for radially symmetric solutions}\label{sub:eoc:exact:multidim}

Exact radially symmetric solutions are not available. For simplicity and in order to visualize the wave structure let us use the same initial data \eqref{eq:rp:radial}. 
But here the reference solution is the approximation itself on a fine grid, here $I=6400$ cells. Thus, we are merely able to examine grid convergence.

We consider an n-dodecane bubble in liquid n-dodecane with the initial states (C) and (D) in Table~\ref{tab:eoc:dodecane:ic}.
Test case (C) is considered in $\setR^2$ and (D) is considered in $\setR^3$.
The resulting time step can get very low for small values of $R_\tmin$ due to the CFL condition, see \cite{ROHZER2014}. This limits the size of the computational domain and thus also the diameter of bubbles or droplets. For that reason, we consider quite big bubbles.
The time step for $I=6400$ and $\cfl=0.9$ was in the order of $\unit[10^{-10}]{s}$.

The surface tension at $T=\unit[230]{\degree C}$ is $\surfcoeff=\unitfrac[0.0089]{N}{m}$ (computed with \cite{CoolProp}).
Due to the chosen bubble radii, surface tension does not affect the dynamics in these examples.
Note that the initial pressure values are far from the saturation pressure, here $p^\sat\approx \unit[1.39]{bar}$, and the liquid state is metastable.

Table~\ref{tab:eoc:exact} shows the error and the experimental order of convergence for the kinetic relation $K_3$ in the cases (C), (D).
The computed order varies between $1$ and $3$. 
Figure~\ref{fig:testi} displays the pressure distribution on those grids, which were used for the error analysis. 
This demonstrates that the numerical solution converges with increasing grid resolution towards the finest solution.
Note that plateau values do not form due to the intrinsic geometry change in $r$.
Any fluid movement towards the center accumulates mass, while for flows in direction of the outer boundary mass is distributed over increasing volume units. Thus, the pressure between $r=0.05$ and $r=0.065$ is not constant.

We have already seen in Subsection~\ref{sub:eoc:1d:exact} that for $d=1$, the method converges to the exact solution. Here, we observed grid convergence for a real fluid equations of state. Hence, we expect that the method converges, also in the multidimensional case, towards the exact solution.

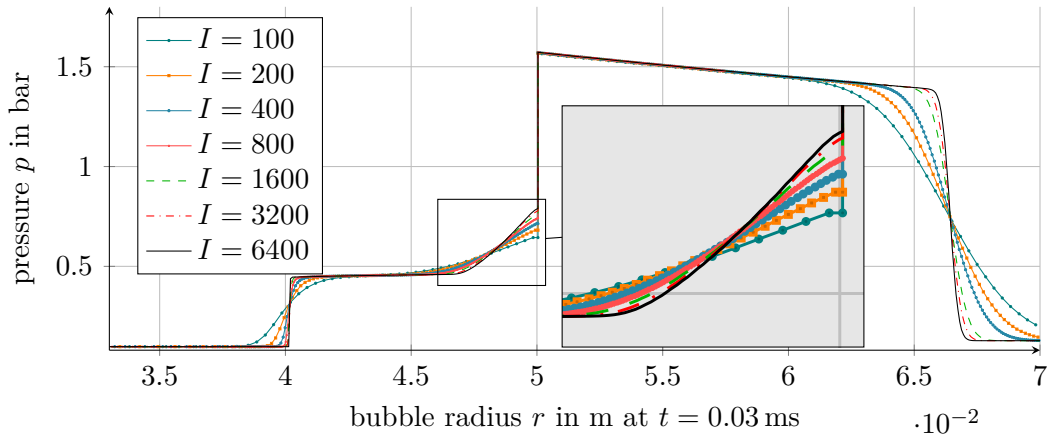
\begin{figure} \centering  
  \begin{tikzpicture}[new spy style]
    \begin{axis}[ x post scale = 1.8, y post scale=0.8,
                  xmin=0.033, xmax=0.07,
                  ymin=0.08, ymax=1.8,
                  xlabel={bubble radius $r$ in \unit{m} at $t=\unit[0.03]{ms}$}, ylabel={pressure $p$ in \unit{bar}},
                  legend pos=north west,
                  clip marker paths=true,
                  mark size = 0.4,
                  cycle list name=exotic,
                  legend cell align=left,
                  ]
    \addplot  table[x expr={\thisrowno{0}}, y expr={1e-5*\thisrowno{1}}, col sep=comma] {EOCdsolseq_radial_K3_dodecane.txt};
    \addlegendentry{$I=100$}                  
    \addplot  table[x expr={\thisrowno{2}}, y expr={1e-5*\thisrowno{3}}, col sep=comma] {EOCdsolseq_radial_K3_dodecane.txt};
    \addlegendentry{$I=200$}  
    \addplot  table[x expr={\thisrowno{4}}, y expr={1e-5*\thisrowno{5}}, col sep=comma] {EOCdsolseq_radial_K3_dodecane.txt};
    \addlegendentry{$I=400$}  
    \addplot  table[x expr={\thisrowno{6}}, y expr={1e-5*\thisrowno{7}}, col sep=comma] {EOCdsolseq_radial_K3_dodecane.txt};
    \addlegendentry{$I=800$}   
    \addplot [dashed, no marks, green] table[x expr={\thisrowno{8}}, y expr={1e-5*\thisrowno{9}}, col sep=comma] {EOCdsolseq_radial_K3_dodecane.txt};
    \addlegendentry{$I=1600$}  
    \addplot [dashdotted, no marks, red] table[x expr={\thisrowno{10}}, y expr={1e-5*\thisrowno{11}}, col sep=comma] {EOCdsolseq_radial_K3_dodecane.txt};
    \addlegendentry{$I=3200$}
    \addplot [color=black] table[x expr={\thisrowno{12}}, y expr={1e-5*\thisrowno{13}}, col sep=comma] {EOCdsolseq_radial_K3_dodecane.txt};
    \addlegendentry{$I=6400$}  
   
    \coordinate (spypoint)  at (axis cs:0.0482,0.62); 
    \coordinate (spyviewer) at (axis cs:0.057,0.7); 

    \end{axis}
    \spy [ width=4cm,height=3.2cm,magnification=2.8] on (spypoint)  in node at (spyviewer);  

  \end{tikzpicture}
 
  \caption{Radial symmetric two-dimensional solution. Pressure distribution of test case (C) and n-dodecane fluid. In color the numerical solution with $K_3$-Riemann solver. 
  The numerical solution for $I=6400$ cells is used as reference solution for the error computation in Table~\ref{tab:eoc:exact}.} \label{fig:testi}
\end{figure}

\subsection{Experimental order of convergence with approximate Riemann solvers}

\subsubsection{Application of the Liu Riemann solver} \label{sub:eoc:liu}

We verify the Riemann solver \cite{JAEROHZER2012} in the framework of the one-dimensional interface tracking scheme. 
The solver is implemented for the van der Waals pressure. 
We compare the numerical solution for kinetic relation $K_7$. 

Table~\ref{tab:eoc:liu} shows the error values and the experimental orders of convergence for increasing grid resolution and the test cases (E)--(G) in Table~\ref{tab:eoc:vdw:ic}.
The initial values of the cases (E) and (F) are in the stable phases. Here, the scheme converges with the expected order. However, for metastable initial values (case (G)) the algorithm converges to a different solution. The error values in that case remain almost constant for decreasing grid sizes.
The reason is the modification of the equation of state between the saturation states, see Subsection~\ref{sub:Liu}.

\begin{table}
{
\setlength{\tabcolsep}{1.5\tabcolsep}
\newcommand{\midrules}{\cmidrule(lr){2-6}\cmidrule(l){7-11}}
\begin{tabularx}{\textwidth}{*{11}l}
\toprule
	& $\tau_\L$	& $\tau_\R$	& $v_\L$	& $v_\R$ 	& $K$	& $\theta$	& $\gamma^0$ 	& $R_\tmin$	& $R_\tmax$ 	& d	\\ \midrules
 (E)	& $0.553$   	& $5.5$		& $1.0$		& $0.0$		& $K_7$	& $0.20$	& $0.5$		& $0$          	& $1$		& 1	\\
 (F)	& $0.500$	& $5.0$		& $0.0$		& $5.0$		& $K_7$	& $0.05$	& $0.5$		& $0$          	& $1$		& 1	\\
 (G)	& $0.557$	& $3.0$		& $0.0$		& $0.0$		& $K_7$	& $0.10$	& $0.5$		& $0$          	& $1$		& 1	\\ \midrules
 (H)	& $0.553$   	& $5.5$		& $1.0$		& $0.0$		& $K_3$	& $0.20$	& $0.5$		& $0$          	& $1$		& 1	\\
 (I)	& $0.530$	& $3.0$		& $0.1$		& $5.0$		& $K_3$	& $0.10$	& $0.5$		& $0$          	& $1$		& 1	\\
 (J)	& $0.557$	& $3.0$		& $0.0$		& $0.0$		& $K_3$	& $0.10$	& $0.5$		& $0$          	& $1$		& 1	\\ 
\bottomrule
\end{tabularx}
}
\caption{Series of initial conditions for a van der Waals fluid in one spatial dimension. The parameter for $K_3$ is $k^\ast=0.2$. }\label{tab:eoc:vdw:ic}
\end{table}

\begin{table}
\newcommand{\midrules}{\cmidrule(r){1-1}\cmidrule(lr){3-4}\cmidrule(lr){6-7}\cmidrule(l){9-10}}
\begin{tabularx}{\textwidth}{*{10}l}
 \toprule
&& \multicolumn{2}{l}{Test (E)}	&& \multicolumn{2}{l}{Test (F)} && \multicolumn{2}{l}{Test (G)} \\
  $I$ 	&& $e_I$\hspace{3em} & eoc && $e_I$\hspace{3em} & eoc && $e_I$\hspace{3em} & eoc\\ \midrules
500  && 3.8e-04 &      && 2.9e-04 &      && 1.0e-04 &      \\
1000 && 2.2e-04 & 0.81 && 1.9e-04 & 0.64 && 9.7e-05 & 0.10 \\
2000 && 1.2e-04 & 0.86 && 1.2e-04 & 0.68 && 9.2e-05 & 0.07 \\
4000 && 6.6e-05 & 0.88 && 7.2e-05 & 0.71 && 8.9e-05 & 0.05 \\
8000 && 3.6e-05 & 0.86 && 4.3e-05 & 0.74 && 8.7e-05 & 0.03 \\  
 \bottomrule
\end{tabularx}
\caption{Error analysis for the method with the Liu Riemann solver, see Subsection~\ref{sub:eoc:liu}. } \label{tab:eoc:liu}
\end{table}

\subsubsection{Application of the relaxation Riemann solver}\label{sub:eoc:relax}

The relaxation solver \cite{ROHZER2014} is implemented for van der Waals fluids and also for external thermodynamic libraries. 
We compare towards the exact $K_3$-Riemann solution.

\begin{example}[Error analysis for van der Waals equations of state]\label{exp:eoc:relax:vdw}
The bulk solver combined with the $K_3$-relaxation Riemann solver and is applied to the test cases (H)--(J) in Table~\ref{tab:eoc:vdw:ic}.
We could not observe decreasing error norms for the time step restriction with $\cfl=0.9$: 
the numerical solution in case (H) seemed to converge towards a different solution, initial conditions of case (I) led to negative values of specific volume and pressure. The numerical solution in case (J) was oscillatory.

Table~\ref{tab:eoc:relax} shows the result for $\cfl=0.1$. The relaxation solver needs apparently more iteration steps to converge. This was already reported in \cite{CHACOQENGROH2012}.
However, the convergence orders are low and decreasing. In particular for case (J) the numerical solution does not converge to the exact solution.

\end{example}

\begin{table}
\newcommand{\midrules}{\cmidrule(r){1-1}\cmidrule(lr){3-4}\cmidrule(lr){6-7}\cmidrule(lr){9-10}\cmidrule(l){12-13}}
\begin{tabularx}{\textwidth}{*{13}l}
 \toprule
&& \multicolumn{2}{l}{Test (H)}	&& \multicolumn{2}{l}{Test (I)} 	&& \multicolumn{2}{l}{Test (J)} &&\multicolumn{2}{l}{Test (B)}\\
  $I$ 	&& $e_I$\hspace{3em} & eoc && $e_I$\hspace{3em} & eoc && $e_I$\hspace{3em} & eoc && $e_I$\hspace{3em} & eoc 	\\ \midrules
500  && 6.3e-04 &      && 3.9e-04 &      && 8.3e-05 &        && 8.8e-04 &      \\
1000 && 4.2e-04 & 0.60 && 2.8e-04 & 0.49 && 7.4e-05 & 0.17   && 6.1e-04 & 0.53 \\
2000 && 2.9e-04 & 0.52 && 2.0e-04 & 0.47 && 6.6e-05 & 0.16   && 4.3e-04 & 0.50 \\
4000 && 2.2e-04 & 0.38 && 1.5e-04 & 0.45 && 6.1e-05 & 0.10   && 3.3e-04 & 0.40 \\  
8000 && 1.9e-04 & 0.23 && 1.1e-04 & 0.41 && 5.8e-05 & 0.07   && 2.8e-04 & 0.24 \\
 \bottomrule
\end{tabularx}
\caption{Error analysis for the method with the relaxation $K_3$-Riemann solver, see Subsection~\ref{sub:eoc:relax}. } \label{tab:eoc:relax}
\end{table}

\begin{example}[Error analysis for n-dodecane equations of state] \label{exp:eoc:relax:dodecane}
For the second example,  we use the test cases of Table~\ref{tab:eoc:dodecane:ic}. The fluid under consideration is n-dodecane. 
We tried several combinations of parameters and CFL numbers but only test case (B) led to a stable result.
Any proper choice of the parameters for the first few iterates, failed at a later time step. 
The problem are negative specific volume values or values in the spinodal phase.

The initial conditions of test case (B) are near the equilibrium solution, here elementary waves are almost negligible and the solution manly consists of a single traveling wave.
Note that this is a simple test case for the relaxation Riemann solver, since the solver was conceived in order to preserve isolated phase boundaries.

The error for test case (B) can be found in Table~\ref{tab:eoc:relax}.
Figure~\ref{fig:testf:relax} displays the solution on different grids and the exact $K_3$-Riemann solution. 
One clearly can see that the numerical solution converges, but to a different solution. 
\end{example}

The examples demonstrate, that the relaxation solver combined with the interface tracking scheme does not converge to the exact solution. We observe grid convergence towards some other solution. 
In previous contributions \cite{CHACOQENGROH2012,ROHZER2014} the relaxation solver was applied only to very specific examples, in particular much simpler equations of state and linear kinetic functions. More complex problems can now be solved with the exact $K_n$-Riemann solvers.

\begin{figure} \centering  
  \begin{tikzpicture}[new spy style]
    \begin{axis}[ x post scale = 1.8, y post scale=0.8,
                  xmin=0, xmax=1,
                  ymin=0.99, ymax=1.45,
                  xlabel={one-dimensional spatial variable $r$ in \unit{m} at $t=\unit[0.05]{s}$}, ylabel={pressure $p$ in \unit{bar}},
                  legend pos=south west,
                  clip marker paths=true,
                  mark size = 0.4,
                  cycle list name=exotic,
                  legend cell align=left,
                  ]
    \addplot  table[x expr={\thisrowno{0}}, y expr={1e-5*\thisrowno{1}}, col sep=comma] {EOCdsolseq_K3_p10000_Relax.txt};
    \addlegendentry{$I=50$}                  
    \addplot  table[x expr={\thisrowno{2}}, y expr={1e-5*\thisrowno{3}}, col sep=comma] {EOCdsolseq_K3_p10000_Relax.txt};
    \addlegendentry{$I=100$}  
    \addplot  table[x expr={\thisrowno{4}}, y expr={1e-5*\thisrowno{5}}, col sep=comma] {EOCdsolseq_K3_p10000_Relax.txt};
    \addlegendentry{$I=200$}  
    \addplot[dashed, no marks, green]  table[x expr={\thisrowno{6}}, y expr={1e-5*\thisrowno{7}}, col sep=comma] {EOCdsolseq_K3_p10000_Relax.txt};
    \addlegendentry{$I=500$}   
    \addplot[dashdotted, no marks, red]  table[x expr={\thisrowno{8}}, y expr={1e-5*\thisrowno{9}}, col sep=comma] {EOCdsolseq_K3_p10000_Relax.txt};
    \addlegendentry{$I=2000$}   
    \addplot[color=black]  table[x expr={\thisrowno{10}}, y expr={1e-5*\thisrowno{11}}, col sep=comma] {EOCdsolseq_K3_p10000_Relax.txt};
    \addlegendentry{exact}    
    \end{axis}
    
    \spy[ width=4cm,height=3.5cm,magnification=2.9] on (8.6,3.75) in node at (5.6,2);  

  \end{tikzpicture}    
  \caption{Pressure distribution for test case (B) with n-dodecane fluid. In color the numerical solution with relaxation $K_3$-Riemann solver and in black the (exact) $K_3$-Riemann solution.} \label{fig:testf:relax}
\end{figure}
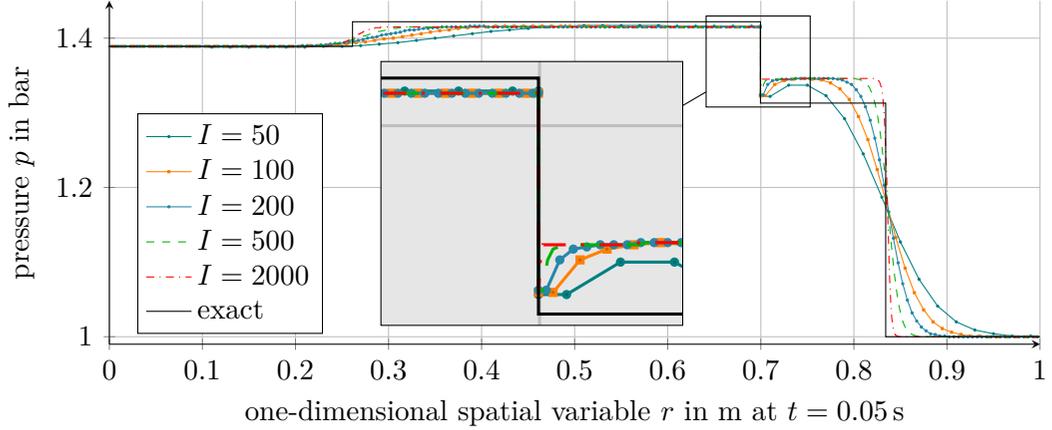


\subsection{Global entropy release and steady state solutions}\label{sub:globalentropy}
A transient solution should reach its steady state $\W(\vv x,t) \to \W^\infty(\vv x) \in (\admisr_\liq\cup\admisr_\vap)\times\setR$ for $t \to\infty$ and
at the same time $\Gamma(t)\to\Gamma^\infty\subset\setR^d$ and $\Omega_{\liq/\vap}(t)\to\Omega_{\liq/\vap}^\infty\subset\setR^d$.
Furthermore, the steady state should be the minimizer of the associated mathematical entropy.
For reflecting boundary conditions, the mathematical entropy at time $t$ is given by
\begin{align*}
\mathcal{E}(\rho(\cdot,t),\vv{m}(\cdot,t))= 
\intdomain{ {  \rho(\vv x,t)\,\psi\left( \frac{1}{\rho(\vv x,t)}\right) +   \frac{ \abs{\vv{m}(\vv x,t)}^2}{2\,\rho(\vv x,t)} }}
  + \surfcoeff \,\abs{\Gamma(t)}.
\end{align*}
Gurtin has demonstrated in \cite{GUR1985} that the minimum $\mathcal{E}^\infty := \min\{\mathcal{E}(\rho^\infty,\vv{m}^\infty) | \int_{\Omega} \rho^\infty \dd \vv x = \int_{\Omega} \rho_0 \dd \vv x  \}$ is determined by the global thermodynamic equilibrium. 
Moreover, the minimizer corresponds to a single spherical droplet or bubble, cf.\ \cite{GUR1985}. 
Thus, we expect that $\Gamma^\infty$ is a sphere with some radius $\gamma^\infty >0$ and
\begin{align*}
 \rho^\infty(\vv x) = \begin{cases}
                       1/\tau_\liq^\sat &\text{for } \vv x \in \Omega_\liq^\infty,\\
                       1/\tau_\vap^\sat &\text{for } \vv x \in \Omega_\vap^\infty,
                      \end{cases}
 &&                   
 \vv m^\infty(\vv x) = \vv 0.
\end{align*}
Note that saturation states $\tau_{\liq/\vap}^\sat=\tau_{\liq/\vap}^\sat(\surf^\infty)$ exist uniquely, since for spherical bubbles $\surf^\infty:=(d-1)\,\surfcoeff/\gamma^\infty$ is constant. The same holds for spherical droplets, with $\surf^\infty:=-(d-1)\,\surfcoeff/\gamma^\infty$.

We consider a van der Waals fluid with $\surfcoeff=0.01$ and radially symmetric solutions in $\Omega = \set{\vv{x}\in\setR^2 | 0.005<\abs{\vv{x}}<2}$.
The phase boundary is initially located at $\Gamma(0)=\setS$.
The saturation states of a droplet with radius $1$ are 
$\tau_\liq^\sat(0.01) \approx 0.55444$, $\tau_\vap^\sat(0.01) \approx 3.15$.
Initial condition
\begin{align*}
  \begin{pmatrix}
   \rho \\ \vv v
  \end{pmatrix} (\vv x, 0) &=
  \begin{cases}
    (1/\tau_\liq^\sat ,\hphantom{-}0.05)^\transp &\text{for } \abs{\vv x} \in [0.005,1],\\
    (1/\tau_\vap^\sat ,           -0.05)^\transp &\text{for } \abs{\vv x} \in (1,2]
  \end{cases} 
\end{align*}
and boundary condition $  \vv v \cdot \vv n = 0$ at $\partial \Omega$
are such that, right from the beginning, waves are emitted and reflected from the boundary. 
The initial condition satisfies $\rho(\vv x,0) = \rho^\infty(\vv x)$, such that potential energy and surface energy are initially at the global minimum, while the total kinetic energy is positive. As time passes, waves slop ahead and back within some density range around the saturation solution and with decreasing amplitudes.

We compare the exact and approximate Riemann solvers. We will find, that only the newly developed exact $K_n$-Riemann solvers lead to monotone energy decay. 

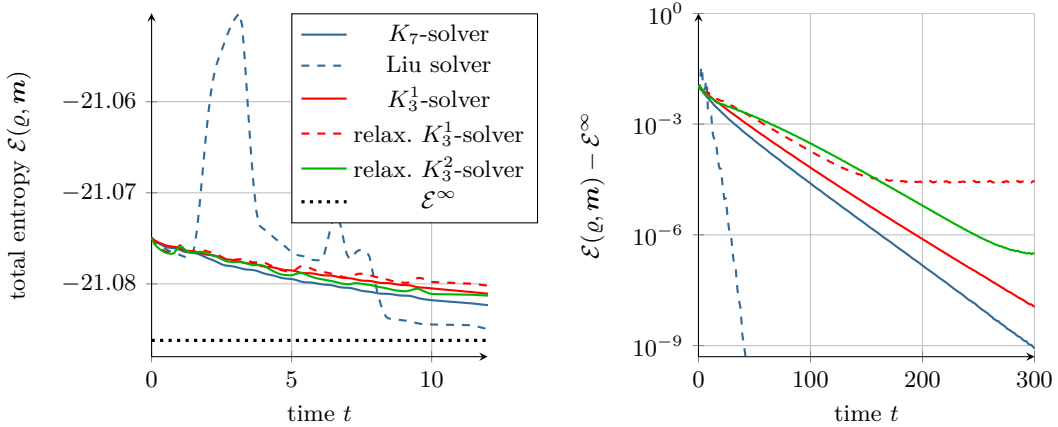
\begin{figure}[tb] \centering  
  \begin{tikzpicture}[new spy style]
    \begin{axis}[ x post scale = 0.7, y post scale=0.8,
                  xmin=0, xmax=12,
                  ymin=-21.088,
                  xlabel={time $t$}, ylabel={total entropy $\mathcal{E}(\rho,\vv{m})$},
                  legend style={at={(1.15,1)}},
                  ylabel style = {yshift=15pt},
                  ]
    \addplot [thick, solid, blue] table[x expr={\thisrowno{0}}, y expr={\thisrowno{1}}, col sep=comma] {TransientSolution.txt};
    \addlegendentry{$K_7$-solver}; 

    \addplot [thick, dashed, blue] table[x expr={\thisrowno{0}}, y expr={\thisrowno{2}}, col sep=comma] {TransientSolution.txt};
    \addlegendentry{Liu solver}; 
    
    \addplot [thick, solid, red] table[x expr={\thisrowno{0}}, y expr={\thisrowno{3}}, col sep=comma] {TransientSolution.txt};
    \addlegendentry{$K_3^1$-solver}; 

    \addplot [thick, dashed, red] table[x expr={\thisrowno{0}}, y expr={\thisrowno{4}}, col sep=comma] {TransientSolution.txt};
    \addlegendentry{relax.~$K_3^1$-solver}; 

    \addplot [thick, green] table[x expr={\thisrowno{0}}, y expr={\thisrowno{5}}, col sep=comma] {TransientSolution.txt};
    \addlegendentry{relax.~$K_3^2$-solver}; 
    
    \addplot [very thick, dotted, black] coordinates { (0, -21.086209) (12, -21.086209) };
    \addlegendentry{$\mathcal{E}^\infty$} 
    \end{axis}
  \end{tikzpicture} 
   \hspace{0.5\hfloatsep}
  \begin{tikzpicture}[new spy style]
    \begin{semilogyaxis}[log basis y=10,
                  x post scale = 0.7, y post scale=0.8,
                  ymin=5e-10, ymax=1,
                  xlabel={time $t$}, ylabel={$\mathcal{E}(\rho,\vv{m})-\mathcal{E}^\infty$},
                  legend pos=north east,
                  ylabel style = {yshift=5pt},
                  ]
    \addplot [thick, solid, blue] table[x expr={\thisrowno{0}}, y expr={\thisrowno{6}}, col sep=comma] {TransientSolution.txt};

    \addplot [thick, dashed, blue] table[x expr={\thisrowno{0}}, y expr={\thisrowno{7}}, col sep=comma] {TransientSolution.txt};
    
    \addplot [thick, solid, red] table[x expr={\thisrowno{0}}, y expr={\thisrowno{8}}, col sep=comma] {TransientSolution.txt};

    \addplot [thick, dashed, red] table[x expr={\thisrowno{0}}, y expr={\thisrowno{9}}, col sep=comma] {TransientSolution.txt};

    \addplot [thick, green] table[x expr={\thisrowno{0}}, y expr={\thisrowno{10}}, col sep=comma] {TransientSolution.txt};
    \end{semilogyaxis}
  \end{tikzpicture}     
  \caption{Evolution of the total mathematical entropy in time. Both figures correspond to the same legend.} \label{fig:transient}
\end{figure}

\begin{example}[Entropy release applying the Liu Riemann solver]\label{exp:globalentropy:Liu}
The numerical results in Figure~\ref{fig:transient} are performed for the bulk solver on a grid with $I=100$ cells, combined with the Riemann solvers. 
  Figure~\ref{fig:transient} shows the evolution of the total entropy $t\to\mathcal{E}(\rho,\vv{m})$ (left) and the shifted total entropy $t\mapsto \mathcal{E}(\rho,\vv{m})-\mathcal{E}^\infty$ (right) in order to use a logarithmic scale.
  The steady state solution is given by above saturation states. We find $\mathcal{E}^\infty \approx -21.08621$, where the contribution of the surface energy is $\surfcoeff\,\abs{\Gamma^\infty}=0.02\,\pi$.

  Observe that the Liu solver leads to an increase in the total entropy at the beginning of the simulation time. For $t>8$, the entropy decays very fast compared to the result obtained with the $K_7$-Riemann solver. This strange behavior is due to the fact that the Liu solver applies a different pressure function as the bulk solver.
  Note that the initial states were chosen, such that the bulk solution varies around the saturation states. Thus, initial states for the Riemann solvers are very often in the metastable phases, where the pressure functions actually are different.
\end{example}

\begin{example}[Entropy release applying the relaxation Riemann solver]\label{exp:globalentropy:relax}

  For the relaxation solver with $K_3^1$ and $k^\ast=0.2$, one observes in Figure~\ref{fig:transient} that the method converge to the stationary solution up to a difference of $10^{-5}$. For $t>150$, the numerical solution behaves unstable and $\mathcal{E}$ remains on a constant level.
  The entropy decay is not completely monotone, furthermore a CFL number of $0.01$ was necessary. For $\cfl=0.5$ and $\cfl=0.1$, the final difference to the stationary solution was around $10^{-2}$. 
  Decreasing the $\cfl$ number once more (not shown in the figure) or using a higher dissipation rate, i.e.\ $K_3^2$ with $k^\ast=2$, pushes the final difference below $10^{-6}$.

\end{example}

\begin{example}[Entropy release applying the exact Riemann solver]\label{exp:globalentropy:exact}

  The numerical results for the interface tracking scheme combined with the exact Riemann solvers are convincing.
  Figure~\ref{fig:transient} shows strictly monotone decreasing values of total mathematical entropy towards the expected limit $\mathcal{E}^\infty$.
  Although surface tension is entirely handled on the Riemann solver level, the method is capable to predict the global contribution of the surface energy.
  The decay rate for $K_7$ is higher than for $K_3^1$. Note that kinetic relation $K_7$ dissipates more entropy than $K_3^1$ with $k^\ast=0.2$.
  This indicates that increasing the interfacial entropy dissipation has a damping effect. 

\end{example}

\subsection{Condensation of bubbles}\label{sub:condensbubble}
We consider spherical bubbles in the domain $\Omega = \{\vv{x}\in\setR^2 | \unit[0.5]{mm}<\abs{\vv{x}}<\unit[20]{mm}\}$ with initial and boundary conditions such that the bubbles vanish.
More precisely, we compare the evolution of the phase boundary until it approaches the inner boundary. The test is performed for equations of state of the fluids n-dodecane at \unit[230]{\degree C}, butane at \unit[20]{\degree C}, acetone at \unit[20]{\degree C}, water at \unit[80]{\degree C} and different kinetic relations. The fluid n-dodecane was already used in former test cases, the other fluids are just randomly selected. Note that Algorithm~\ref{alg:kinrelsolver} does not rely on a specific equation of state and enables to compare diverse fluids and kinetic relations.

The setting is as follows. We compute the saturation pressure $p^\sat$ (with $\surf=0$) for each fluid and apply initial density values such that the vapor pressure is $0.4\, p^\sat$ and the liquid pressure is $4\,p^\sat$. The initial fluid velocity is zero and the bubble radius is $\gamma^0=\unit[10]{mm}$.
Waves at the inner boundary are reflected. At the outer boundary, we apply a Dirichlet condition for the density to keep the pressure constant. 
The fixed pressure at the outer boundary guarantees that the bubble vanishes.

We use the interface tracking scheme with the exact two-phase Riemann solver of Algorithm~\ref{alg:kinrelsolver} for $I=100$ cells and $\cfl=0.9$. 
The evolution of the bubble radii, see Figure~\ref{fig:evap:bubble}, depends on the selected fluid and the kinetic relation. We do not want to classify that correlation. But, as expected, all bubbles vanish for the selected boundary condition. For higher entropy dissipation (kinetic relation $K_7$) the vapor liquefies faster. The difference is low for butane and n-dodecane but still visible. Once more, we see that increasing the interfacial entropy dissipation has a damping effect. 

The radius is not always monotone decreasing, see the example of acetone with $K_1$. At $t=\unit[1.2]{ms}$ the radius is increasing. This is an effect of the bulk dynamics, but we were wondering if it is influenced by curvature effects or the volume change towards the center. The same setting with $\surfcoeff=0$ (circles in Figure~\ref{fig:evap:bubble}) shows that surface tension is too low to affect the evolution. The behavior in the one-dimensional setting (denoted by triangles) is different. The radius decreases monotone but slower.

Let us remark, that nucleation of bubbles is not taken into account. However, we observe waves of high amplitudes and negative pressure values in the liquid shortly after the bubbles collapsed. Negative pressure values may indicate the nucleation of a new vapor phase. 
The effect of surface tension was not visible in the examples, since the curvature is too low. The simulation of smaller bubbles require a different bulk solver. The time step in this experiment was between $\unit[10^{-10}]{s}$ and $\unit[10^{-9}]{s}$, independent of the fluid. However, the simulation of the water test cases took much longer, the evaluation of the associated equations of state is apparently more expensive.

\begin{figure}[tb] \centering  
  \begin{tikzpicture}[new spy style]
    \begin{axis}[ x post scale = 1.8, y post scale=0.8,
                  xmin=0, xmax=2.5,
                  xlabel={time $t$ in \unit{ms}}, ylabel={radius $\gamma$ in \unit{mm}},
                  legend pos=south east,
                  legend cell align=left,
                  ]
    \addplot [thick, red] table[x expr={1e3*\thisrowno{10}}, y expr={1e3*\thisrowno{11}}, col sep=comma] {CondensBubble.txt};
    \addlegendentry{n-dodecane, $K_1$}
    \addplot [thick, dashed, red] table[x expr={1e3*\thisrowno{8}}, y expr={1e3*\thisrowno{9}}, col sep=comma] {CondensBubble.txt};
    \addlegendentry{n-dodecane, $K_7$} 
                  
    \addplot [thick, blue] table[x expr={1e3*\thisrowno{2}}, y expr={1e3*\thisrowno{3}}, col sep=comma] {CondensBubble.txt};
    \addlegendentry{water, $K_1$}
    \addplot [thick, dashed, blue] table[x expr={1e3*\thisrowno{0}}, y expr={1e3*\thisrowno{1}}, col sep=comma] {CondensBubble.txt};
    \addlegendentry{water, $K_7$}                  

    \addplot [thick, green] table[x expr={1e3*\thisrowno{6}}, y expr={1e3*\thisrowno{7}}, col sep=comma] {CondensBubble.txt};
    \addlegendentry{butane, $K_1$}                     
    \addplot [thick, dashed, green] table[x expr={1e3*\thisrowno{4}}, y expr={1e3*\thisrowno{5}}, col sep=comma] {CondensBubble.txt};
    \addlegendentry{butane, $K_7$}    
    
    \addplot [thick] table[x expr={1e3*\thisrowno{14}}, y expr={1e3*\thisrowno{15}}, col sep=comma] {CondensBubble.txt};
    \addlegendentry{acetone, $K_1$}
    \addplot [thick, dashed] table[x expr={1e3*\thisrowno{12}}, y expr={1e3*\thisrowno{13}}, col sep=comma] {CondensBubble.txt};
    \addlegendentry{acetone, $K_7$}     
    \addplot [mark=o, only marks, each nth point=7] table[x expr={1e3*\thisrowno{16}}, y expr={1e3*\thisrowno{17}}, col sep=comma] {CondensBubble.txt};
    \addlegendentry{acetone, $K_1$, $\surfcoeff\!=\!0$}     
    \addplot [mark=triangle, only marks, each nth point=7] table[x expr={1e3*\thisrowno{18}}, y expr={1e3*\thisrowno{19}}, col sep=comma] {CondensBubble.txt};
    \addlegendentry{acetone, $K_1$, $d=1$}       
    \end{axis}

  \end{tikzpicture}    
  \caption{Time evolution of the radii $\gamma(t)$ of vapor bubbles in different fluids and for different kinetic relations.} \label{fig:evap:bubble}
\end{figure}
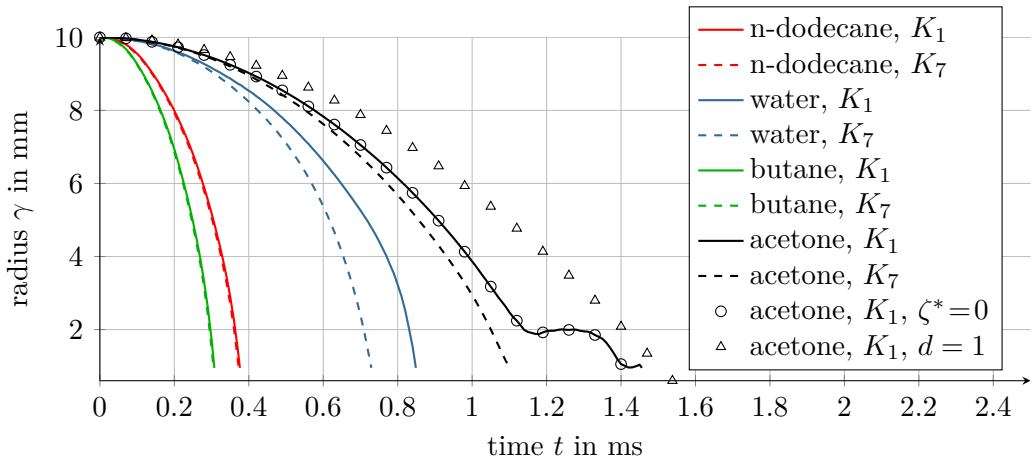

\end{document}